\documentclass[a4paper,12pt]{article}

\usepackage[english]{babel}
\usepackage[utf8]{inputenc}
\usepackage[T1]{fontenc}
\usepackage{etex}
\usepackage{pdfpages}
\usepackage{amsfonts}
\usepackage{amsmath}
\usepackage{authblk}
\usepackage{amssymb}
\usepackage{amsthm}
\usepackage{appendix}
\usepackage[pagebackref,colorlinks=true]{hyperref}
\usepackage{cases}
\usepackage{enumerate}
\usepackage{caption}
\usepackage[labelformat = simple]{subcaption}

\usepackage{setspace}
\usepackage{ulem}
\usepackage{soul}
\usepackage{enumitem}
\usepackage{scalerel}

\usepackage{cancel}
\usepackage{upgreek} 
\usepackage{fancyhdr}
\usepackage{float}
\usepackage{manfnt} 
\usepackage{mathrsfs}
\usepackage{minitoc}
\usepackage{accents}

\usepackage{rotating}

\usepackage{booktabs}
\usepackage{graphics}
\usepackage{graphicx}
\usepackage{hyperref}

\usepackage{lastpage}
\usepackage{latexsym}

\usepackage{color}
\usepackage{comment}

\usepackage{enumerate}

\usepackage{fancyhdr}
\usepackage[top=3cm, bottom=3cm, left=1.5cm , right=2cm]{geometry}
\newcommand{\smallerthansmall}{\fontsize{10.75}{11}\selectfont}

\newcommand{\setfolder}[1]{\def\currentfolder{#1/}}

\newtheorem{Theorem}{Theorem}[section]
\newtheorem{Corollary}[Theorem]{Corollary}

\newtheorem{Definition}[Theorem]{Definition}
\newtheorem{Proposition}[Theorem]{Proposition}

\numberwithin{equation}{section}

\def\A{{\mathcal A}}

\def\E{{\mathbb E}}

\def\P{{\mathbb P}}

\def\wL^*{{\widehat L^{(\mu^*, \sigma^*)}}}
\def\wL{{\widehat L}}

\newcommand{\rmi}{{\rm (i) $\>\>$}}

\newcommand{\rmii}{{\rm (ii) $\hspace{1.5mm}$}}

\def\bit{\begin{itemize}}
\def\eit{\end{itemize}}

\def\beqn{\begin{eqnarray}}
\def\eeqn{\end{eqnarray}}

\def\beq*{\begin{eqnarray*}}
\def\eeq*{\end{eqnarray*}}

\def\E{{\mathbb E}}

\def\P{{\mathbb{P}}}

\def\exp{{\text{exp}}}
\def\bit{\begin{itemize}}
\def\eit{\end{itemize}}

\def\bc{\begin{center}}
\def\ec{\end{center}}

\def\super { \end{document}}
\def\bcom{}
\def\edoc{\end{document}}

\usepackage[framemethod=tikz]{mdframed}
\usetikzlibrary{decorations.pathmorphing,calc}
\newcommand\wavydecor{%
    \draw[decoration={coil,aspect=0.1,segment length=5pt,amplitude=1.0pt},decorate,line width=1.5pt,black]
      (O|-P) -- (O);
}
\newmdenv[
hidealllines=true,
innerleftmargin=10pt,
innerrightmargin=0pt,
innertopmargin=0pt,
innerbottommargin=0pt,
leftmargin=-10pt,
skipabove=.5\baselineskip,
skipbelow=.5\baselineskip,
singleextra={\wavydecor},
firstextra={\wavydecor},
secondextra={\wavydecor},
middleextra={\wavydecor}
]{done}

\usepackage[linecolor=black,textwidth=25mm,textsize= footnotesize]{todonotes} 

{ \begin{list}%
	{$\bullet$}%
	{\setlength{\labelwidth}{10pt}%
	 \setlength{\leftmargin}{35pt}%
	 \setlength{\itemsep}{\parsep}{1pt}}}%
{ \end{list} }

\newcommand\blfootnote[1]{%
  \begingroup
  \renewcommand\thefootnote{}\footnote{#1}%
  \addtocounter{footnote}{-1}%
  \endgroup
}

\title{Optimal investment and Pension policy in Pay-As-You-Go systems under forward utility and ageing population.\blfootnote{This work was supported by the Europlace Institute of Finance and the  ANR project DREAMeS (ANR-21-CE46-0002). C. Hillairet, S. Kaakai and M. Mrad  acknowledge the support received
from the Research Chair ACTIONS under the aegis of the Risk Foundation, an
initiative by BNP Paribas Cardif and the Institute of Actuaries of France.}
}

\author[1]{Jennifer Alonso-Garcia}
\author[2]{Caroline Hillairet}
\author[3]{Sarah Kaakai}
\author[3]{Mohamed Mrad}
\affil[1]{\footnotesize{Université Libre de Bruxelles, CEPAR, Netspar}}
\affil[2]{\footnotesize{CREST, UMR CNRS 9194, ENSAE, Institut Polytechnique de Paris}}
\affil[3]{\footnotesize{LAGA, UMR CNRS 7539,  Université Sorbonne Paris Nord} }

\date{}

\begin{document}
 \maketitle

\abstract{  
This paper investigates optimal investment and pension policies in a Pay-As-You-Go (PAYG) system supplemented by a buffer fund used as an intergenerational risk‑sharing mechanism. The social planner’s preference criterion is represented by non-zero volatility forward Constant Relative Risk Aversion (CRRA) utilities, and explicitly accounts for both sustainability and adequacy constraints.   The optimal policies are characterized in closed form, and an in-depth analysis  of the impact of preference sensitivities on the pension scheme is conducted. A detailed numerical analysis is performed  to evaluate the sustainability and benefit adequacy of this hybrid PAYG–buffer‑fund arrangement under a range of demographic, financial, and macroeconomic scenarios.  
}

\bigskip

\noindent  {\bf Keywords}:  Mixed PAYG pension with buffer fund schemes, optimal investment and pension policies, Sustainability and
adequacy constraints, Demographic and financial risk sharing,  Forward  utility preferences. 

\date{}

\section{Introduction}
Many countries are facing declining birth rates combined with increasing life expectancy. According to the 2025 edition of Pensions at a Glance \cite{OECD25}, the old-age dependency ratio across the OECD is projected to rise from 33 people aged 65+ per 100 people aged 20-64 in 2025 to 52 in 2050, up from just 22 in 2000. This steady increase in the old-age dependency ratio (DR) challenges the fiscal sustainability of pension systems and urges ageing countries to undertake reforms aimed at reducing pension-related expenditures. In pure Pay-As-You-Go (PAYG) pension schemes, in which current pensions are financed by redistributing contributions paid by working participants, the obvious remedies: increasing contribution rates, reducing pension levels, or a combination of both, raise adequacy and acceptability concerns, as discussed in \cite{morsomme2025intergenerational} and \cite{alonso2018adequacy}.

One widely discussed reform  is the transition from Defined Benefit to Defined Contribution schemes, in which the pension indexation rate is linked to demographic and/or economic indicators. This  approach is already implemented in Sweden and Italy, among others (see \cite{godinez2016optimal}, \cite{alonso2019continuous}). Another widespread solution is to establish a Public Pension Reserve Fund (PPRF). More than two-thirds of OECD countries hold ring-fenced reserves in such funds, which collectively amounted to nearly USD~7 trillion across the OECD area at end-2024 \cite{OECD25}. The largest reserves are held by the United States and Japan, together accounting for over 60\% of the OECD total, while Korea, Canada, France, and Sweden also maintain substantial positions.\footnote{In relative terms, PPRF assets represent over 11\% of the combined GDP of OECD countries with reserves, with the highest ratios observed in Korea, Japan, Finland, Luxembourg, and Sweden, where reserves exceed 30\% of GDP in each case.} Naturally, alongside sustainability, pension systems should also provide adequate benefits for retirees as well as an acceptable level of fairness between generations, as underlined in \cite{alonso2018adequacy}.\footnote{We refer to \cite{OECD25} for an overview of key features of pension provision and detailed reforms undertaken by OECD countries.}\\

This paper addresses the sustainability, adequacy and fairness objectives for Defined Contribution public Pay-As-You-Go (PAYG) pension schemes. The social planner has access to a buffer fund, which allows economic, demographic, and financial risk sharing across generations \cite{alonso2025assessing}. Surplus funds generated during periods of prosperity are invested by the social planner in this buffer fund, which can then be drawn upon to cover future cash shortfalls. The paper determines and analyzes the optimal policy for such a mixed pension scheme, in which, the social planner additionally guarantees a minimum pension level corresponding to the ``pure PAYG'' pension benefits, thereby satisfying an adequacy constraint, while a sustainability constraint is imposed on the fund (for example, a no-borrowing constraint). The optimal policy aims to achieve the best balance between pensions paid to current retirees and the buffer fund, which represents reserves for future generations.

The determination of optimal policy is related to the extensive  literature on consumption/portfolio optimization with random endowment, such as \cite{he1991consumption}, \cite{duffie1993optimal}, \cite{el1998optimization}, \cite{karatzas2003optimal}, \cite{mostovyi2017optimal}, \cite{mostovyi2020optimal}, and the references therein, among many others.
This literature usually states the optimization problem in a
backward formulation, which has several drawbacks when considering
the framework of pensions schemes  in a stochastic environment, as outlined in \cite{ng2024optimal} and \cite{hillairet2024time}. Indeed the backward approach lacks  flexibility  to  incorporate  changes in the agents’ preferences, which may occur due to
the uncertain evolution of the environment variables. Moreover, in the pension context, defining a finite time horizon is particularly challenging, as optimal decisions tend to be highly sensitive to this choice,  a dependence that may prove problematic when the social planner seeks to ensure consistency both over time and across generations.

We therefore adopt a forward utility framework, which was  pioneered by Musiela and Zariphopoulou \cite{zarD,zar2011}. 
The framework of forward utilities leads to   adaptive and coherent optimal policies in the long run, as the preference criteria is  dynamically adjusted to the information flow. For a literature  review on forward utilities, we refer to the survey in \cite{ng2024optimal}. Forward utilities are used in many  application contexts, such as   in finance for the valuation of american options in  \cite{leung2012forward} or long term yield curve modeling \cite{el2022ramsey}, for  risk measures in \cite{zariphopoulou2010maturity} and \cite{chong2019ergodic},
 for equity-linked life insurance contracts in \cite{chong2019pricing}, and also
for robo-advising in \cite{liang2023predictable},  \cite{capponi2022personalized} and \cite{mradlearning}.
Forward utilities are all the more appropriate for pension fund management, 
due  to the long-term characteristics of pension schemes, 
together with a complex and uncertain environment, see 
\cite{bernard2016dynamic}, \cite{ng2024optimal}, and \cite{anthropelos2025time}. 
 In practice, the sensitivity of agents to changes in the environment is modeled through forward utilities with non-zero volatility. One of the key challenges  for practical applications is to understand how this sensitivity affects the optimal policies. In this paper, we provide an in-depth study of this issue which, to the best of our knowledge, has not been addressed in the existing  literature.

This paper is based on a stochastic partial differential equation (SPDE) representation of forward utilities, developed  in \cite{elkarouimrad2013} and further in \cite{el2018consistent} in an investment/consumption context.  We introduce a PAYG pension framework with buffer fund in an incomplete financial market with  correlated economic and financial risks. The  investment and pension policy is optimal with respect to the central planner’s forward criterion that incorporates a pure PAYG minimum guarantee and sustainability constraints. Building on the theoretical framework and results of Hillairet, Kaakai and Mrad (2024) \cite{hillairet2024time}, we obtain closed-form expressions for the  optimal policy in the case of Constant Relative Risk Aversion (CRRA) forward utilities criterion and a characterization of the viability of this pension system. In particular, we prove that the buffer depletion time does not depend on the initial value of the buffer, which implies that the buffer fund intervention remains feasible regardless of the initial level of pre-existing Public Pension Reserve Funds or, for new funds, public debt the government is willing to incur. Moreover, we show that when the social planner aggregates the individual preferences of retirees by assigning a constant weight  to each of them, the pension surplus does not depend on the dependency ratio. 
We provide a comprehensive numerical analysis of the pension scheme, calibrated to financial and economic data, and under different demographic scenarios (steady state, baby boom). In particular, we quantify the risk–benefit sharing mechanism across generations, depending on different preferences sensitivities (volatility of the utility) to the hedgeable and non-hedgeable risk. We also investigate the impact of the initial social planner preferences. \\

Following this introduction, Section \ref{sec1} describes the model of the  mixed PAYG pension scheme with a buffer fund, and the dynamics of the financial and economic processes. Section \ref{sec3} introduces the forward utilities of the buffer fund and the retirees. We  derive the optimal  policy and buffer depletion time under adequacy and sustainability constraints, and study the impact of preferences. Section \ref{sec:numerics} provides an extensive numerical and sensitivity analysis  based on realistic financial, economic dynamics and long-term
European ageing trends. Additional numerical results  are postponed in the Appendix.

\section{Pension Scheme and Financial framework}\label{sec1}
All stochastic processes are defined on a standard filtered probability space $(\Omega, \mathcal{F}, \mathbb{F}, \mathbb{P})$, where the filtration $\mathbb{F} = (\mathcal{F}_t)_{t \geq 0}$ is the natural filtration generated by a $4$-dimensional correlated Brownian motion $B$, assumed to be right-continuous and complete. 
In the following, the symbol $\cdot$ denotes the scalar product and $^{t}\!(.)$ denotes the transpose operator. \\
 This section introduces the model for the pension system characterized by the contributions of the workers and the pensions paid to retirees, and the buffer fund managed by the social planner to absorb demographic and economic shocks. We first 
describe the pension system, followed by the financial and wages dynamics and then the buffer fund dynamics.

\subsection{Pension system}

The social planner aims to determine the pension amount paid to retirees. 
 For tractability, we consider that earnings and pensions do no depend on the age of individuals (see \cite{hillairet2024time} for a more general setting). Population dynamics and average wages are treated as exogenous processes, which allows the model to accommodate a broad range of demographic scenarios including changes in fertility, longevity, and migration patterns. 

\paragraph{}We denote by $p_t$ the average individual pension amount and $N_t^r$ the number of retirees at a given time $t$. The total pension expenditure is therefore given by the process $(P_t)_{t\geq 0}$:
\begin{equation}\label{eq:Pension}
P_t = p_t \; N^r_t.
\end{equation}
These pensions are financed through contributions from workers, who contribute a fraction $\alpha$ (the contribution rate) of their wages to the system, yielding the contribution process $(C_t)_{t\geq 0}$:
\begin{equation}\label{eq:Contribution}
C_t = \alpha \; \mathfrak{e}_t \; N_t^w,
\end{equation}
where $\mathfrak{e}_t$ is the average wage and $N^w_t$ denotes the number of workers   at time $t$.\\
Adequacy is modeled  by imposing a lower bound $p^{\min}_t$ on the individual pension $p_t$:
\begin{equation}\label{eq:Pensionmin}
\forall t \geq 0, \quad p_t \geq p^{\min}_t \quad \mathbb{P}\text{-a.s.}
\end{equation}
A natural  choice is to set the minimum pension equal to the pure pay-as-you-go (PAYG) amount,
\begin{equation}\label{eq:sustainablepmin}
p^{\min}_t = \frac{C_t}{N_t^r},
\end{equation}
which ensures that a minimum pension level can always be financed through worker contributions alone, without incurring additional debt. This specification will serve as the benchmark case throughout the paper.

\subsection{Wages and financial market dynamics}\label{eq:wages_fin_market}
We consider  a financial market in which the social planner can invest. An important point is that the market is incomplete, as demographic and income risk in particular cannot be fully hedged through financial markets.  
\paragraph{Brownian motion} The dynamics of the financial and wages processes are driven by the  
 4-dimensional correlated Brownian motion 
 $B: ={}^{t} \!(B^S, B^\nu, B^r, B^{\mathfrak{e}})$.  
The correlation matrix of $B$ is denoted $\Gamma$:
\begin{align}\label{eq:Gamma_matrix}
	\Gamma=\begin{pmatrix}
		1 &\rho_{ S,\nu}&\rho_{ S,r} & \rho_{ S,\mathfrak{e}} \\
		\rho_{S, \mathfrak{\nu}} & 1 & \rho_{\nu,r} & \rho_{\nu,\mathfrak{e}}\\
		\rho_{S,r} & \rho_{\nu,r} & 1 & \rho_{r,\mathfrak{e}}\\
		\rho_{S,\mathfrak{e}}& \rho_{\nu,\mathfrak{e}} & \rho_{r,\mathfrak{e}} & 1
	\end{pmatrix} =  L  \;  ^{t}\!L, 
\end{align} 
where $L$ is  the lower triangular matrix given by Cholesky decomposition of $\Gamma$.

\paragraph{Financial market} The social planner can invest in a risk-free bank account and a  risky asset $S$ representing a market index. \\
The risky asset is assumed to follow a Heston Model: 
 \begin{equation}\label{eq:asset}
		d S_t=  S_t(\mu_t dt + \sqrt{\nu_t}  dB^{S}_t ),
\end{equation}
with  drift $(\mu_t)_{t\geq 0}$ being an $\mathbb F$-adapted process,  and with stochastic volatility dynamics
\begin{equation}\label{eq:heston}
d \nu_t =  \kappa (\bar \nu -\nu_t) dt + \sigma^{\nu} \sqrt{\nu_t}    dB^{\nu}_t, 
\end{equation}
where  $\kappa$, $\bar\nu$,  $\sigma^{\nu}$  are  non-negative constants satisfying  Feller's condition  $\sigma^{\nu} < \sqrt{2 \kappa \bar\nu}$. \\
The instantaneous risk-free interest rate $r$ is given by the Vasicek model
\begin{equation}\label{eq:Vasicek} 
d r_t =  a (b-r_t) dt +   \sigma^{r}_t dB^{r}_t, \quad \text{with } (a,b) \in \mathbb R^+_* \times \mathbb R
 \end{equation}   
 and $(\sigma^{r}_t)_{t\geq 0}$ an $\mathbb F$-adapted process.

\paragraph{Wages} Finally, the wages process $\mathfrak{e}$ is assumed to satisfy the following dynamics
 \begin{equation}\label{eq:salary} 
  d \mathfrak{e}_t=  \mathfrak{e}_t(\lambda_t dt  +  \sigma^{\mathfrak{e}}_t dB^{\mathfrak{e}}_t), 
 \end{equation}
with $(\lambda_t)_{t\geq 0}$,  $(\sigma^{\mathfrak{e}}_t)_{t\geq 0}$ $\mathbb F$-adapted processes.

\subsection{Buffer fund dynamics}
The social planner manages a buffer fund designed to absorb demographic and economic shocks. The fund receives contributions from workers, $C_t = \alpha \mathfrak{e}_t N_t^w$, and finances the pensions paid to retirees, $P_t = p_t N^r_t$. In addition, the social planner invests the fund's wealth in the financial market, allocating an amount $\phi_t$ to the risky asset at time $t$, with the remainder invested in the risk-free asset. The buffer fund $(F_t)_{t \geq 0}$ therefore follows self-financing dynamics with endowment  $C$ and consumption $P$:
\begin{align*}
d F_t  = (C_t-P_t) dt  +  (F_t-\phi_t) r_t dt +   \frac{\phi_t}{S_t} dS_t.\end{align*}
Let $\eta$ the risk premium process
\begin{equation}
\label{eq:Riskpremium}
\eta_t := \frac{\mu_t - r_t}{\sqrt{\nu_t}}, 
\end{equation}
and $\pi$ the rescaled strategy  defined by 
\begin{equation}
\label{eq:Pi}
\pi_t:= \phi_t \sqrt{\nu_t}. 
\end{equation}
The value $F^{\pi,  p}$ of the buffer fund induced by the investment-pension policy  $(\pi,p)$ satisfies
\begin{equation}\label{eq:SFfundbis}
d F^{\pi,  p}_t=  F^{\pi, p}_t  r_t dt + (C_t- p_t N_t^r ) dt +  {\pi_t}   ( dB_t^S + \eta_t dt).
\end{equation}
The social planner  chooses his investment strategy $\pi$ as well as the pension $p$ paid to each retiree under constraints. 
First, the adequacy constraint \eqref{eq:Pensionmin}  guarantees a minimum pension. Second, the sustainability of the pension scheme is ensured by imposing the buffer fund to not exceed a  maximum amount of debt $( - \mathfrak{K}_t)$, that is:  
\begin{equation}\label{eq:sustain} 
\forall t \geq 0, \quad  F_t^{\pi,  p} \geq \mathfrak{K}_t  \quad \mathbb \P\text{-a.s.}, 
\end{equation}
where $ (\mathfrak{K}_t)_{t \geq 0}$   is an $\mathbb{F}$-adapted process.

\begin{Definition}[Admissible policy]\mbox{}\\
Let $F_0> \mathfrak{K}_0$ be the initial value of the buffer fund.  An $\mathbb{F}$-adapted policy $(\pi,p)$ is said to be admissible if and only if
 \begin{itemize}
\item  $\int_0^t \left( | C_s - N_s^r p_s | + \| \pi_s\|^2 \right)ds <\infty,   \; \; \forall \; t\geq 0 \;  \; \P\text{-a.s.} $
\item $p_t \geq p^{\min}_t$ 
(adequacy) and   $ F_t^{\pi,p} \geq  \mathfrak{K}_t$ (sustainability) $\; \; \forall \; t\geq 0 \;  \; \P\text{-a.s.} $
 \end{itemize}
 The set of all  admissible policies   $(\pi,p)$ is  denoted $\A(F_0)$.
\end{Definition}
This investment-pension policy   $(\pi,p)$ optimizes a forward utility criterion that combines the buffer utility (that is the utility of future generations) and the aggregate retirees' utility.

\section{Optimal investment and pension policy}\label{sec3}

This section introduces the buffer fund and retirees' utility, as well as the consistency criteria required for the forward utility framework to be well defined. 

\subsection{Buffer fund and retirees' utility}

The social planner manages a trade-off between the pension paid to current retirees and the fund held in reserve for future generations. Accordingly, we introduce a preference process of the form
$$U(t,F^{\pi, p}_t)  + \int_0^t  V(s, p_s)  ds,$$
where $U$ and $V$ are \textit{random} utility functions $U(t, x)$  (strictly concave increasing  functions) whose   temporal evolution is  ``updated''  in accordance with the  information flow $(\mathcal F_t)_{t\geq 0}$, with initial deterministic utility function $u(x)=U(0,x)$.\\  
In the standard setting of expected utility optimization, the problem is tackled in a backward framework, that consists in an optimization problem specified  for  a fixed time horizon $T$
\vspace{-2mm}
\begin{equation*}
\sup_{(\pi, p) \in \mathcal A(F_0) } \mathbb E [ {U}(T, F^{\pi, p}_{T}) + \int_0^ T   {V}(s,  p_s) ds ]
\end{equation*}
with preferences at terminal time $T$ that are pre-determined. Nevertheless, this backward approach leads to several difficulties. The choice of the time horizon $T$ and the associated terminal preferences is somewhat arbitrary, especially over long horizons, and can significantly impact the optimal strategy, for instance by depleting the fund as $t$ approaches $T$. In contrast to standard individualized investment-consumption models, where $T$ is often interpreted as the death time or statutory retirement age, a parameter subject to political risk but unlikely to vary drastically, our framework is inherently multi-generational and involves multiple agents. Pre-specifying a finite time horizon therefore becomes problematic, as it privileges the utility of some agents over that of others.
In addition,  the backward framework can give rise to sub-optimality and time-inconsistency issues, particularly over long-term horizons,  as highlighted in \cite{el2018consistent} and \cite{ng2024optimal}, among others. \\
Therefore, we use the framework of forward utilities. In this setting,  there is no intrinsic time horizon, while  initial preferences are calibrated,  and future preferences are dynamically adapted to a learning set of admissible strategies that reflects the economic, financial and demographic changes. As a result, the derived optimal strategy is time-consistent over the long run, which is an essential feature when it comes to managing intergenerational risk sharing as it is the case here.\\
In order to remain close to the standards of economic literature, we use in this study  a dynamic version of  utilities having constant relative risk aversion (CRRA), also called forward power utilities.

\paragraph{Buffer fund utility} The buffer fund utility ${U(t, \cdot) }$ is a power  forward utility with constant relative risk aversion  $\theta \in ]0,1[ \cup  ]1,+\infty[ $. Both situations $\theta \in ]0,1[$ and $\theta > 1$ can be found in the literature: $\theta > 1$ is mainly used for portfolio and consumption optimization.%

The dynamic CRRA buffer fund utility takes into account the sustainability constraint $\mathfrak{K}$ as follows:
\begin{align}\label{eq:defUfund}
U(t,x) &:= Z^{u}_t \,  \frac{(x-\mathfrak{K}_t)^{1-\theta}}{ 1-\theta},
\end{align}\vspace{-3mm}
with \vspace{-3mm}
\begin{align} 
\label{eq:SDEZut}
d Z_t^u  &= Z_t^u( b_t dt + {\delta_t} \cdot d B_t),
\end{align}
 where $Z^{u}_t$ is a positive  stochastic process reflecting the random evolution of the time preferences. The vector process $\delta= {}^{t} \!(\delta^S, \delta^\nu, \delta^r, \delta^{\mathfrak{e}})$ represents the sensitivity of $Z^u$ with respect to the correlated Brownian motions driving the financial and economic processes, and plays an important role. More precisely,  the elements $\delta^S$, $\delta^\nu$, $\delta^r$ and $ \delta^{\mathfrak{e}}$ represents the sensitivity of $Z^u$ to the volatilities of the risky asset, its volatility, the  interest rate and wages, respectively, where the sign determines the correlation of $Z^u$ with the corresponding processes.

The sensitivities to the hedgeable and non-hedgeable risks also play an important role. Let $L$ be the square root matrix arising from the Cholesky decomposition of the correlation matrix   $\Gamma$, as introduced in \eqref{eq:Gamma_matrix}. Then,  the first component  ${({}^{t}\!L\delta_t)^S \in \mathbb{R}}$
of ${}^{t} L\delta_t \in \mathbb{R}^4$ represents the sensitivity of the buffer fund with respect to the hedgeable risk, while the vector of its  last three components $({}^{t}\!L\delta_t)^{\perp} \in \mathbb{R}^3$ represent the non-hedgeable risk, and only depends on ${}^{t} \!(\delta^\nu, \delta^r, \delta^\mathfrak{e})$.

\paragraph{Aggregate retirees' utility}
The social planner aggregates the individual preferences of retirees by assigning a weight $\omega_t$ to each retiree at time $t$. Formally, the preference process $v(t, \cdot)$ of one retiree is also specified as a power forward utility, with the same relative risk aversion $\theta$ for time-consistency reasons (see \cite{el2018consistent}):
     \begin{equation*}
v(t,p):= Z_t \frac{(p-p^{\min}_{t})^{1-\theta}}{1-\theta},
\end{equation*}
 where $Z$ can be interpreted as  a time preference process. 
The aggregate retirees' utility then becomes 
  \begin{equation}\label{eq:Vpension}
 V(t, p ) := N_t^r  \omega_t v(t, p)= N_t^r \omega_t  Z_t \frac{(p-p^{\min}_{t})^{1-\theta}}{1-\theta}.
\end{equation}

\subsection{Consistency and optimal processes}
The satisfaction provided by an admissible  policy $(\pi,p) \in \mathcal A(F_0)$  is measured  by the dynamic  criterion $U(t,F^{\pi,p}_t) + \int_0^t V(s,p_s) ds$,
that is assumed to satisfy a dynamic programming principle, also called
consistency given the constraints set $\mathcal A(F_0)$.

\begin{Definition}[Forward utility]\label{def:consistent system}
Let $( U,  V)$ be a dynamic utility  system with admissible policies set  $\A(F_0)$. The utility  system $( U,  V)$ is said to be {consistent}, if \\
\rmi For any admissible strategies   $(\pi,p) \in \mathcal A(F_0)$,  the preference process 
$(U(t,F^{\pi,p}_t) + \int_0^t V(s,p_s) ds)$    is a non-negative supermartingale.\\
\rmii There exists an {\rm optimal } policy  $(\pi^*,p^*) \in \mathcal A(F_0)$, for  which 
the optimal  preference process   $(U(t,F^{\pi^*,p^*}_t) + \int_0^t V(s,p^*_s) ds)$    is a martingale.
 \end{Definition}

General results for the existence and characterization of forward  utilities in this particular setting are given\footnote{In \cite{hillairet2024time}, $\theta$ is taken in $]0,1[$ but all the results hold for $\theta>1$.}  in \cite{hillairet2024time}. It relies on an Hamilton-Jacobi-Bellmann (HJB) constraint on the drift of the utility $U$ as well as a constraint on the sustainability bound $\mathfrak{K}$ representing a  buffer fund receiving the contribution $C$ and paying the minimal pension amount $ P^{\min}= p^{\min}N^r $ (see Theorem 4.3 \cite{hillairet2024time}):
 \begin{equation}\label{BFdynamic}
 d\mathfrak{K}_t =  (\mathfrak{K}_t r_t + C_t - p^{\min}_t N^r_t ) dt + \pi_t^\mathfrak{K}   (d B_t^S + \eta_t dt). 
 \end{equation}
More precisely, using It\^o-Ventzel’s formula,  the supermartingale property implied by the consistency condition translates into an HJB-constraint on the drift $(b_t)_{t\geq 0}$ of the  process $Z^u$ driving the buffer fund utility dynamics $U$ as defined in \eqref{eq:SDEZut} where the process $Z^u$ 
is a solution to the following non-linear SDE (see Proposition 5.4 in \cite{hillairet2024time})
\begin{align}
\label{eq:SDEZut_part}
d Z_t^u  &= Z_t^u\left( -\left[(1-\theta) r_t + \frac{1-\theta}{2 \theta} ( ({}^{t} \!L\delta_t)^S+\eta_t)^2 + \theta N_t^r \left( \frac{Z_t \omega_t}{Z_t^u}\right)^{1/\theta}\right] dt + {\delta_t} \cdot d B_t\right), \,  t \in [0, \tau^Z[
\end{align}
\vspace{-1cm}
\begin{align}
\label{eq:tauZ}
\text{where  \quad } \tau^Z:=\inf\{t \geq 0, Z^u_t =0\}.
\end{align}
In the previous equation,   $L$ is the square root matrix arising from 
the Cholesky decomposition of the correlation matrix   $\Gamma$, as introduced in \eqref{eq:Gamma_matrix}. For all $t \geq 0$, ${({}^{t}\!L\delta_t)^S \in \mathbb{R}}$ denotes the first component 
of ${}^{t} L\delta_t \in \mathbb{R}^4$, and $({}^{t}\!L\delta_t)^{\perp} \in \mathbb{R}^3$ 
its last three components. 
The derivation of the optimal investment-pension policy is  then computed using  Theorem 5.5 in \cite{hillairet2024time}.

 \begin{Theorem}[Optimal Policy]\label{ThOpt}
Let $U$ be the buffer fund utility  and $V$ be the aggregated retirees' utility as defined in \eqref{eq:defUfund}-\eqref{eq:SDEZut} and  \eqref{eq:Vpension}, respectively. \\ Assume $F_0 > \mathfrak{K}_0$ and  the HJB-dynamics \eqref{eq:SDEZut_part} for $Z^u$.
Then the optimal policy $(\pi^*,p^*)$ is 
\begin{numcases}{} \label{eq:pstar}
\pi_t^* =  \pi_t^\mathfrak{K} + \frac{(F_t^*-\mathfrak{K}_t)}{\theta} (  ({}^{t} \!L\delta_t)^S+\eta_t) \mathbf{1}_{\{t <\tau^Z\} } \\\label{eq:p*}
p^{*}_t =p^{\min}_t+ {(F_t^*-\mathfrak{K}_t)}  \left(\frac{{Z_t} { \omega_t}}{{Z_t^u} }\right)^{\frac{1}{\theta}}\mathbf{1}_{\{t <\tau^Z\} }=p^{\min}_t+  {(F_t^*-\mathfrak{K}_t) (Z_t^u)^{\frac{-1}{\theta}}}\left(Z_t \omega_t\right)^{\frac{1}{\theta}} \mathbf{1}_{\{t <\tau^Z\} }. 
\end{numcases}
 In particular, 
 the buffer fund $F^{*}$ induced by the optimal policy   $(\pi^{*}, p^{*})$ 
 satisfies the dynamics
\begin{equation}\label{DynPuiF^*}
dF_{t}^*= d \mathfrak{K}_t + (F_t^* -\mathfrak{K}_t)  \left[ \left(r_{t} {-} N^r_t \left(\frac{Z_t \omega_t}{Z_t^u }\right)^{\frac{1}{\theta}} \right)dt
+\frac{1}{\theta} (({}^{t} \!L\delta_t)^S +\eta_t)  \big(dB^S_{t}+\eta_{t}dt\big)\right].
\end{equation}
\end{Theorem}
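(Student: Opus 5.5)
The plan is a verification argument: write the optimal policy in terms of the surplus $X_t := F_t - \mathfrak{K}_t$, apply It\^o–Ventzel to the preference process along an arbitrary admissible policy, and use the HJB drift \eqref{eq:SDEZut_part} so that the drift of $U(t,F_t)+\int_0^t V(s,p_s)\,ds$ is a negative quadratic form that vanishes exactly at $(\pi^*,p^*)$. Alternatively, one can invoke Theorem 5.5 of \cite{hillairet2024time}; its first-order conditions are the pointwise maximizations carried out below.

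\textbf{Surplus dynamics.} Subtracting \eqref{BFdynamic} from \eqref{eq:SFfundbis}, and setting $q_t := p_t - p^{\min}_t \ge 0$ and $\tilde\pi_t := \pi_t - \pi^{\mathfrak{K}}_t$, we get
$$dX_t = X_t r_t\,dt - q_t N^r_t\,dt + \tilde\pi_t\,(dB^S_t + \eta_t\,dt).$$
Sustainability becomes $X\ge 0$ and adequacy becomes $q\ge 0$. The criterion then reads $Z^u_t X_t^{1-\theta}/(1-\theta) + \int_0^t N^r_s\omega_s Z_s\, q_s^{1-\theta}/(1-\theta)\,ds$.

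\textbf{It\^o computation.} Applying It\^o's formula to the product $Z^u_t X_t^{1-\theta}$ gives the drift
$$Z^u X^{1-\theta}\Big[\tfrac{b}{1-\theta} + r + \tfrac{\tilde\pi}{X}\big(\eta + ({}^{t}\!L\delta)^S\big) - \tfrac{\theta}{2}\tfrac{\tilde\pi^2}{X^2} - \tfrac{qN^r}{X}\Big] + N^r\omega Z\,\tfrac{q^{1-\theta}}{1-\theta}.$$
The cross-variation term is $d\langle \delta\cdot B, B^S\rangle = ({}^{t}\!L\delta)^S\,dt$, because $B^S$ corresponds to the first row of $L$ and $L$ is lower triangular.

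\textbf{Pointwise maximization.} Maximizing the drift in $\tilde\pi$ gives $\tilde\pi^* = \tfrac{X}{\theta}\big(({}^{t}\!L\delta)^S + \eta\big)$. Maximizing in $q$, the first-order condition $Z^u X^{-\theta} N^r = N^r\omega Z q^{-\theta}$ gives $q^* = X\,(Z\omega/Z^u)^{1/\theta}$. These are \eqref{eq:pstar}–\eqref{eq:p*}. Substituting them, the maximal drift equals
$$Z^u X^{1-\theta}\Big[\tfrac{b}{1-\theta} + r + \tfrac{1}{2\theta}\big(({}^{t}\!L\delta)^S+\eta\big)^2 + \tfrac{\theta}{1-\theta} N^r (Z\omega/Z^u)^{1/\theta}\Big].$$
The bracket vanishes precisely when $b$ is the HJB drift in \eqref{eq:SDEZut_part}. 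Concavity in $(\tilde\pi,q)$ holds for either sign of $1-\theta$, since the objective is $\tfrac{1}{1-\theta}$ times a power with the appropriate sign. Hence the drift is $\le 0$ for every admissible policy and $=0$ for $(\pi^*,p^*)$. Plugging $\tilde\pi^*,q^*$ back into $dX$ gives \eqref{DynPuiF^*} directly.

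\textbf{After $\tau^Z$.} On $\{t\ge\tau^Z\}$ we have $Z^u\equiv 0$, and the indicator sets $q^*=\tilde\pi^*=0$. The surplus then grows at rate $r$ and stays nonnegative, so the policy remains admissible.

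\textbf{Main obstacle.} The delicate step is upgrading local (super)martingale properties to true ones, and checking admissibility of $(\pi^*,p^*)$. Under $(\pi^*,p^*)$, $X^*$ is a stochastic exponential, so $X^*>0$ and the integrability conditions hold because $\eta$, $\delta$, $N^r$ and $(Z\omega/Z^u)^{1/\theta}$ are locally integrable on $[0,\tau^Z[$. For the supermartingale property, I would use that a nonnegative local supermartingale is a supermartingale when $\theta<1$. When $\theta>1$, both terms are nonpositive, and I would rely on the corresponding argument in \cite{hillairet2024time} (see the footnote there) or on localization plus Fatou after a sign shift. For the martingale property of the optimal preference process, I would express it as $Z^u X^{*1-\theta}$ times an exponential martingale plus its integral term, and appeal to Theorem 5.5 of \cite{hillairet2024time} for the uniform integrability.
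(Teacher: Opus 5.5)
Your proposal is correct at the paper's level of rigor, but it takes a different route. The paper does no computation. It writes $B=LW$ with $W$ a standard four-dimensional Brownian motion, and notes two facts: $Z^u$ then has volatility ${}^{t}\!L\delta$ with respect to $W$, and $B^S=W^1$. The correlated model is therefore exactly the uncorrelated model of \cite{hillairet2024time}, with hedgeable volatility component $({}^{t}\!L\delta)^S$. The policy, the drift $b$ and the fund dynamics are then read off from Theorem 5.5, Proposition 5.3 and Proposition 5.2 there. You instead redo the verification directly in the correlated setting. The same structural fact appears in your computation as $d\langle \delta\cdot B,B^S\rangle_t=(\Gamma\delta)^S\,dt=({}^{t}\!L\delta)^S\,dt$. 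Your pointwise maximization is right: it yields $\tilde\pi^*$ and $q^*=X(Z\omega/Z^u)^{1/\theta}$, the vanishing of the maximal drift is exactly \eqref{eq:SDEZut_part}, and \eqref{DynPuiF^*} follows by substitution. Your route shows why only the hedgeable component of ${}^{t}\!L\delta$ enters the investment rule, and why \eqref{eq:SDEZut_part} is precisely the drift that makes the optimal drift zero. The paper's route inherits all the technical content wholesale (local versus true (super)martingales, integrability of the optimal policy), and you end up citing that content from the same source anyway.

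Two points should be tightened.

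First, your fallback ``localization plus Fatou after a sign shift'' for $\theta>1$ fails. For a nonpositive local supermartingale, conditional Fatou gives the inequality in the wrong direction. Nonpositive local martingales can in fact be strict submartingales, for example minus an inverse three-dimensional Bessel process. So for $\theta>1$ an integrability restriction on admissible policies is genuinely needed, and it must come from the cited work.

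Second, your inequality ``drift $\le 0$ for every admissible policy'' is only proved where $Z^u>0$, i.e.\ on $[0,\tau^Z[$. With $Z^u\equiv 0$ afterwards, the drift of the preference process of an arbitrary admissible policy is $V(t,p_t)$. For $\theta<1$ this is strictly positive whenever a policy still has a positive surplus at $\tau^Z$ and keeps paying $p_t>p^{\min}_t$. For $\theta>1$, $V(t,p^{\min}_t)=-\infty$, so even the optimal preference process degenerates after depletion. Checking admissibility of $(\pi^*,p^*)$ after $\tau^Z$ does not cover this. The verification, and with it the statement, has to be read on $[0,\tau^Z]$ or supplemented by a convention for the criterion after depletion; the paper's bare citation leaves this implicit too.

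Note also that $X^*_{\tau^Z}=0$, since the optimal surplus equals $(Z^u)^{1/\theta}Y$ with $Y>0$ (cf.\ Proposition \ref{PropSurplus}). After depletion the optimal surplus is therefore identically zero, not merely nonnegative.
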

\begin{proof} 
This is a straightforward adaptation of results in \cite{hillairet2024time}  to this framework of correlated Brownian motion $B$. Writing $B=L W,$ where 
$W$ is an uncorrelated Brownian motion, the volatility $\bar \delta$ of $Z^u$ with respect to $W$ becomes $\bar \delta= {}^{t} \!L \, \delta.$
With this transformation, the volatility component of $Z^u$  on the hedgeable part of $W$  (the first component of $W$ which is $B^S$) is  $\bar \delta_1 = ({}^{t} \!L\delta)^S$. Then the optimal  policy  $(\pi^*,p^*)$ follows from 
Theorem 5.5, the drift $b$ of  $Z^u$ follows from Proposition 5.3 and the fund dynamics follows from Proposition 5.2 in \cite{hillairet2024time}. 
\end{proof}
The optimal investment policy \eqref{eq:pstar} consists of two components: the baseline investment strategy at the sustainability bound $\pi_t^\mathfrak{K}$, plus an additional term proportional to the buffer fund surplus $(F_t^*-\mathfrak{K}_t)$. The first term corresponds to the classical myopic Merton strategy $\eta_t/\theta$. The second term represents the planner's sensitivity to hedgeable risk $({}^{t} \!L\delta_t)^S/\theta$. As expected, higher risk aversion $\theta$ yields more conservative portfolio allocations; however, unlike the myopic strategy, the social planner can adjust market exposure by modifying the utility preferences $({}^{t} \!L\delta_t)^S$, providing an additional degree of freedom for portfolio control. 

The optimal pension payment \eqref{eq:p*} consists of the minimum pension plus a surplus also proportional to the cushion $(F_t^* - \mathfrak{K}_t)$. The risk aversion adjusted weight $(\omega_t)^{1/\theta}$ captures the relative weight assigned by the social planner to a retiree at time $t$, for instance $\omega_t=1$ under equal treatment. On the other hand,  the preference component $\left(\frac{Z_t}{Z_t^u}\right)^{1/\theta}$ governs the trade-off between pension payouts and fund preservation.\\

Note that the consistency constraint determines the drift $b$ of $Z^u$. This means that the weight of the buffer fund utility updates in response to the evolution of both financial markets and the economic environment, which in turn drives the optimal pension paid to retirees. In particular, $(F^* - \mathfrak{K})$ and  $Z^u$ are strongly dependent. The following Proposition \ref{PropSurplus} further elaborates on the evolution of the total nominal pension surplus $(p^* - p^{\min})$. 

\begin{Proposition}[Pension surplus] \label{PropSurplus} Under the assumptions of Theorem \ref{ThOpt}, the  optimal surplus  pension 
is $(p^*_t - p^{\min}_t)  = \left(Z_t \omega_t\right)^{\frac{1}{\theta}} Y_t \mathbf{1}_{\{t <\tau^Z\} }$, with $Y_t := (F^*_t -\mathfrak{K}_t)(Z^u_t)^{-\frac{1}{\theta}}$  verifying 
\begin{eqnarray*}
 Y_t &=& (F^*_0 -\mathfrak{K}_0)(Z^u_0)^{-\frac{1}{\theta}} \operatorname{exp}{\Big[\frac{1}{\theta}  \int_0^t \left(\big(r_s +  \frac{\eta_s^2 +||({}^{t} \!L\delta_s)^\perp||^2 }{2}\big) ds +( ({}^{t} \!L\delta_s)^S+\eta_s )dB_s^S-\delta_s \cdot  dB_s \right) \Big]},
\end{eqnarray*}
and whose dynamic is given by
\begin{equation}\label{eq:UzF_part}
d Y_t= \frac{Y_t}{\theta}\Big[\Big(r_t +\frac{1+\theta}{2\theta} \left(\eta_t^2 +||({}^{t} \!L\delta_t)^\perp||^2 \right)\Big)dt + ( ({}^{t} \!L\delta_t)^S +\eta_t )dB_t^S-\delta_t \cdot  dB_t\Big].
\end{equation}
Furthermore, $\tau^Z = \inf \{ t \geq 0 ; Z^u_t=0\}$ coincides with  the buffer depletion time, i.e., the first time that the buffer fund  reaches the sustainability bound $\mathfrak{K}$.
\end{Proposition}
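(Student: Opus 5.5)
The plan is a direct Itô computation. I will write $X_t := F^*_t - \mathfrak{K}_t$, $h_t := ({}^{t}\!L\delta_t)^S$ and $q_t := (Z_t\omega_t/Z^u_t)^{1/\theta}$, and work on $[0,\tau^Z[$.

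\textbf{Step 1: exponential form of $X$.} Subtracting \eqref{BFdynamic} from \eqref{DynPuiF^*} gives a linear SDE for $X$:
$$dX_t = X_t\big[(r_t - N^r_t q_t)\,dt + \tfrac{1}{\theta}(h_t+\eta_t)(dB^S_t + \eta_t\,dt)\big].$$
Its solution is
$$X_t = X_0\,\operatorname{exp}\Big[\int_0^t\Big(r_s - N^r_s q_s + \tfrac{(h_s+\eta_s)\eta_s}{\theta} - \tfrac{(h_s+\eta_s)^2}{2\theta^2}\Big)ds + \tfrac{1}{\theta}\int_0^t (h_s+\eta_s)\,dB^S_s\Big].$$

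\textbf{Step 2: exponential form of $Z^u$.} From \eqref{eq:SDEZut_part}, $Z^u$ is a stochastic exponential with drift $b$. Since $B = LW$, the quadratic variation of $\delta\cdot dB$ is ${}^{t}\delta\,\Gamma\,\delta = \|{}^{t}\!L\delta\|^2 = h^2 + \|({}^{t}\!L\delta)^\perp\|^2$. Hence
$$(Z^u_t)^{-1/\theta} = (Z^u_0)^{-1/\theta}\operatorname{exp}\Big[-\tfrac{1}{\theta}\int_0^t\Big(b_s - \tfrac{h_s^2 + \|({}^{t}\!L\delta_s)^\perp\|^2}{2}\Big)ds - \tfrac{1}{\theta}\int_0^t \delta_s\cdot dB_s\Big],$$
where $b = -\big[(1-\theta)r + \tfrac{1-\theta}{2\theta}(h+\eta)^2 + \theta N^r q\big]$.

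\textbf{Step 3: the product $Y = X (Z^u)^{-1/\theta}$.} Multiply the two exponentials and collect the drift. The key cancellation is that the $N^r q$ terms disappear: $-N^r q$ from $X$ is exactly compensated by $+\tfrac{1}{\theta}\theta N^r q$ from $-b/\theta$. This is why $Y$ no longer depends on $Z$, $\omega$ or $N^r$.

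The remaining drift terms combine as follows:
\begin{itemize}
\item the $r$ terms give $r + \tfrac{(1-\theta)}{\theta}r = r/\theta$;
\item the $(h+\eta)^2$ terms give $-\tfrac{(h+\eta)^2}{2\theta}$;
\item adding $\tfrac{(h+\eta)\eta}{\theta} + \tfrac{h^2 + \|\cdot^\perp\|^2}{2\theta}$ leaves $\tfrac{\eta^2 + \|({}^{t}\!L\delta)^\perp\|^2}{2\theta}$.
\end{itemize}
This yields the stated closed form, with diffusion part $\tfrac{1}{\theta}\big[(h+\eta)\,dB^S - \delta\cdot dB\big]$.

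\textbf{Step 4: the SDE \eqref{eq:UzF_part}.} Apply Itô to the exponential. I need $d\langle B^S, \delta\cdot B\rangle_t = (\Gamma\delta)^S dt$. Since the first row of $L$ is $(1,0,0,0)$, we have $(\Gamma\delta)^S = (L\,{}^{t}\!L\delta)^S = h$. The quadratic variation of the exponent is then
$$\theta^{-2}\big[(h+\eta)^2 - 2(h+\eta)h + h^2 + \|\cdot^\perp\|^2\big] = \theta^{-2}\big(\eta^2 + \|\cdot^\perp\|^2\big).$$
Adding half of this to the drift $\tfrac{1}{\theta}\big(r + \tfrac{\eta^2+\|\cdot^\perp\|^2}{2}\big)$ gives the coefficient $\tfrac{1}{\theta}\big(r + \tfrac{1+\theta}{2\theta}(\eta^2 + \|\cdot^\perp\|^2)\big)$. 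The pension surplus formula then follows immediately from \eqref{eq:p*}.

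\textbf{Step 5: depletion time.} The SDE \eqref{eq:UzF_part} is linear, and its coefficients involve neither $Z^u$ nor $Z$. Its explicit exponential solution is therefore well defined on all of $\mathbb{R}_+$, strictly positive and finite, given that the integrability implied by admissibility makes the integrals finite a.s. Consequently $F^*_t - \mathfrak{K}_t = Y_t (Z^u_t)^{1/\theta}$ vanishes exactly when $Z^u$ does, so $\tau^Z$ is the first hitting time of $\mathfrak{K}$ by $F^*$.

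\textbf{Main obstacle.} The delicate point is Step 5. Up to now $Y$ is defined only on $[0,\tau^Z[$, where $(Z^u)^{-1/\theta}$ makes sense. I plan to define $Y$ globally via its exponential formula and then pass to the limit $t \uparrow \tau^Z$ using continuity of $F^*-\mathfrak{K}$ and $Z^u$. This requires checking that the integrands are locally integrable up to and including $\tau^Z$, so that $Y_{\tau^Z}\in\,]0,\infty[$.
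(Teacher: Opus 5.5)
Your proposal is correct and follows the same route as the paper's proof. You write $F^*-\mathfrak{K}$ and $Z^u$ in exponential form, note that the $N^r(Z\omega/Z^u)^{1/\theta}$ terms cancel, use ${}^{t}\!\delta\,\Gamma\,\delta-(({}^{t}\!L\delta)^S)^2=\|({}^{t}\!L\delta)^\perp\|^2$, apply It\^o, and then use $F^*-\mathfrak{K}=(Z^u)^{1/\theta}Y$ with $Y>0$ to identify $\tau^Z$ with the depletion time. Your handling of the limit $t\uparrow\tau^Z$ is more careful than the paper's. One small correction in Step 5: finiteness of $Y$ up to $\tau^Z$ comes from the standing local integrability of $r$, $\eta$ and $\delta$, which is needed anyway to define $\xi$ and $Z^u$. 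It does not come from admissibility of $\pi^*$, because the integrand $\pi^*-\pi^{\mathfrak{K}}$ carries the factor $F^*-\mathfrak{K}$, which vanishes at $\tau^Z$ and so gives no control there.
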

Proposition \ref{PropSurplus} shows that the pension surplus only depends on the population evolution through $\left(Z_t \omega_t\right)^{\frac{1}{\theta}}$ where $\omega_t$ is the weight attributed to a retiree at time $t$ and $Z_t$ is the individual preference weight, since $Y$ is independent of demographic factors. 

The initial value $Z_0^u$ has a strong impact on the surplus level, while higher risk aversion $\theta$ naturally yields lower pension surpluses. The financial drivers impact the pension surplus through the short rate $r$ and market risk premium $\eta$: on average, strong market performance increases both buffer fund growth and pension surplus.  

The sensitivity to the non-hedgeable risks $({}^{t} \!L\delta)^\perp$ determines the extent to which payouts deviate from what market conditions alone would deliver.  Notably, the sign of $({}^{t} \!L\delta_t)^\perp$ does not affect the average surplus, as it only depends on the  squared norm $\|({}^{t} \!L\delta_t)^\perp\|^2$. 
Positive values of $({}^{t} \!L\delta_t)^S$ of the sensitivity to the hedgeable risk increase both portfolio allocation  and pension surplus (Theorem \ref{ThOpt}) in favorable financial scenarios.
Negative values of $({}^{t} \!L\delta_t)^\perp$
mitigate the downward shocks on non-hedgeable risks.

A detailed analysis of the impact of the volatility $\delta $ on the optimal policy requires to consider the joint distribution of $p^{\min}$, the surplus $(p^*-p^{\min})$ (that is $Y$) together with the depletion time: while the dependence on the wages $\mathfrak e$ of  $p^{\min}$ and the depletion time is obvious, wages are also correlated to  $Y$ via the utility volatility term $\delta^{\mathfrak e}$. This subtle, nonetheless important, feature will be analyzed in detail in the numerical section.

\begin{proof}
First note that since $\Gamma= L \, ^{t}\!L$,
$  {}^{t} \!\delta  \Gamma \delta -  (({}^{t} \!L\delta)^S)^2  
= {}^{t} \!({}^{t} \!L\delta)({}^{t} \!L\delta) - (({}^{t} \!L\delta)^S)^2 = ||({}^{t} \!L\delta)^\perp||^2.$ Then, 
using \eqref{DynPuiF^*} and \eqref{eq:SDEZut}, 
the  exponential form  of $Y_t =(F_t^* -\mathfrak{K}_t)(Z_t^u)^{\frac{-1}{\theta}}$ is 
\begin{eqnarray*}
Y_t&=& Y_0 \, \exp{\Big[\frac{1}{\theta}  \int_0^t \left((r_s +  \frac{\eta_s^2 - (({}^{t} \!L \delta_s)^S)^2 + {}^{t} \!\delta_s \Gamma \delta_s }{2} ) ds +( ({}^{t} \!L \delta_s)^S+\eta_s )dB_s^S-\delta_s \cdot  dB_s \right) \Big]}\\
&=& Y_0 \, \exp{\Big[\frac{1}{\theta}  \int_0^t \left(\big(r_s +  \frac{\eta_s^2 +||({}^{t} \!L\delta_s)^\perp||^2 }{2})\big) ds +( ({}^{t} \!L \delta_s)^S+\eta_s )dB_s^S-\delta_s \cdot  dB_s \right) \Big]}
\end{eqnarray*}
and using Itô's formula, its  dynamics  is given by
$$
d Y_t= \frac{Y_t}{\theta}\Big[\Big(r_t +\frac{1+\theta}{2\theta} \left(\eta_t^2 +||({}^{t} \!L\delta_t)^\perp||^2 \right)\Big)dt + ( ({}^{t} \!L\delta_t)^S +\eta_t )dB_t^S-\delta_t \cdot  dB_t\Big].$$
As a result, $(F_t^* -\mathfrak{K}_t) =(Z_t^u)^{\frac{1}{\theta}} Y_t$ where $Y$ is a positive process, therefore 
  $ \inf \{ t \geq 0 ; Z^u_t=0\}$ coincides with  the buffer depletion time $\inf \{ t \geq 0 ; \; F^*_t  = \mathfrak{K}_t\}$.
\end{proof}

\subsection{Sustainability and adequacy criteria}\label{subsec:sustainability_adequacy}

We  now investigate a sustainability and several adequacy criteria to study the relevance of the pension scheme.

\paragraph{Buffer depletion time} An important sustainability criterion is the time  $\tau$ defined as 
the first time the buffer fund reaches the maximum allowed debt amount, 
$$\tau := \inf \{ t \geq 0 ; \; F^*_t  = \mathfrak{K}_t\}=\tau^Z,$$
that is how long the pension system remains sustainable while still ensuring the adequacy constraints \eqref{eq:Pensionmin}.  

\begin{Proposition}
\label{PropTau}
   The buffer depletion time $\tau= \inf \{ t \geq 0 ; \; F^*_t  = \mathfrak{K}_t\},$
is characterized as 
   \begin{equation}\label{deftau}
\tau = \inf \left\{ t \; ; \;  \int_0^t N_s^r \left(\frac{Z_s \omega_s \xi_s}{ Z_0^u } \right)^{\frac{1}{\theta}}ds  \geq1 \right\},
\end{equation}
where \begin{equation}\label{eq:xi}
\xi_t   := \operatorname{exp}\left(\int_0^t \left((1-\theta)r_s + \frac{(1-\theta)}{2\theta}   (({}^{t} \!L\delta_s)^S+ \eta_s)^{2} +\frac{{}^{t} \!\delta_s \Gamma  \delta_s }{2}\right)ds - \int_0^t \delta_s\cdot  dB_s\right)
\end{equation}
can be interpreted as a  stochastic utility-dependent discount factor.
\end{Proposition}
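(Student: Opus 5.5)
The plan is to reduce everything to the nonlinear SDE \eqref{eq:SDEZut_part} for $Z^u$. By Proposition \ref{PropSurplus}, $\tau=\tau^Z=\inf\{t\geq 0;\ Z^u_t=0\}$. It therefore suffices to solve \eqref{eq:SDEZut_part} explicitly enough to read off when $Z^u$ vanishes. I would separate the linear part of the dynamics from the nonlinear pension term $\theta N^r_t (Z_t\omega_t/Z^u_t)^{1/\theta}$ by a multiplicative factorization $Z^u_t=\xi_t^{-1}A_t$, where $A$ is a finite-variation process.

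\textbf{Step 1: the linear part is $\xi_t^{-1}$.} Observe that $\xi_t^{-1}$, with $\xi$ as in \eqref{eq:xi}, is the stochastic exponential with drift $-\big[(1-\theta)r_t+\frac{1-\theta}{2\theta}(({}^{t}\!L\delta_t)^S+\eta_t)^2\big]$ and volatility $\delta_t\cdot dB_t$. Indeed, the quadratic variation of $\int\delta\cdot dB$ is $\int {}^{t}\!\delta\,\Gamma\,\delta\,ds$, since $B$ has correlation matrix $\Gamma$. This produces exactly the Itô correction $-\frac{1}{2}{}^{t}\!\delta\Gamma\delta$ that appears (with a sign flip) inside $\xi$. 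Hence
\[
d\xi^{-1}_t=\xi^{-1}_t\Big(-\big[(1-\theta)r_t+\tfrac{1-\theta}{2\theta}(({}^{t}\!L\delta_t)^S+\eta_t)^2\big]dt+\delta_t\cdot dB_t\Big).
\]

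\textbf{Step 2: the ODE for $A$.} Define $A_t:=Z^u_t\xi_t$ on $[0,\tau^Z[$. By Itô's product rule, the martingale parts and the linear drift cancel, so $A$ is absolutely continuous with
\[
\frac{dA_t}{dt}=-\theta N^r_t\,(Z_t\omega_t)^{1/\theta}\,\xi_t\,(Z^u_t)^{1-1/\theta}=-\theta N^r_t\,(Z_t\omega_t\xi_t)^{1/\theta}A_t^{1-1/\theta},
\]
using $(Z^u_t)^{1-1/\theta}=\xi_t^{-(1-1/\theta)}A_t^{1-1/\theta}$. This is a pathwise Bernoulli-type ODE. Setting $G_t:=A_t^{1/\theta}$ linearizes it to $dG_t/dt=-N^r_t(Z_t\omega_t\xi_t)^{1/\theta}$. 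Since $A_0=Z^u_0$, this gives
\[
\Big(\frac{Z^u_t\xi_t}{Z^u_0}\Big)^{1/\theta}=1-\int_0^t N^r_s\Big(\frac{Z_s\omega_s\xi_s}{Z^u_0}\Big)^{1/\theta}ds .
\]

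\textbf{Step 3: conclusion.} Because $\xi_t>0$ and finite, $Z^u_t=0$ exactly when the right-hand side vanishes. The integrand is nonnegative and the integral is continuous and nondecreasing, so the first zero of $Z^u$ is the first time the integral reaches $1$. This yields \eqref{deftau}.

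I expect the main obstacle to be justifying Step 2 rigorously up to $\tau^Z$. One must ensure that the product rule applies and that $Z^u$ is the unique positive solution of \eqref{eq:SDEZut_part} on $[0,\tau^Z[$. For $\theta<1$ the exponent $1-1/\theta$ is negative, so the coefficient is singular as $Z^u\to 0$. I would handle this by localizing: work with stopping times $\tau_n=\inf\{t;\ Z^u_t\leq 1/n\}$, on which all coefficients are locally Lipschitz. Then derive the identity on each $[0,\tau_n]$, and let $n\to\infty$, using continuity of both sides to identify $\lim\tau_n=\tau^Z$ with the hitting time in \eqref{deftau}. As a by-product, this also shows that $\tau$ does not depend on $F_0$.
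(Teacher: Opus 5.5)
Your proposal is correct and follows the paper's argument. You reduce to $\tau=\tau^Z$ via Proposition~\ref{PropSurplus}, then read the hitting time off the closed form $Z^u_t=\xi_t^{-1}\big((Z^u_0)^{1/\theta}-\int_0^t N^r_s(Z_s\omega_s\xi_s)^{1/\theta}ds\big)^{\theta}$ on $[0,\tau[$; the paper quotes this representation from Proposition~5.4 of \cite{hillairet2024time}, while you derive it via $A=Z^u\xi$ and $G=A^{1/\theta}$, and that derivation checks out (uniqueness of the positive solution is not actually needed, since the computation holds for any positive solution on each $[0,\tau_n]$).
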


\begin{proof}
By Proposition \ref{PropSurplus} the buffer depletion time $\tau$ is also  the first time  $\tau^Z$ that the process $Z^u$  reaches 0. Thanks to the  (HJB)-consistency condition \eqref{eq:SDEZut_part}, $Z_t^u$ can be written  on the random time interval $[0,\tau[$ as  (see Proposition 5.4 in \cite{hillairet2024time})
\begin{equation}\label{exprZu}
Z_t^u = \xi_t^{-1}\left((Z^u_0)^{\frac{1}{\theta}} -  \int_0^t N_s^r(Z_s \omega_s\xi_s)^{\frac{1}{\theta}} ds \right)^{\theta}, \quad \forall t \in [0,\tau[, 
\end{equation}
where $\xi$ is given by \eqref{eq:xi}, and \eqref{deftau} follows immediately.
\end{proof}
The buffer depletion time $\tau$ depends on the stochastic utility-dependent discount factor $\xi$ and on the demographic evolution $N^r$. It also depends on the social planner and retirees' preferences through the processes  $\omega$, $Z$, and the parameters $Z^u_0$ and $\theta$. A larger $Z^u_0$ delays the buffer depletion time $\tau$ but reduces pension levels (see Equation \eqref{eq:p*}). Notably, the buffer depletion time $\tau$ does not depend on the initial buffer fund level $F_0$, while the initial wealth level naturally affects the nominal pension surplus.

\paragraph{Adequacy criteria} For the adequacy criterion, we will assess three metrics: the benefit ratio  $\text{BR}_t$, the relative pension increase $\rho_t$ with respect to $p^{\min}_t$ and the equivalent annual indexation rate (EAIR).
First, the benefit ratio  $\text{BR}_t$ is the ratio between pensions $p^*_t$ and wages $\mathfrak{e}_t$ in the same year:
\begin{equation}\label{eq:BR}
    \text{BR}_t:=\frac{p^*_t}{\mathfrak{e}_t}.
\end{equation}
Note that, in a pure PAYG context with a steady-state population the benefit ratio is constant: $\text{BR}_t=\text{BR}$ for all $t$. Introducing a buffer fund, the benefit ratio will  no longer be constant, even under the steady state.\\
Secondly, we study the relative increase $\rho_t$ with respect to $p^{\min}_t$, that is
\begin{equation}
    \rho_t:=\frac{p^*_t-p^{\min}_t}{p^{\min}_t}.
\end{equation}
For any annual random cash flow stream $c$, we assess the constant Equivalent Annual Indexation Rate (EAIR) over $[s_1,s_n]$, defined as the random rate $y$ solving
\begin{align*}
\sum_{i=1}^{n} c_{s_i} = c_{s_1} \sum_{i=1}^{n} (1+y)^{s_i-s_1}.
\end{align*}
Since $y$ is random and scenario-dependent, our analysis focuses on the mean and selected percentiles of the EAIR for the buffer fund scheme ($y^{\text{BF}}_t$) when $c=p^*$ and the pure PAYG scheme ($y^{\min}_t$) when $c=p^{\min}$.

Section \ref{sec:numerics} further investigates the sensitivity of $\tau$, $\text{BR}$, $\rho$ and EAIR to the initial condition and preference parameters under different demographic configurations, including steady-state population and baby boom scenarios. 

\section{Numerical application}\label{sec:numerics}
We model a stylized pension system over a 40-year horizon.
The numerical study is conducted under two demographic scenarios: a steady-state scenario (SS) in which the dependency ratio $DR := \frac{N^r }{N^w}$ is constant at $0.3$, and a baby boom scenario (BB) in which $DR$ increases linearly from $0.3$ to $0.5$ over a 40-year period, mimicking Eurostat projections for ageing European populations \cite{eurostat2023pop}. For a constant\footnote{Our framework scales to any value of $N^w_t$ without affecting the results.} working population $N^w_t = 100$ for all $t$, the steady-state scenario yields 30 retirees throughout, while the baby boom scenario implies a retiree population growing from 30 to 50 individuals.\\
The numerical analysis is implemented in Python, using Euler schemes to discretize the stochastic dynamics of the financial and economic processes. 

\subsection{Setup and Parametrization}
 The  financial and economic parameters of the numerical analysis are calibrated from the literature and the demographic assumptions represent stylized demographic trends representative of European pension systems. 
We conduct an in-depth analysis of the impact  on the pension scheme  of the  initial fund size $F_0$ and  the buffer utility parameters (its sensitivity $\delta$,  its weight $Z^u_0$ and its relative risk aversion parameter $\theta$).
Pathwise results presented in this section are based on two financial scenarios\footnote{The financial and economic processes associated to the presented pathwise scenarios are shown in Appendix \ref{app:finmarketscenarios}.}, an optimistic one and a pessimistic one,   as specified in Table \ref{tab:basecasepar}. Monte Carlo analysis uses $N_{\text{sim}}=10{,}000$ simulations to assess distributional properties of buffer depletion time and optimal policy metrics.  Table \ref{tab:basecasepar} summarizes the baseline parametrization detailed below. 

\paragraph{Adequacy and sustainability constraints} The minimum pension $p^{\min}$  is defined as the pure PAYG pension amount 
\begin{equation}\label{eq:pmin_part}
p^{\min}_t  =\frac{C_t}{N_t^r} = \frac{\alpha \; N^w_t \;\mathfrak{e}_t}{N^r_t}=\frac{\alpha \; \mathfrak{e}_t}{DR_t}, 
\end{equation}
with $DR_t = \frac{N^r_t }{N^w_t}$ the dependency ratio. In particular, the total pension expenditures associated to the minimum pension is $P_t^{\min}= p^{\min}_t N^r_t =C_t$.  Since $C_t - P_t \leq 0$ for all $t\geq 0$, a deficit  that will be financed by the buffer fund $(F_t)_{t\geq 0}$. The contribution rate is set at $\alpha=0.15$ (15\% of wages), a typical level for European pension systems. \\
In the numerical application, we assume that the baseline investment strategy is $\pi^\mathfrak{K}\equiv 0$. This implies that the present value of maximum amount of debt is constant, i.e. 
\begin{equation*}
\mathfrak{K}_t = \mathfrak{K}_0 e^{\int_0^t r_s ds},\quad \forall t\geq 0. 
\end{equation*}
Then the optimal policy $(\pi^*,p^*)$ is 
\begin{numcases}{   }
\pi_t^* =  \frac{(F_t^*-\mathfrak{K}_0 e^{\int_0^t r_s ds})}{\theta} (({}^{t} \!L\delta_t)^S+\eta_t) \mathbf{1}_{\{t <\tau^Z\}}  \\ \label{eq:pstarex}
p^{*}_t =\frac{C_t}{N_t^r}+ {(F_t^*-\mathfrak{K}_0 e^{\int_0^t r_s ds})}  \left(\frac{{Z_t} { \omega_t}}{{Z_t^u}}\right)^{\frac{1}{\theta}}  \mathbf{1}_{\{t <\tau^Z\}} \label{p*ex}
\end{numcases}
and the dynamics of the  buffer fund surplus $(F_t^*-\mathfrak{K}_0 e^{\int_0^t r_s ds})$ satisfies
\begin{equation*}
d(F_t^*-\mathfrak{K}_0 e^{\int_0^t r_s ds})= (F_t^*-\mathfrak{K}_0 e^{\int_0^t r_s ds})  \left( \left(r_{t} {-} N^r_t (\frac{Z_t \omega_t}{Z_t^u})^{\frac{1}{\theta}} \right)dt
+\frac{1}{\theta} ( ({}^{t} \!L\delta_t)^S +\eta_t)  \big(dB^S_{t}+\eta_{t}dt\big)\right).
\end{equation*}
Since the optimal policy only depends on $\mathfrak{K}$ through the buffer fund surplus,  we  assume without loss of generality that $\mathfrak{K}_0=0$.

\paragraph{Aggregated retirees' utility} Following standard subjective preference representation, we specify the time preference process for one retiree's utility  as $Z_t=Z_0 e^{-\beta t}$, yielding the following representation for an individual retiree:
$$v(t,p)= Z_0 e^{-\beta t}\frac{(p-p^{\min}_{t})^{1-\theta}}{1-\theta}.$$
We adopt $\beta=0.03$ and the risk-aversion parameter $\theta=4$ from \cite{munk2010dynamic}. Appendix \ref{app:theta} discusses the sensitivity to the risk aversion  parameter $\theta$ in detail. The constant $Z_0$ is a normalizing constant with no impact, taken as $Z_0=(\frac{1}{N_0^w})^\theta$.
The aggregate retiree's utility is  given by 
 $$V(t, p ) = N^r_t \omega_t {Z_0} e^{-\beta t} \frac{(p-p^{\min}_{t})^{1-\theta}}{1-\theta}.$$
 We consider two aggregation choices for retirees. In the first case, we take
\begin{equation}
\label{eq:omega_1}
\omega_t = 1, \quad \forall t \geq 0,
\end{equation}
meaning that all retirees receive equal weight regardless of whether they belong to a baby-boom generation. In this case, Proposition~\ref{PropSurplus} shows that the nominal pension surplus $(p^* - p^{\min})$ does not depend on the demographic scenarios, but depends on the wages through $\delta^{\mathfrak{e}}$ and hence on $p^{\min}$ (see discussion below).  However, under a baby boom scenario (with increasing $DR$), the sustainable minimum pension $p^{\min}$ will naturally decrease. Hence, the relative pension increase $\rho$ will be substantially higher following the introduction of the buffer fund compared to the steady-state case.

In an alternative specification, the individual retiree weight is given by
\begin{equation}
\label{eq:omega_DR}
\omega_t = \frac{DR_t}{DR_0} = \frac{N^r_t}{N^r_0},
\end{equation}
since the working population remains constant. Under this choice, retirees belonging to larger cohorts receive a higher relative weight compared to the equal-treatment case $\omega_t = 1$.

\paragraph{Financial market and wages} 
The risky asset follows the Heston stochastic volatility dynamics \eqref{eq:heston}. We set the equity premium at 4\% above the initial risk-free rate ($\mu=r_0+0.04$), following \cite{munk2010dynamic}. The stochastic variance process has initial value $\nu_0=0.04$ (corresponding to 20\% volatility), long-run mean $\bar\nu=0.04$, mean reversion speed $\kappa=3$, and volatility of volatility $\sigma_\nu=0.2$. These parameters are calibrated from the implied stochastic volatility literature \cite{ait2021implied}. We specify a leverage effect correlation $\rho_{S,\nu}=-0.7$ between stock returns and variance innovations \cite{ait2021implied}, and set all other correlations to zero following \cite{munk2010dynamic} and \cite{grzelak2011heston}.
The risk-free rate follows a mean-reverting Vasicek process \eqref{eq:Vasicek} with long-run mean $b=0.02$, mean reversion speed $a=0.50$, and volatility $\sigma_r=0.02$, consistent with \cite{munk2010dynamic}. We set the initial rate $r_0=0.03$ to generate a declining interest rate trajectory over the simulation horizon.\\
The average wage process follows Equation \eqref{eq:salary},  with constant  drift $\lambda=0.02$ (unless stated otherwise) and volatility $\sigma_\mathfrak{e}=0.02$, calibrated from \cite{munk2010dynamic}. The initial wage level $\mathfrak{e}_0=39$ thousand euros corresponds to the Eurostat EU-25 average annual wage \cite{eurostat2023wages}.

\paragraph{Buffer fund utility} 
The volatility process $\delta$ of $Z^u$, associated to the correlated Brownian motion $B$ as defined in \eqref{eq:SDEZut}, is set to the constant
$$\delta = {}^{t} \!(0, -0.2, -0.2, -0.2),$$
where the components correspond to the sensitivity of the buffer fund utility to the volatilities of the financial and economic processes ${}^{t} \!(S, \nu, r, \mathfrak{e})$. Since the risky asset is traded, the social planner is neutral with respect to its volatility. The remaining parameters $\delta^\nu$, $\delta^r$, and $\delta^{\mathfrak{e}}$ are chosen negative in order to mitigate non-hedgeable uncertainty.

Due to the correlation structure $\Gamma = L {}^{t} \!L$, decomposing $\delta$ into its hedgeable and non-hedgeable parts yields ${}^{t} \!L\delta = (0.14, -0.1428, -0.2, -0.2)$, where
\begin{equation}
L=\begin{pmatrix}
1 & 0 & 0 & 0 \\
\rho_{S,\nu} & \sqrt{1-\rho_{S,\nu}^2} & 0 & 0 \\
0 & 0 & 1 & 0 \\
0 & 0 & 0 & 1
\end{pmatrix}=\begin{pmatrix}
1 & 0 & 0 & 0 \\
-0.7 & \sqrt{0.51} & 0 & 0 \\
0 & 0 & 1 & 0 \\
0 & 0 & 0 & 1
\end{pmatrix}.
\end{equation}
The first component $({}^{t} \!L\delta)^S$ is the hedgeable part, which increases portfolio allocation as in \eqref{eq:pstar}, while $({}^{t} \!L\delta)^\perp$ governs the pension surplus \eqref{eq:p*} as shown in Proposition~\ref{PropSurplus}. The role of $\delta$ is analyzed further in the remainder of this Section.

\paragraph{Initial conditions} In the baseline scenario, the initial buffer fund is set to $F_0=C_0$, equivalent to one year of total contributions. As discussed in Proposition \ref{PropTau}, the buffer depletion time $\tau$ depends heavily on the choice $Z^u_0$ but not on $F_0$. We choose to calibrate $Z^u_0$ to yield an initial pension 5\% above the sustainable level: $p_0=1.05 \cdot p^{\min}_0$. From the control \eqref{eq:p*}, this implies
$$\left(\frac{Z^u_0}{Z_0}\right)^{\frac{1}{\theta}}=\frac{F_0 - \mathfrak{K}_0}{p_0 - p^{\min}_0}=\frac{F_0}{0.05 \cdot p^{\min}_0}=20 \cdot N^r_0=600,$$
yielding $Z^u_0=1296$ for $\theta=4$. This calibration reflects a policy objective of moderate initial generosity while maintaining long-run sustainability.

\newpage

\begin{table}[h!]\smallerthansmall
\centering 
\caption{Base Case Parametrization}\label{tab:basecasepar}
\begin{tabular}{@{}llll@{}}
\toprule
\textbf{Category} & \textbf{Parameter} & \textbf{Value} & \textbf{Source} \\
\midrule
\multicolumn{4}{l}{\textit{Simulation Design}} \\\hline
& Time horizon $T$ & 40 years & \\
& Time step & 10 per month & \\
& Illustrative seeds & 3, 5 & Optimistic, pessimistic \\
& Monte Carlo simulations $N_{\text{sim}}$ & 10,000 & \\ 
\midrule
\multicolumn{4}{l}{\textit{Demographics}} \\\hline
& Workers $N^w_t$ & 100 & Constant \\
& Steady-state $DR_t$ & 0.3 & Fixed \\
& Baby boom $DR_t$ & $0.3 \to 0.5$ & Linear over 40 years\\
\midrule
\multicolumn{4}{l}{\textit{Stock Process (Heston)}} \\\hline
& Initial price $S_0$ & 1 & Normalization \\
& Equity premium & 0.04 & $\mu = r_0 + 0.04$ \cite{munk2010dynamic} \\
& Initial variance $\nu_0$ & 0.04 & $\sqrt{\nu_0}=0.2$ \\
& Long-run variance $\bar \nu$ & 0.04 & $\sqrt{\bar \nu}=0.2$ \cite{ait2021implied} \\
& Mean reversion $\kappa$ & 3 & \cite{ait2021implied} \\
& Vol of vol $\sigma_\nu$ & 0.2 & \cite{ait2021implied} \\
\midrule
\multicolumn{4}{l}{\textit{Interest Rate Process (Vasicek)}} \\\hline
& Initial rate $r_0$ & 0.03 & \\
& Long-run mean $b$ & 0.02 & \cite{munk2010dynamic} \\
& Mean reversion $a$ & 0.50 & \cite{munk2010dynamic} \\
& Volatility $\sigma_r$ & 0.02 & \cite{munk2010dynamic} \\
\midrule
\multicolumn{4}{l}{\textit{Wage Process}} \\\hline
& Initial wage $e_0$ & 39k & \cite{eurostat2023wages} \\
& Drift $\lambda$ & 0.02 & \cite{munk2010dynamic} \\
& Volatility $\sigma_e$ & 0.02 & \cite{munk2010dynamic} \\
\midrule
\multicolumn{4}{l}{\textit{Correlations}} \\\hline
& $\rho_{S,\nu}$ (leverage effect) & $-0.7$ & \cite{ait2021implied} \\
& $\rho_{S,r}$, $\rho_{S,e}$ & 0 & \cite{munk2010dynamic} \\
& $\rho_{\nu,r}$, $\rho_{\nu,e}$ & 0 & \cite{grzelak2011heston} \\
& $\rho_{r,e}$ & 0 & \cite{munk2010dynamic} \\
\midrule
\multicolumn{4}{l}{\textit{Pension Policy}} \\\hline
& Contribution rate $\alpha$ & 0.15 & European benchmark \\
& Initial fund $F_0$ & 585k & $C_0=\alpha \cdot N^w_0 \cdot e_0$ \\
& Sustainability bound $\mathfrak{K}_0$ & 0 & No borrowing constraint\\
& Minimum pension $p^{\min}_t$ & $\alpha \cdot \mathfrak{e}_t / DR_t$ & Equation \eqref{eq:sustainablepmin}\\
& Initial surplus target & 5\% & $p_0 = 1.05 \cdot p^{\min}_0$ \\
\midrule
\multicolumn{4}{l}{\textit{Preferences}} \\\hline
& Retiree  utility weight $Z_0$ & $10^{-8}$ &  Normalizing constant $(\frac{1}{N_0^w})^\theta$\\ 
& Time preference $\beta$ & 0.03 & \cite{munk2010dynamic} \\
& Risk aversion $\theta$ & 4 & \cite{munk2010dynamic}\\
& Sensitivity of $Z^u$ $\delta$ & ${}^{t} \!(0,-0.2,-0.2,-0.2)$ & Baseline case scenario \\
& Buffer utility weight $Z^u_0$ & 1296 & Calibrated from surplus target \\
& Retirees' weight $\omega_t$ & $1, \; \frac{DR_t}{DR_0}$ &  Equal weight, DR dependent \\
\bottomrule
\end{tabular}
\end{table}

\newpage 

\subsection{Demographic impact}

We start by analyzing the impact of population ageing by comparing results  for the steady state scenario (SS, in black) and the  baby-boom scenario (BB, in red). In this first example, the retirees' weight $\omega_t$ are all equal to $1$. Figure \ref{fig:SS_v_BB_seed_3} presents the optimal processes corresponding to the optimistic financial scenario (seed 3), while Figure \ref{fig:SS_v_BB_MC} illustrates the empirical distributions  obtained via Monte Carlo simulations. The pessimistic scenario (seed 5, Figure \ref{fig:SS_v_BB_seed_5}) as well as some additional distributional results (Figure \ref{fig:SS_v_BB_MC_appendix}) are relegated to Appendix \ref{app:single_path_seed_5} and \ref{app:MC_extra}. 

\paragraph{Pathwise Analysis} As stated above, the proportion  $\frac{\pi^*_t}{\sqrt{\nu_t} F^*_t} =  \frac{({}^{t} \!L\delta)^S+\eta_t}{\theta\sqrt{\nu_t}}$
of the buffer fund  allocated to the risky asset and the nominal pension surplus $(p^*_t-p^{\min}_t)$ are not affected by demographics, as shown in Figures~\ref{subfig:SS_v_BB_seed_3_proportion}-\ref{subfig:SS_v_BB_seed_3_surplus}.\\
Demographics do, however, affect other key variables. First, as shown in Figure~\ref{subfig:SS_v_BB_seed_3_Zu}, the buffer utility weight $Z^u$ decreases more rapidly in the baby boom scenario due to higher pension expenditures, diminishing the weight placed on fund preservation relative to pension payments and leading to earlier depletion (see the definition of $\tau$ in Equation \eqref{deftau}). The volatility observed in $Z^u$ stems from the non-zero buffer fund preference volatility parameter $\delta$. Second, while the nominal surplus per retiree remains constant, the buffer fund depletes faster when the same surplus must be distributed to a growing number of retirees, see Figure~\ref{subfig:SS_v_BB_seed_3_buffer_fund}. This is particularly salient in the pessimistic scenario (Figure \ref{fig:SS_v_BB_seed_5}).

Figures~\ref{subfig:SS_v_BB_seed_3_p_star}-\ref{subfig:SS_v_BB_seed_3_BR} illustrate that the minimum PAYG pension, chosen to be equal to the sustainable pure PAYG pension $p_t^{\min} = \alpha \mathfrak{e}_t / DR_t$ is also affected by demographics. Note that in the baby boom scenario, the rise in $DR$ is partially offset by wage growth over time. The buffer fund enables a pension surplus to be paid, substantially increasing the benefit ratio \eqref{eq:BR}. In the fund depletes, the  system reverts to paying only the pure PAYG pension $p^{\min}_t$ (Figure~\ref{subfig:SS_v_BB_seed_5_pstar}-\ref{subfig:SS_v_BB_seed_5_benefit_ratio}). Since the nominal surplus is identical across demographic scenarios, the relative improvement from the buffer fund is more pronounced in the baby boom case, given the lower baseline $p^{\min}_t$. 

\paragraph{Distribution of optimal strategy} 
 To go further,  Figure \ref{fig:SS_v_BB_MC} presents a more  comprehensive analysis of the optimal buffer fund policy over time, based on $N = 10{,}000$ Monte Carlo simulations 
under both the SS and BB demographic scenarios. The evolution of the buffer fund,  $Z^u$  and the benefit ratio are presented in Figure \ref{fig:SS_v_BB_MC_appendix} in the Appendix. \\
In all figures, solid lines represent the mean, dashed lines 
the median, and the shaded area indicates the interquartile range IQR  (25th-75th percentile). In addition, all panels display the empirical survival function of the buffer depletion time 
$\tau$ (dotted lines) under the SS scenario (Scenario 1, black) and the BB 
scenario (Scenario 2, red), as defined in  Proposition~\ref{PropTau}.\\
The formula makes it clear that the larger the retiree population,  the earlier the fund is depleted. 
Under the SS scenario, the average buffer depletion time is approximately 28  years, compared to just over 23 years under the BB scenario,  
reflecting the heavier pension burden imposed by the baby boom cohort. To illustrate, after 20 years, more than 80\% of funds remain solvent under the SS scenario, compared to 64\% under the BB scenario. The influence of the buffer fund utility sensitivity $\delta$ is further discussed in Section~\ref{subsec:sustainability}.\footnote{The sensitivity analysis with respect to the risk aversion parameter $\theta$ is outlined in Appendix \ref{app:theta}.}\\
Figure~\ref{subfig:SS_v_BB_MC_proportion} displays the evolution of the proportion 
of the buffer fund invested in the risky asset,  conditional on the buffer fund 
remaining solvent, which makes observations noisier near the end of the horizon. As established in the previous paragraph, the optimal investment strategy is identical across both demographic scenarios and remains stable with the IQR ranging between 35\% and 65\% with a median allocation close to 50\%. 

The yearly relative pension increase attributable to the buffer fund is presented in Figure~\ref{subfig:SS_v_BB_MC_relative_surplus}. Consistent with the single simulation analysis, the relative pension increase is more pronounced in the BB scenario than in the SS scenario, driven by the lower PAYG pension baseline. 
Conditional on fund survival, this increase rises on average from the initially calibrated 5\% (used to determine $Z^u_0$) to approximately 7\% at the median depletion time in the BB scenario, while it remains slightly above 5\% in the SS scenario. The evolution of the benefit ratio (BR) is presented in Figure~\ref{subfig:SS_v_BB_MC_benefit_ratio} in the Appendix, showing an average increase of approximately 2.5-3\% in both scenarios. Table \ref{tab:eair_combined} (Panel A) further confirms the greater effect and added value of the buffer fund in the baby-boom scenario since the relative increase in the equivalent annual indexation rate is greater. {\color{black}Notably, the EAIR $\overline{y^{\min}_t}$ grows over time because $p_t^{\min}$ depends linearly on $1/DR_t$. Therefore, the decrease in $p_t^{\min}$ induced by demographic ageing will be greatest during the first 10-20 years.}

Finally, Figures~\ref{subfig:SS_v_BB_MC_p_tot_withbuffer} 
and~\ref{subfig:SS_v_BB_MC_p_tot_withoutbuffer} display the yearly pension amount $p^*_t$ under two complementary conditioning sets: scenarios in which the buffer fund remains solvent at time $t$ 
(Figure~\ref{subfig:SS_v_BB_MC_p_tot_withbuffer}), and scenarios in which it has already been depleted by time $t$ (Figure~\ref{subfig:SS_v_BB_MC_p_tot_withoutbuffer}) (the lines do not start at time 0 since all buffer funds stay solvent for a number of years). The average minimum PAYG pension $p^{\min}_t$ is displayed as 
a blue dashed line in both figures  and for both demographic scenarios (SS and BB). 
Figure~\ref{subfig:SS_v_BB_MC_p_tot_withbuffer} confirms the expected result: conditional on the buffer fund remaining solvent, retirees receive on average  a higher pension than under the pure PAYG scheme, reflecting the additional  income provided by the buffer fund. 
Figure~\ref{subfig:SS_v_BB_MC_p_tot_withoutbuffer} reveals a more striking and important property of the optimal policy. Even after the buffer fund is depleted, that is, when retirees no longer benefit from any buffer fund supplement, the average pension $\mathbb{E}[p^*_t \mid F_t = 0]$ remains higher than the average pension $\mathbb{E}[p^{\min}_t]$ paid by the pure PAYG scheme. This  is a  consequence of the choice $\delta^{\mathfrak{e}} = -0.2 \neq 0$ in Equation~\eqref{eq:SDEZut}, which creates a dependency between the nominal pension surplus, the depletion time, and the wage process. This choice of $\delta^{\mathfrak{e}}$ mitigates economic risk by concentrating surplus payments in unfavorable wage, and hence low PAYG pension, scenarios. The impact of $\delta$ on the distribution of pension 
amounts is further discussed in the next section and in 
Figures~\ref{fig:bb_delta_0_v_delta_base_dist_pension} 
and~\ref{fig:bb_delta_4_-_v_delta_4_+_dist_pension}.

\begin{figure}
\centering
	\setfolder{fig_24feb_delta_base_S1_SS_v_S2_BB_seed_3}
\begin{subfigure}[t]{0.49\textwidth}
\centering
\includegraphics[width=\textwidth,height=0.65\textheight,keepaspectratio]{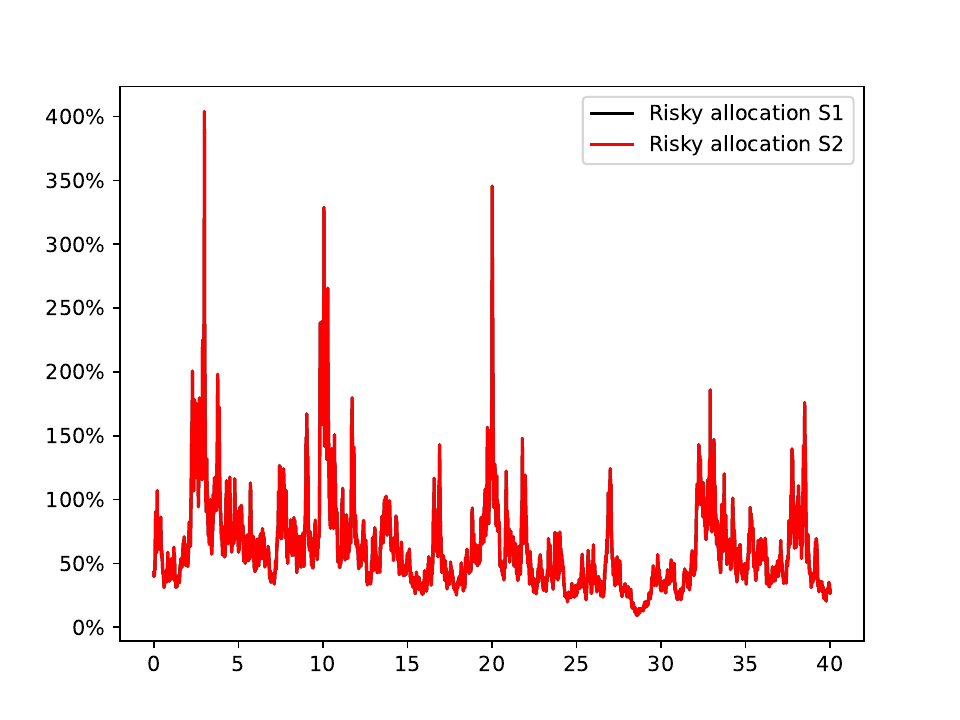}
\vspace{-0.2cm}\caption{Proportion risky investment $\left(\frac{\pi^*_t}{\sqrt{\nu_t}F^*_t}\right)$}\label{subfig:SS_v_BB_seed_3_proportion}
\end{subfigure}
\begin{subfigure}[t]{0.49\textwidth}
\centering
\includegraphics[width=\textwidth,height=0.65\textheight,keepaspectratio]{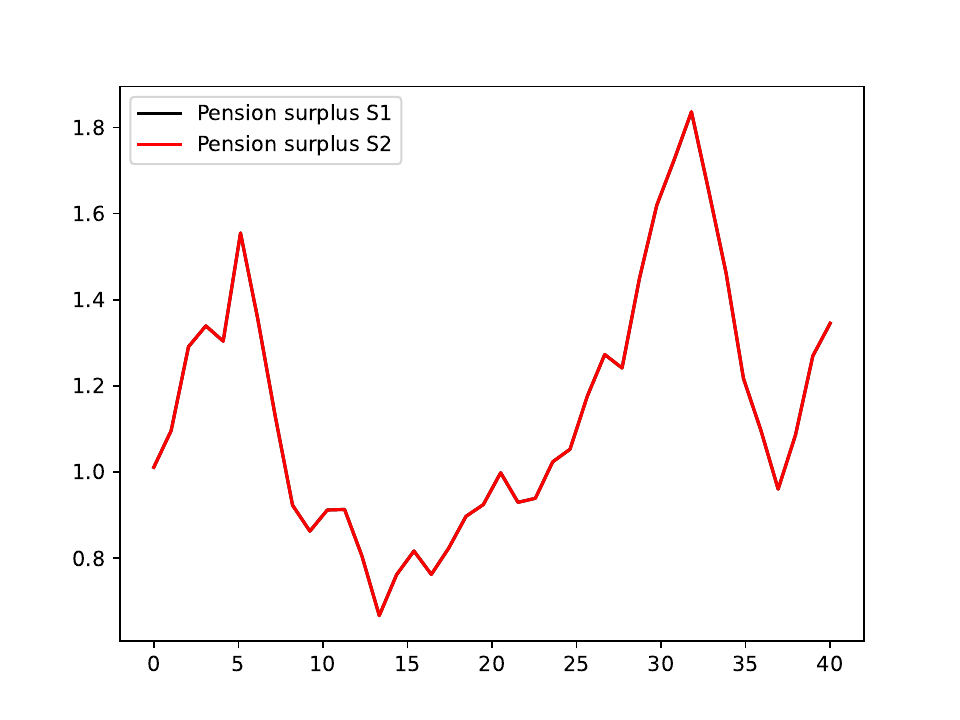}
\vspace{-0.2cm}\caption{Pension surplus $p^*_t-p^{\min}_t$}\label{subfig:SS_v_BB_seed_3_surplus}
\end{subfigure}
\begin{subfigure}[t]{0.49\textwidth}
\centering
\includegraphics[width=\textwidth,height=0.65\textheight,keepaspectratio]{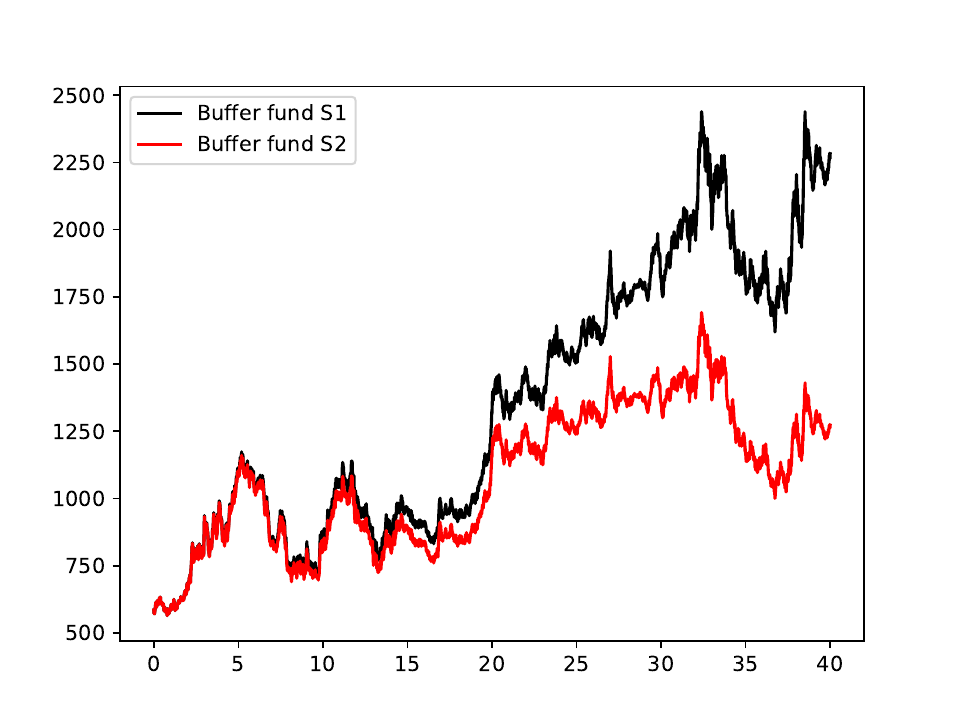}
\vspace{-0.2cm}\caption{Buffer fund $F^*_t$}\label{subfig:SS_v_BB_seed_3_buffer_fund}
\end{subfigure}\hfill
\begin{subfigure}[t]{0.49\textwidth}
\centering
\includegraphics[width=\textwidth,height=0.65\textheight,keepaspectratio]{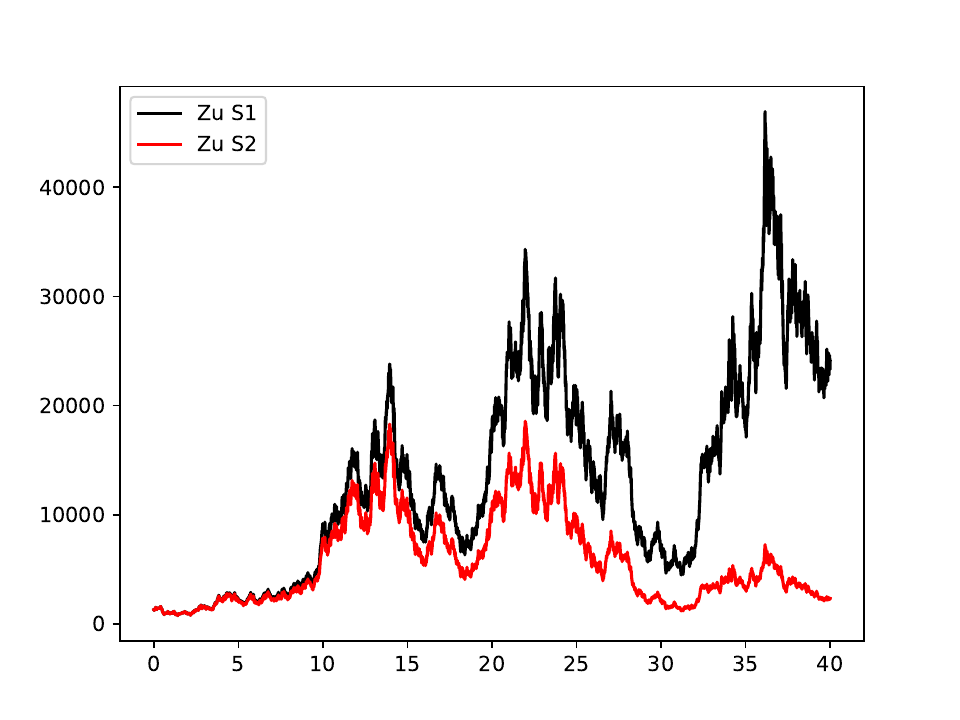}
\caption{Buffer fund utility weight $Z^u_t$}\label{subfig:SS_v_BB_seed_3_Zu}
\end{subfigure}\hfill
\begin{subfigure}[t]{0.49\textwidth}
\centering
\includegraphics[width=\textwidth,height=0.65\textheight,keepaspectratio]{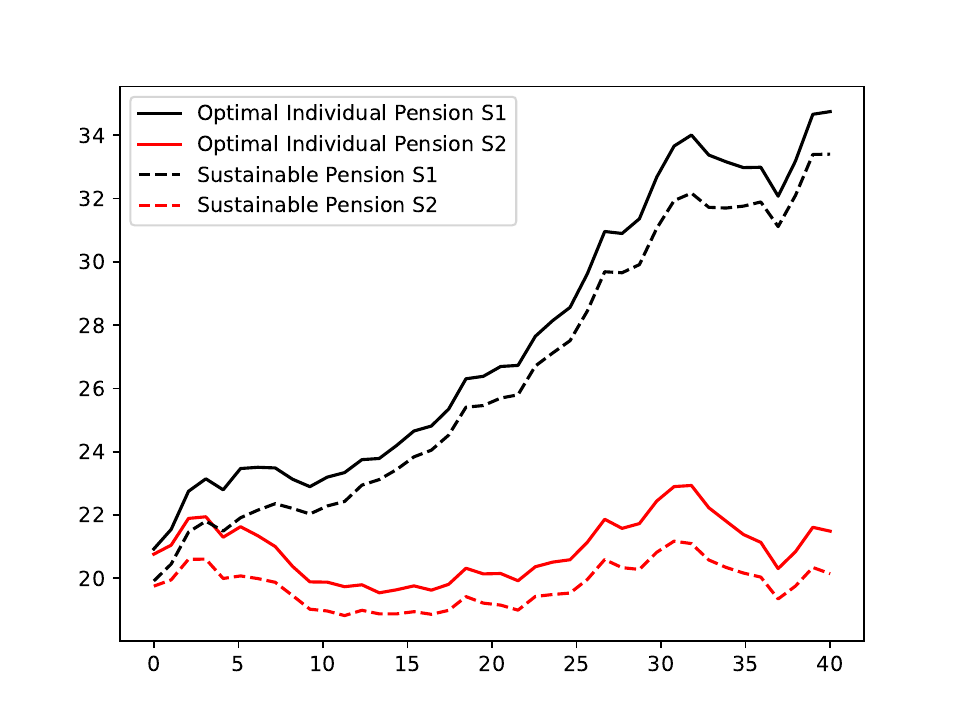}
\vspace{-0.2cm}\caption{Individual pensions $p^*_t$}\label{subfig:SS_v_BB_seed_3_p_star}
\end{subfigure}\hfill
\begin{subfigure}[t]{0.49\textwidth}
\centering
\includegraphics[width=\textwidth,height=0.65\textheight,keepaspectratio]{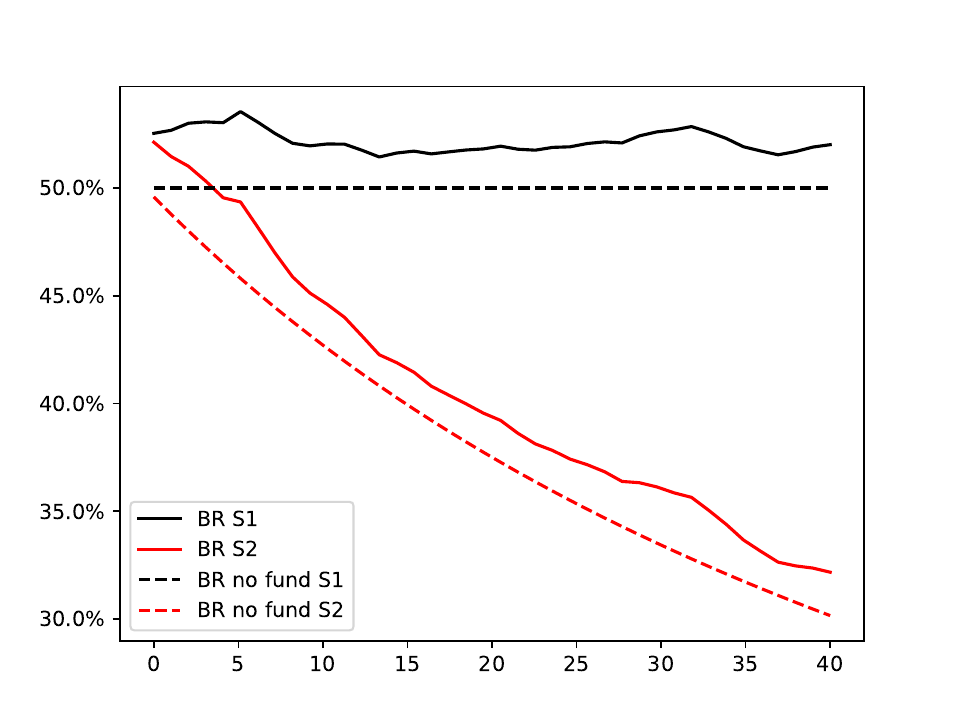}
\vspace{-0.2cm}\caption{Benefit ratio $\text{BR}_t=\frac{p^*_t}{\mathfrak{e}_t}$}\label{subfig:SS_v_BB_seed_3_BR}
\end{subfigure}
\captionsetup{font=footnotesize}
\caption{Comparison Between Steady State (black) and Baby Boom ({\color{red} red}) – Seed 3 (optimistic)\\
}\label{fig:SS_v_BB_seed_3}
\end{figure}

\begin{figure}
\centering
    \setfolder{fig_26feb_delta_base_S1_SS_v_S2_BB_MC}
\begin{subfigure}[t]{0.49\textwidth}
\centering
\includegraphics[width=\textwidth,keepaspectratio]{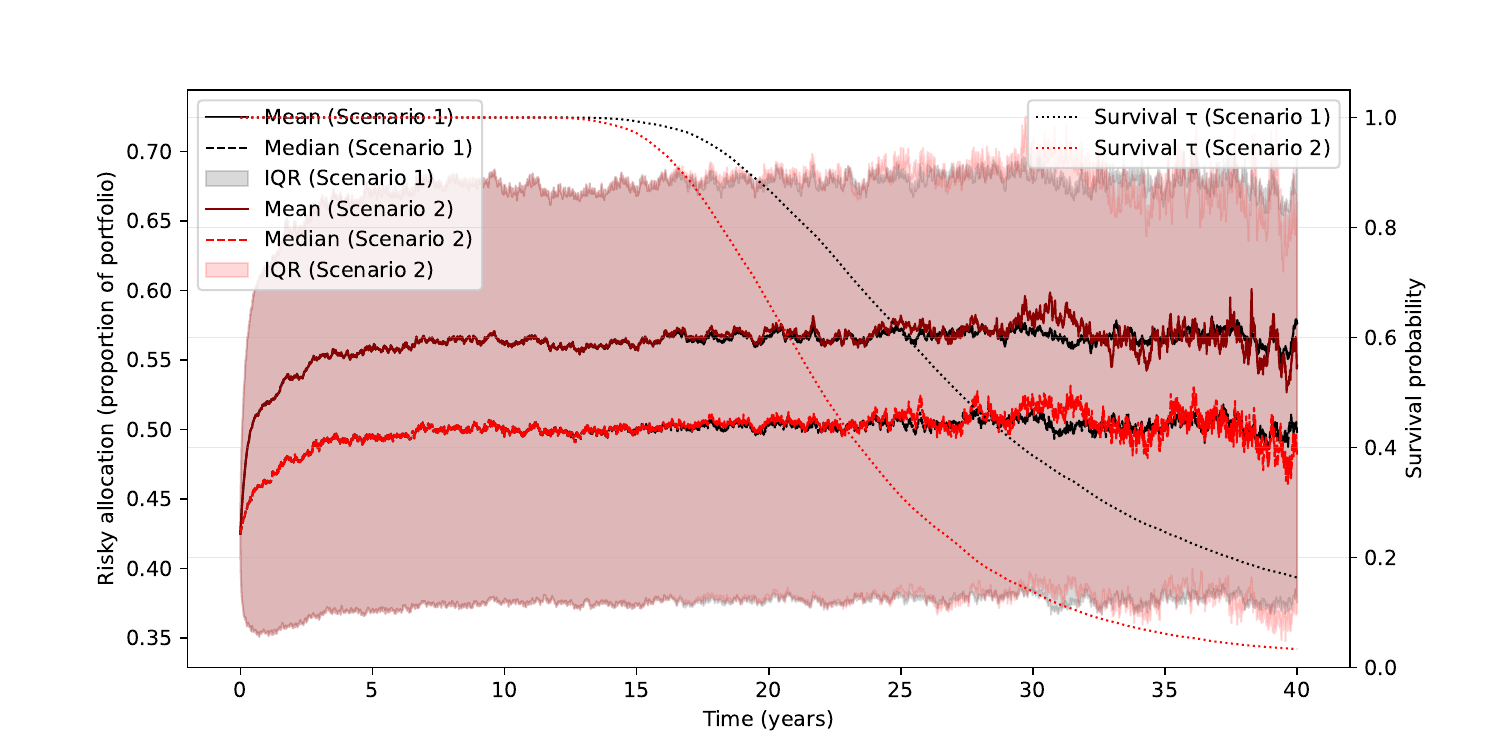}
\vspace{-0.2cm}\caption{Proportion risky investment $\left(\frac{\pi^*_t}{\sqrt{\nu_t}F^*_t}\right)$ \\conditional to $F^*_t>0$}\label{subfig:SS_v_BB_MC_proportion}
\end{subfigure}
\begin{subfigure}[t]{0.49\textwidth}
\centering
\includegraphics[width=\textwidth,keepaspectratio]{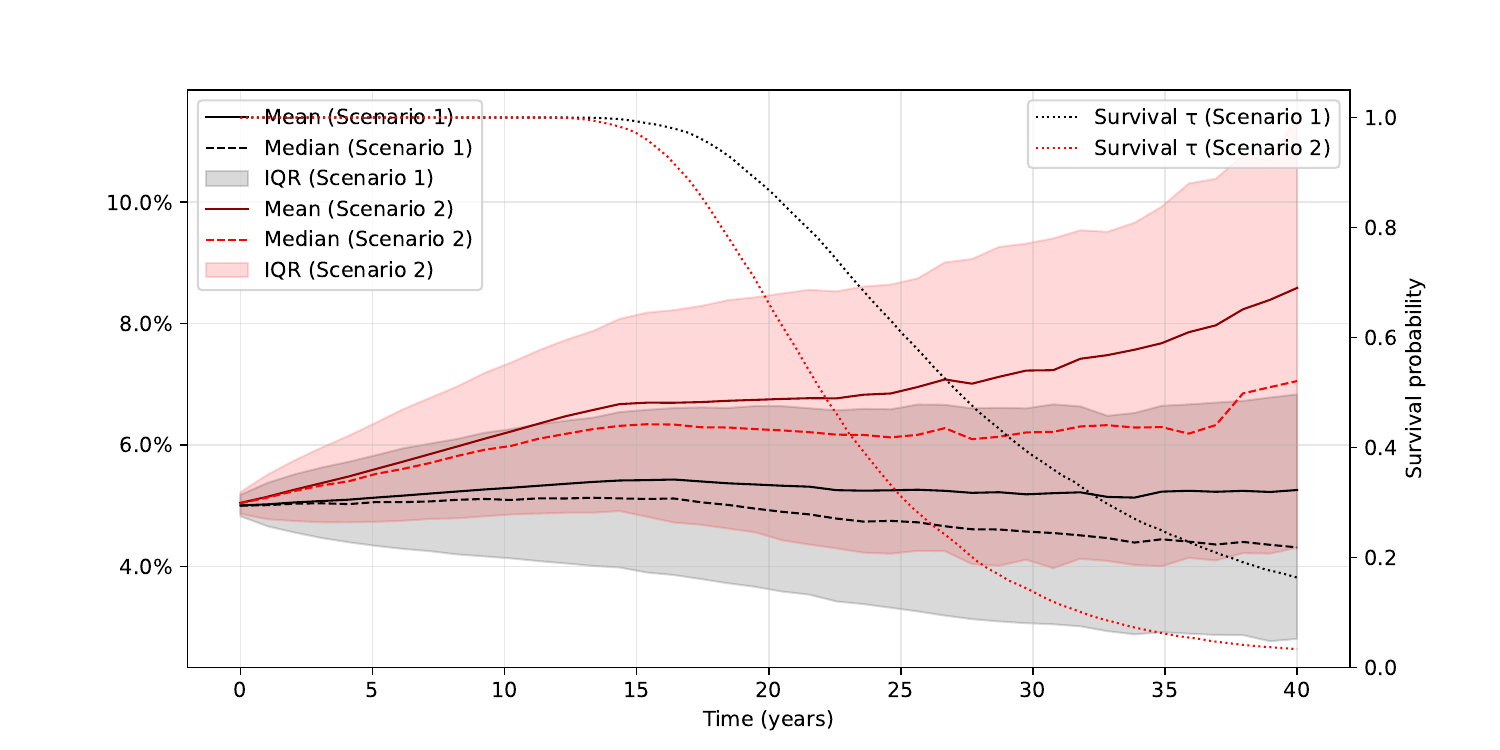}
\vspace{-0.2cm}\caption{Relative surplus $\rho_t=\frac{p^*_t-p^{\min}_t}{p^{\min}_t}$ \\conditional to $F^*_t>0$}\label{subfig:SS_v_BB_MC_relative_surplus}
\end{subfigure}\hfill 
\vspace{0.3cm}
\begin{subfigure}[t]{0.49\textwidth}
\centering
\includegraphics[width=\textwidth,keepaspectratio]{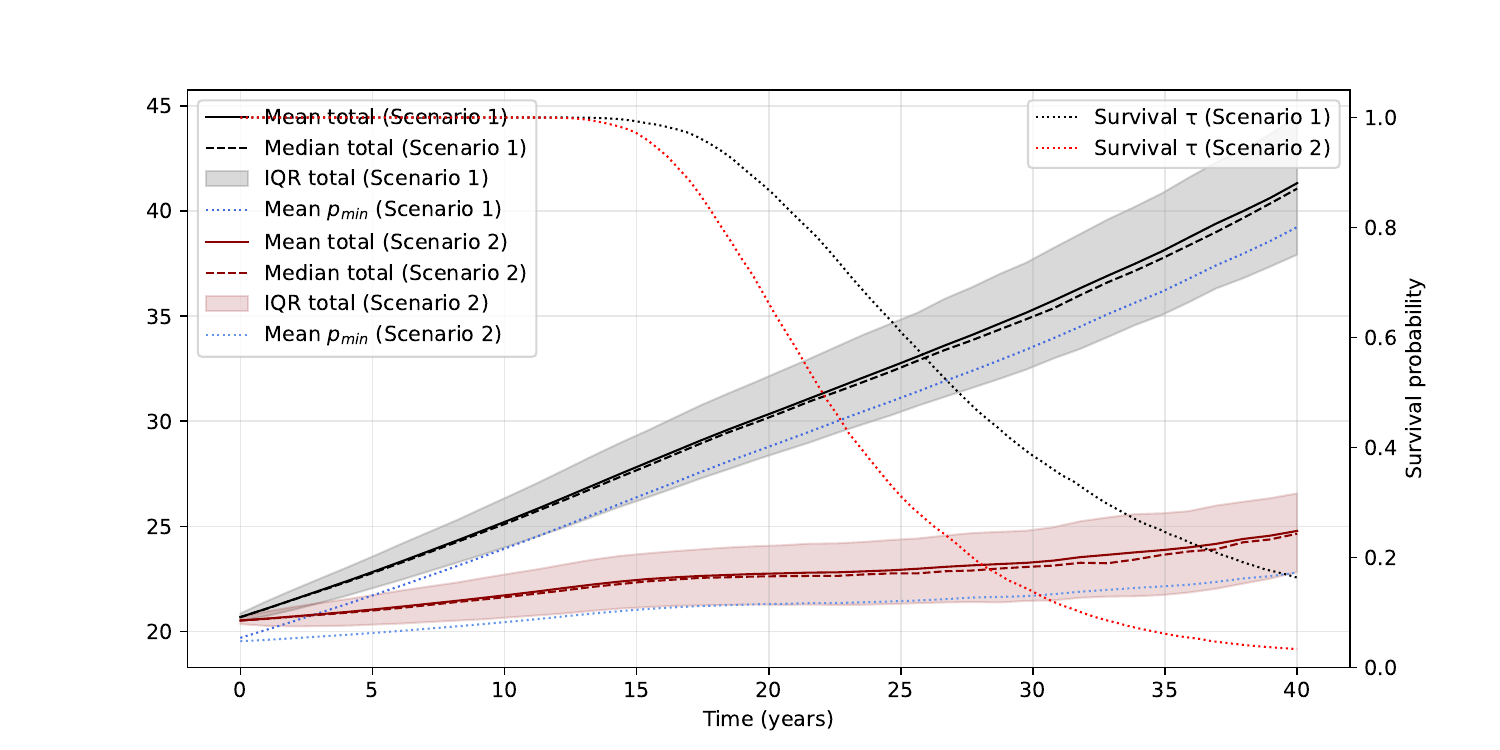}
\vspace{-0.2cm}\caption{Individual pension conditional to $F^*_t>0$}\label{subfig:SS_v_BB_MC_p_tot_withbuffer}
\end{subfigure}
\vspace{0.3cm}
\begin{subfigure}[t]{0.49\textwidth}
\centering
\includegraphics[width=\textwidth,keepaspectratio]{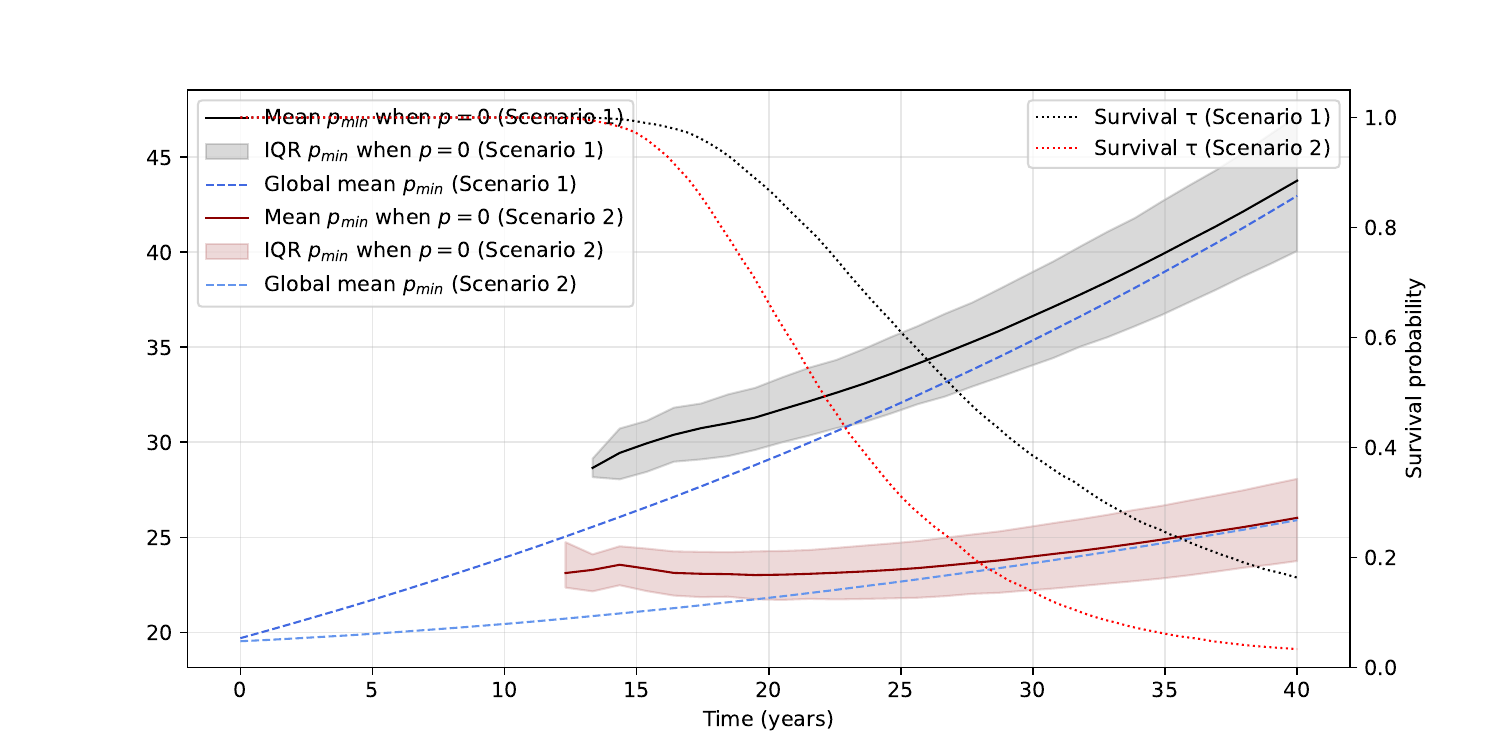}
\vspace{-0.2cm}\caption{Individual pension conditional to $F^*_t=0$}\label{subfig:SS_v_BB_MC_p_tot_withoutbuffer}
\end{subfigure}
\vspace{0.3cm}
\captionsetup{font=footnotesize}
\caption{Comparison Between Steady State (black) and Baby Boom ({\color{red} red}) – Monte Carlo }\label{fig:SS_v_BB_MC}
\end{figure}

\clearpage

\subsection{Sustainability}\label{subsec:sustainability}

This subsection shows the effect of $Z^u_0$, the initial buffer fund weight, and $\delta$, the preference parameters to hedgeable and non-hedgeable risk, on the buffer depletion time $\tau$ under the BB scenario.

\paragraph{Sensitivity to Non-Hedgeable Risk Preferences $\delta$}

\begin{figure}[h!]
\centering
\setfolder{fig_24feb_bb_S1_delta_0_v_S2_delta_base_MC}
\begin{subfigure}[t]{0.49\textwidth}
\centering
\includegraphics[width=\textwidth,height=0.65\textheight,keepaspectratio]{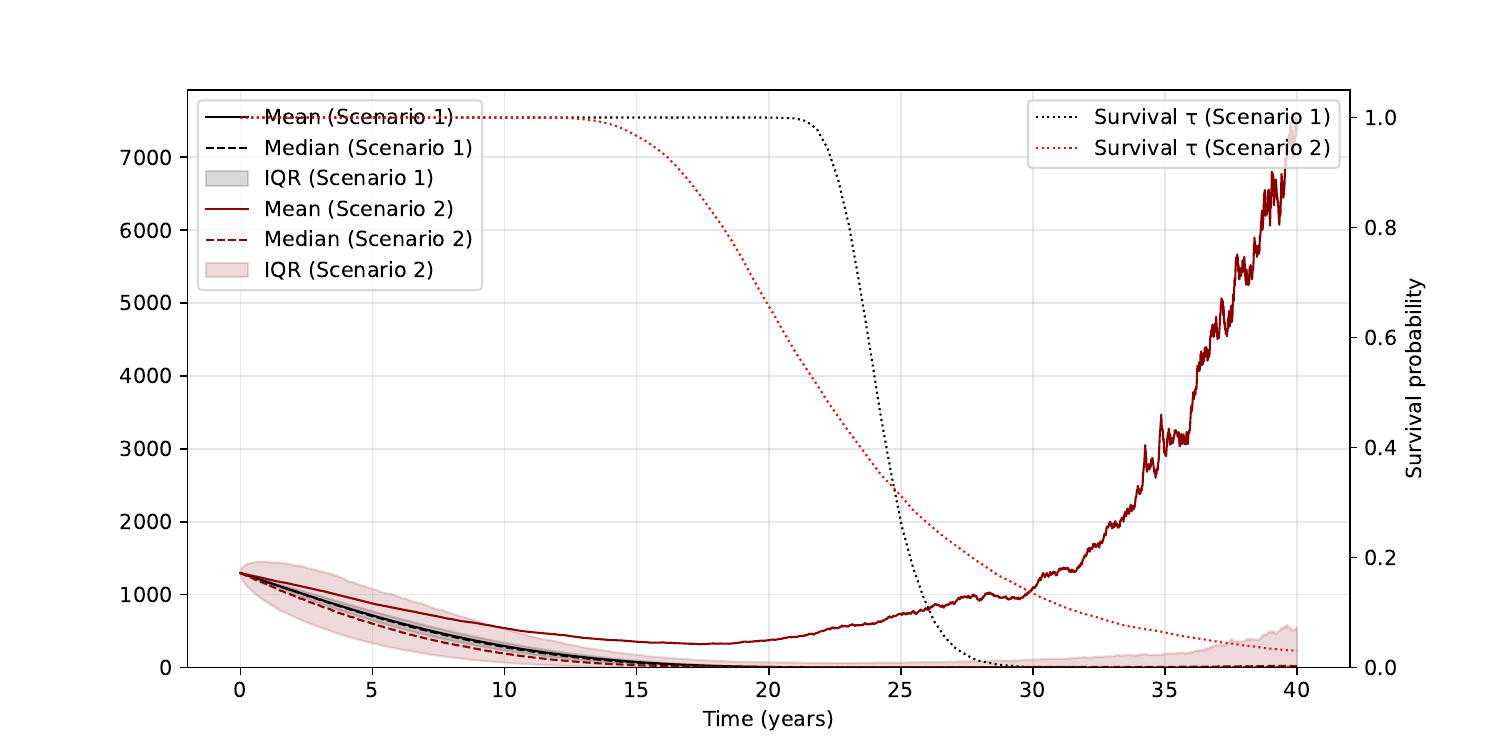}
\vspace{-0.2cm}\caption{Buffer fund utility weight $Z^u_t | Z^u_t > 0 $}\label{subfig:bb_delta_0_v_delta_base_Zut}
\end{subfigure}
\begin{subfigure}[t]{0.49\textwidth}
\centering
\includegraphics[width=\textwidth,height=0.65\textheight,keepaspectratio]{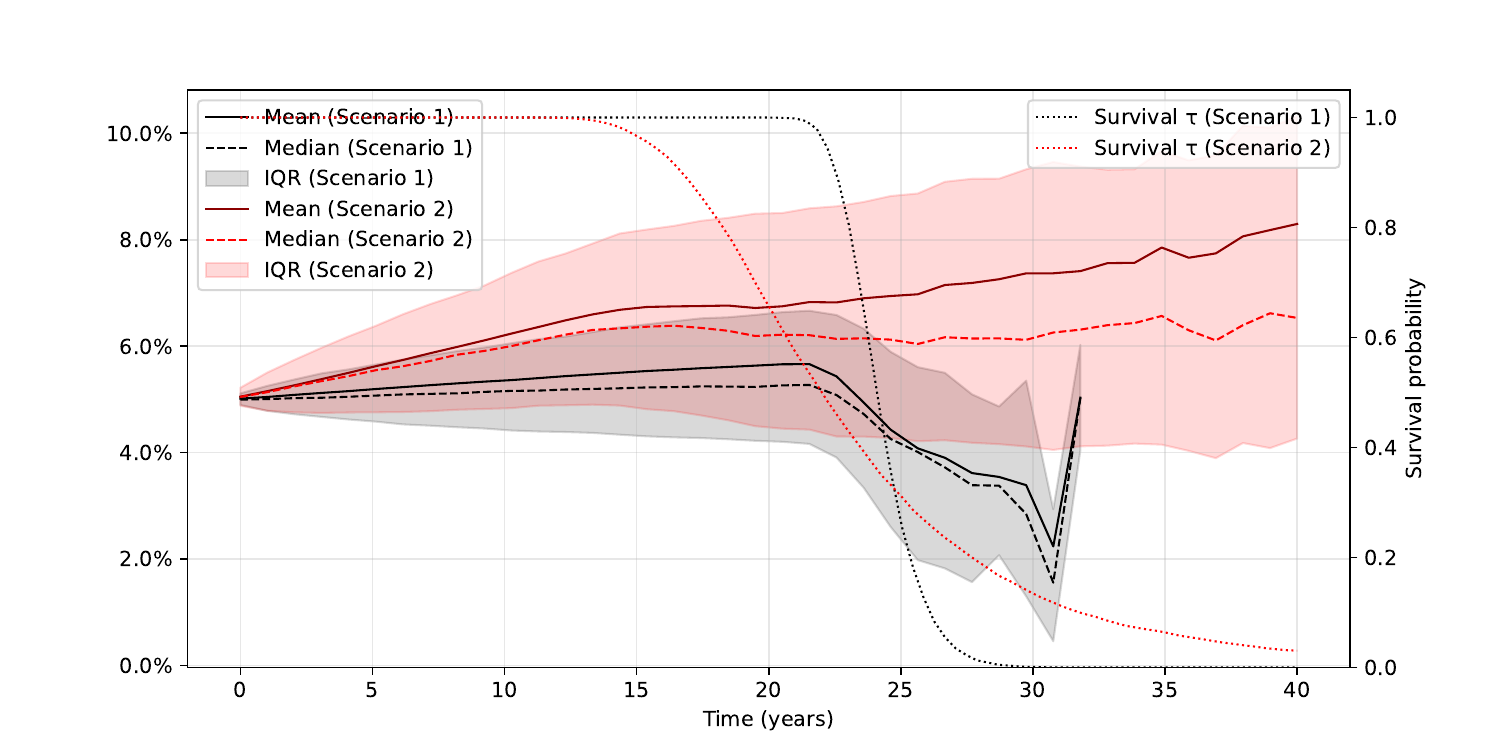}
\vspace{-0.2cm}\caption{Relative surplus $\rho_t=\frac{p^*_t-p^{\min}_t}{p^{\min}_t}$}\label{subfig:bb_delta_0_v_delta_base_rho}
\end{subfigure}
\begin{subfigure}[t]{0.49\textwidth}
\centering
\includegraphics[width=\textwidth,keepaspectratio]{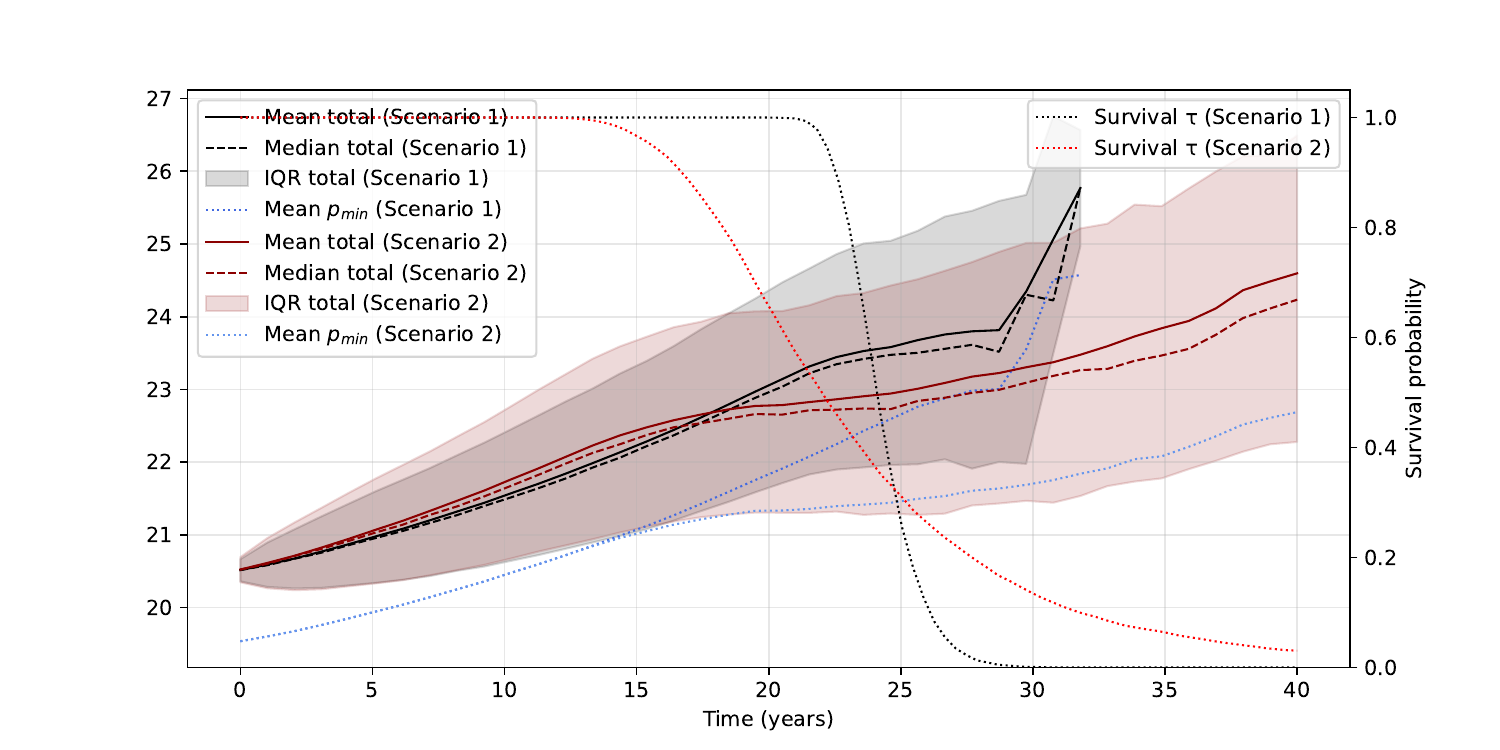}
\vspace{-0.2cm}\caption{Individual pension conditional to $F^*_t>0$}\label{subfig:bb_delta_0_v_delta_base_p_tot}
\end{subfigure}
\vspace{0.3cm}
\begin{subfigure}[t]{0.49\textwidth}
\centering
\includegraphics[width=\textwidth,keepaspectratio]{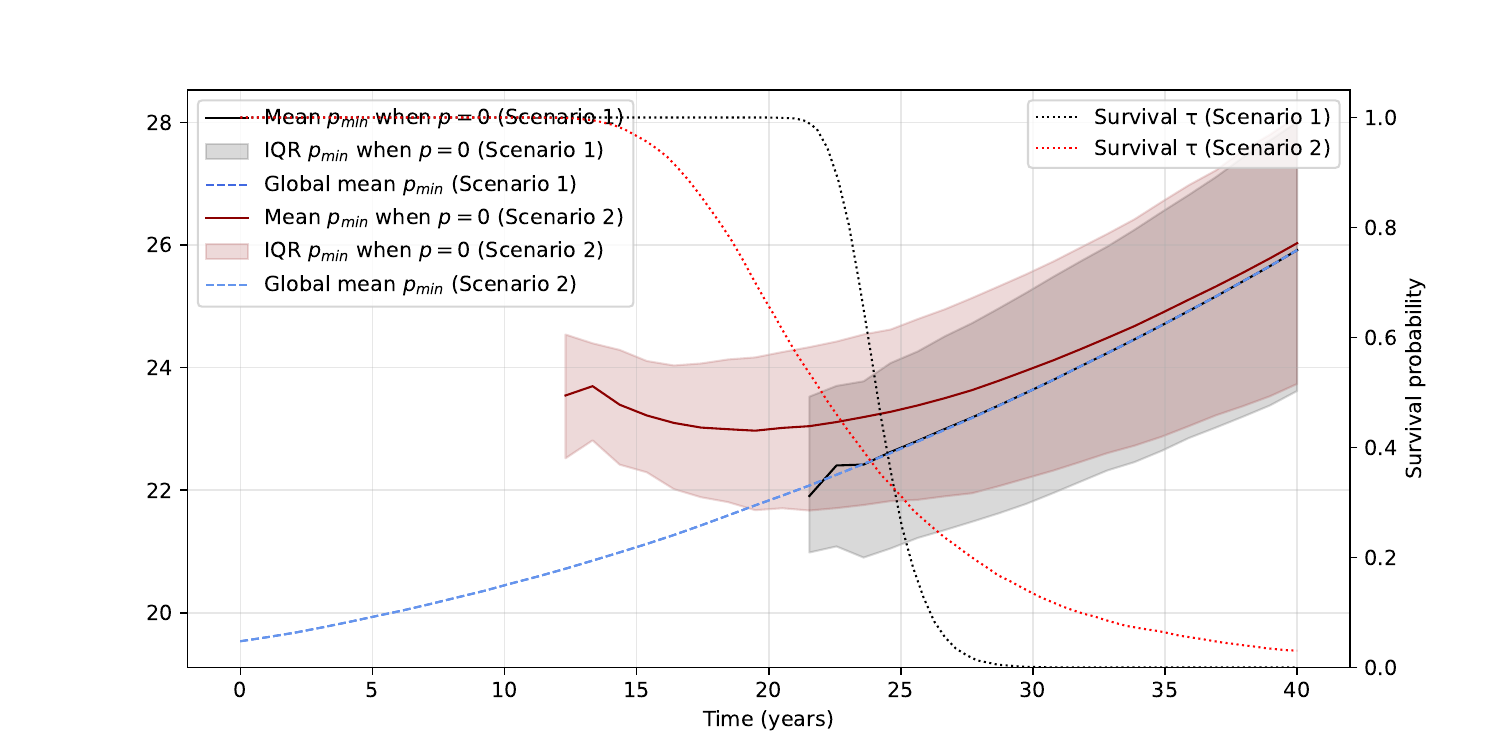}
\vspace{-0.2cm}\caption{Individual pension conditional to $F^*_t=0$}\label{subfig:bb_delta_0_v_delta_base_p_tot_zero}
\end{subfigure}
\captionsetup{font=footnotesize}
\caption{Comparison Between $\delta={}^{t} \! (0,0,0,0) $ (black) and $\delta={}^{t} \! (0, -0.2, -0.2,-0.2) $ ({\color{red} red}) – Monte Carlo}\label{fig:bb_delta_0_v_delta_base}
\end{figure}

\begin{figure}[h!]
\centering
\setfolder{fig_24feb_bb_S1_delta_0_v_S2_delta_base_MC}
\captionsetup{font=footnotesize}
\begin{subfigure}[t]{0.98\textwidth}
\centering
\includegraphics[width=\textwidth,height=0.65\textheight,keepaspectratio]{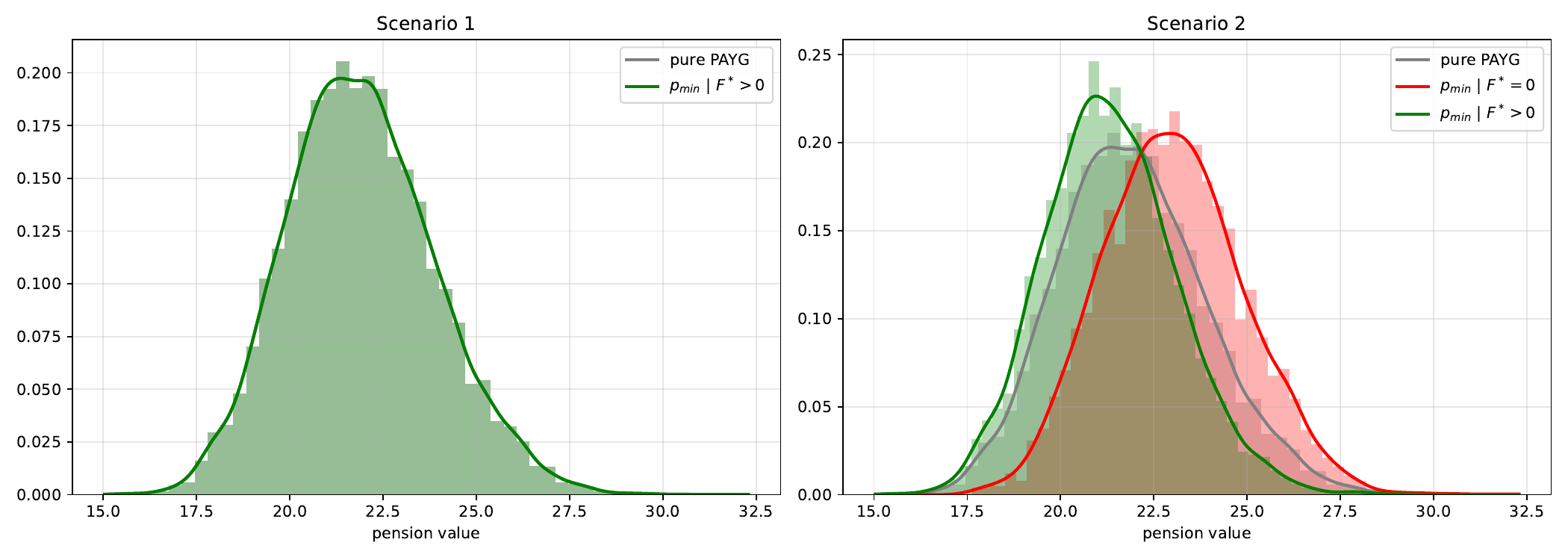}
\vspace{-0.2cm}\caption{Distribution of {minimum pension: }unconditional $p^{\min}$ (gray), and conditional $p^{\min}|F^*>0$ (green), $p^{\min}|F^*=0$ (red) }\label{subfig:bb_delta_0_v_delta_base_dist_pension_pmin}
\end{subfigure}
\begin{subfigure}[t]{0.98\textwidth}
\centering
\includegraphics[width=\textwidth,height=0.65\textheight,keepaspectratio]{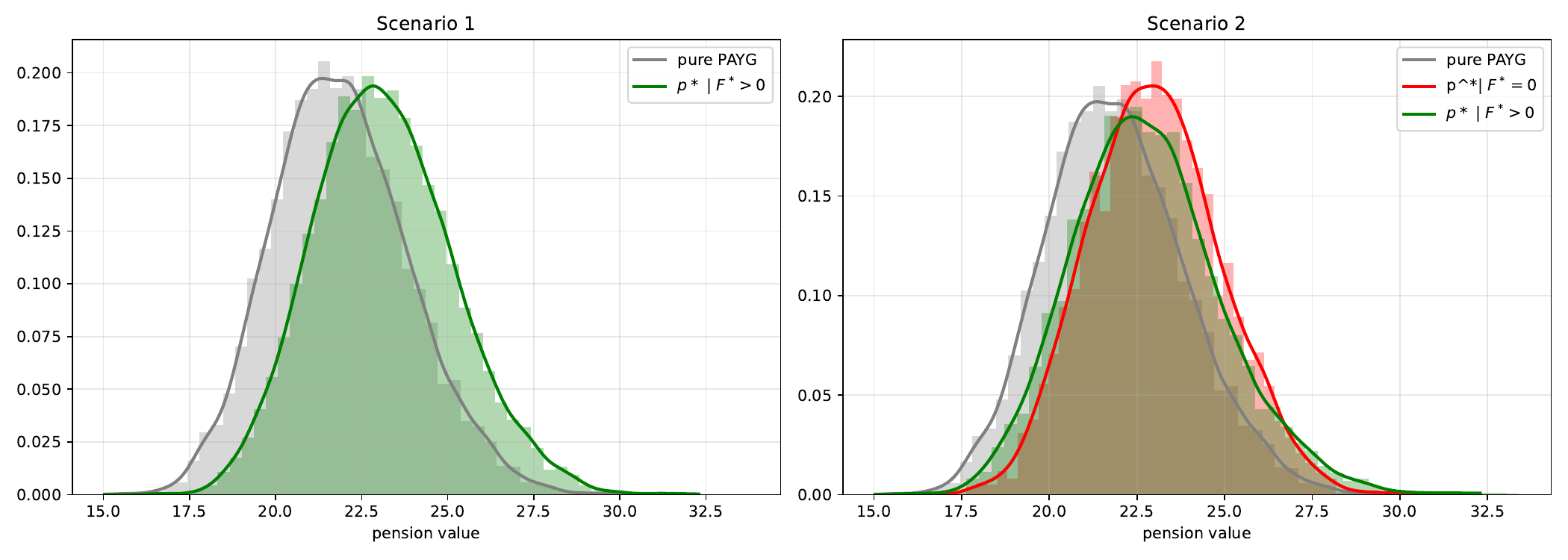}
\vspace{-0.2cm}\caption{Distribution of {total pension: }unconditional $p^{*}$ (gray), and conditional $p^{*}|F^*>0$ (green), $p^{*}|F^*=0$ (red)}\label{subfig:bb_delta_0_v_delta_base_dist_pension_pstar}
\end{subfigure}
\caption{Empirical distribution of unconditional $p^{{j}}$ (gray) and conditional $p^{j}|F^*>0$ (green) and $p^{j}|F^*=0$ ({\color{red} red}) for $j=\min$ (top) and $j=*$ (bottom) at time $t=20$ - comparison $\delta= {}^{t} \!(0,0,0,0) $ ({Scenario 1, }left) and $\delta={}^{t} \! (0, -0.2, -0.2,-0.2) $ ({Scenario 2, }right)}\label{fig:bb_delta_0_v_delta_base_dist_pension}
\end{figure}
\begin{figure}[h!]
\centering
\setfolder{fig_24feb_bb_S1_delta_4_-_v_S1_delta_4_+}
\captionsetup{font=footnotesize}
\begin{subfigure}[t]{0.49\textwidth}
\centering
\includegraphics[width=\textwidth,height=0.65\textheight,keepaspectratio]{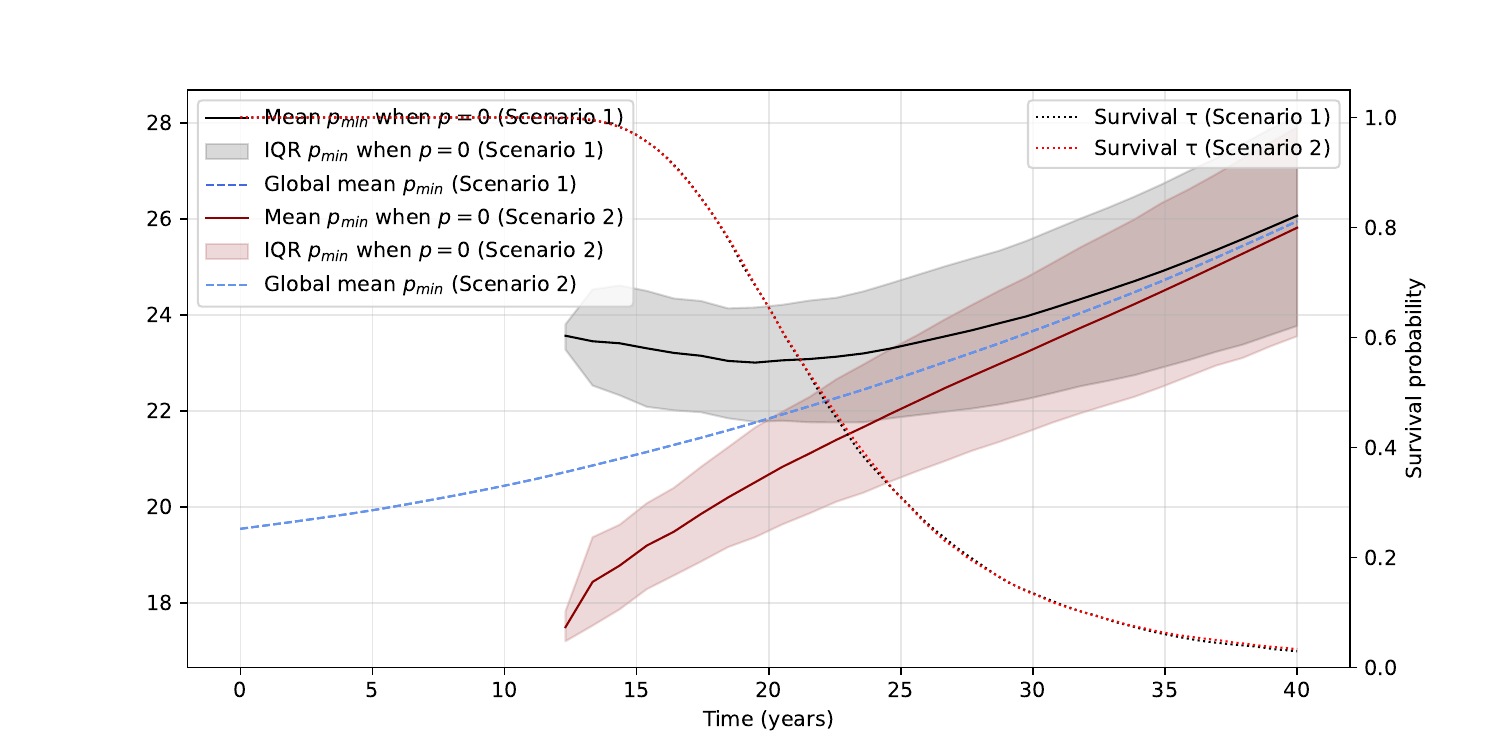}
\vspace{-0.2cm}\caption{Minimum pension conditional on $F^*_t=0$}\label{subfig:bb_delta_4_-_v_delta_4_+_dist_pension_p*_F_eq_0}
\end{subfigure}
\begin{subfigure}[t]{0.49\textwidth}
\centering
\includegraphics[width=\textwidth,height=0.65\textheight,keepaspectratio]{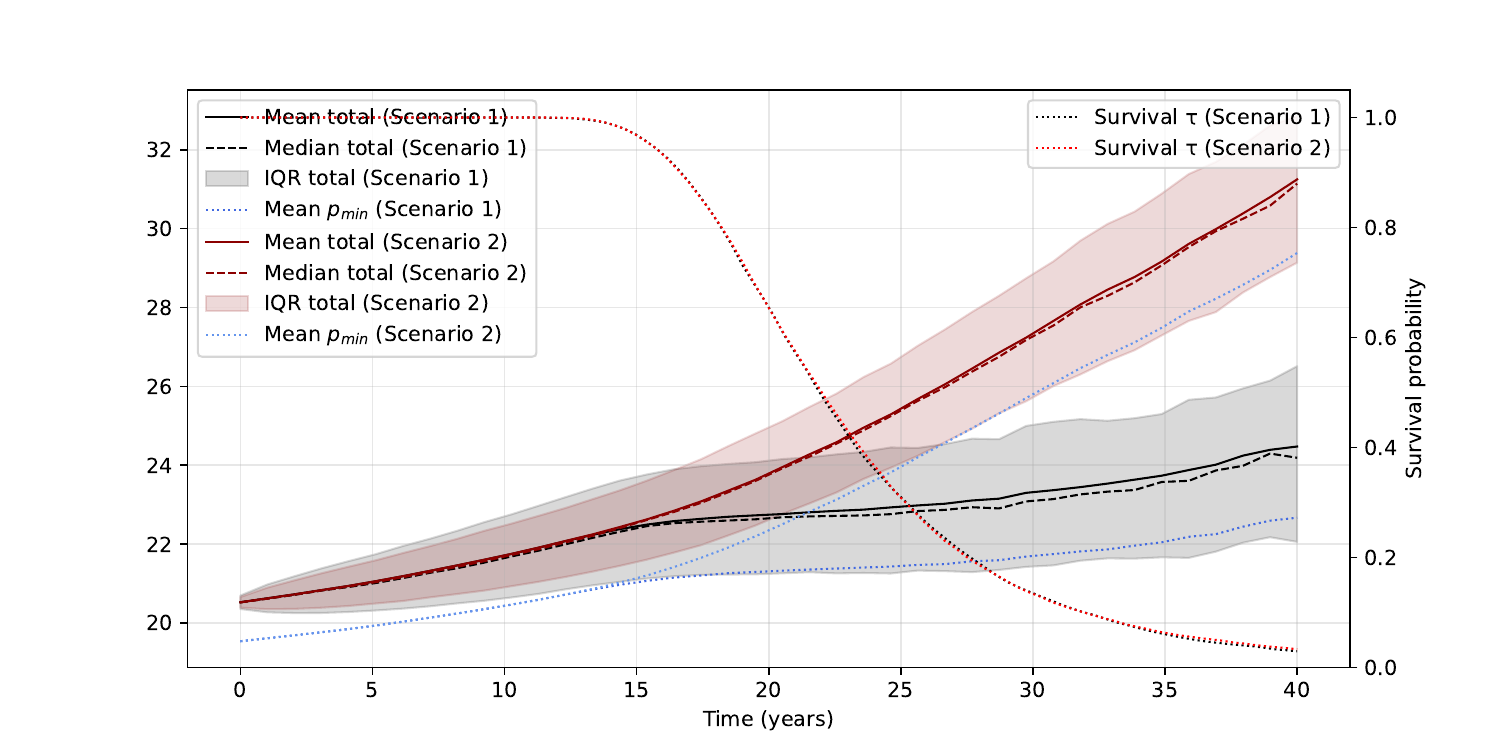}
\vspace{-0.2cm}\caption{Individual pension conditional on $F^*_t>0$}\label{subfig:bb_delta_4_-_v_delta_4_+_dist_pension_p*_F_geq_0}
\end{subfigure}\\
\begin{subfigure}[t]{0.98\textwidth}
\centering
\includegraphics[width=\textwidth,height=0.65\textheight,keepaspectratio]{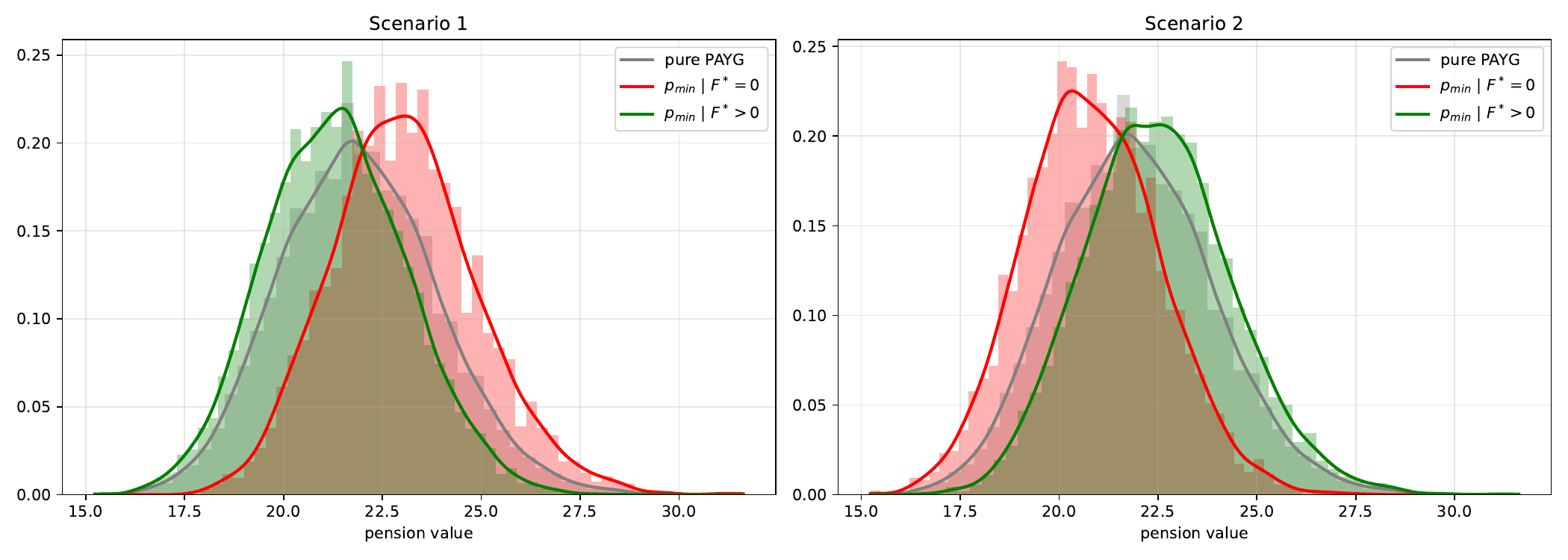}
\vspace{-0.2cm}\caption{Distribution of {minimum pension:} unconditional $p^{\min}$ (gray), and conditional $p^{\min}|F^*>0$ (green), $p^{\min}|F^*=0$ (red) }\label{subfig:bb_delta_4_-_v_delta_4_+_dist_pension_pmin}
\end{subfigure}
\begin{subfigure}[t]{0.98\textwidth}
\centering
\includegraphics[width=\textwidth,height=0.65\textheight,keepaspectratio]{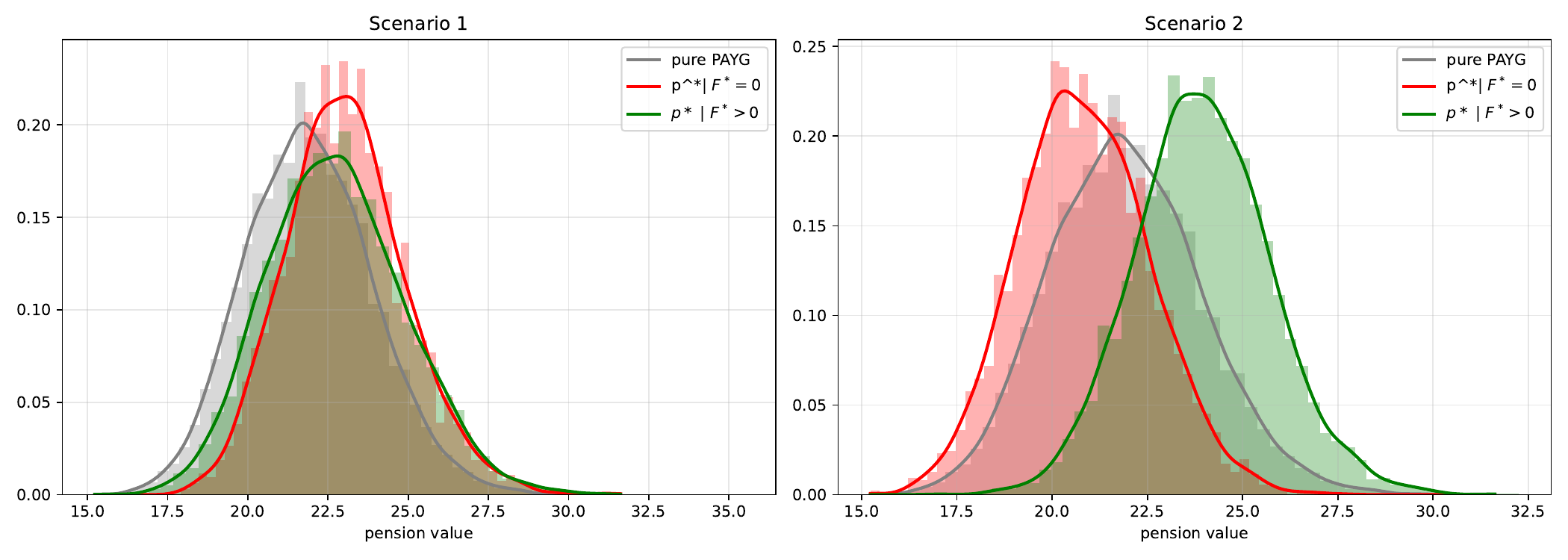}
\vspace{-0.2cm}\caption{Distribution of {total pension:} unconditional $p^{*}$ (gray), and conditional $p^{*}|F^*>0$ (green), $p^{*}|F^*=0$ (red)}\label{subfig:bb_delta_4_-_v_delta_4_+_dist_pension_pstar}
\end{subfigure}
\caption{{Values (top)} and empirical distribution of unconditional $p^{{j}}$ (gray) and conditional $p^{j}|F^*>0$ (green) and $p^{j}|F^*=0$ ({\color{red} red}) for $j=\min$ (middle) and $j=*$ (bottom) at time $t=20$ - comparison $\delta= {}^{t} \!(0, -0.2, -0.2,-0.2) $ ({Scenario 1}) and $\delta={}^{t} \! (0, -0.2, -0.2,+0.2) $ ({Scenario 2})}\label{fig:bb_delta_4_-_v_delta_4_+_dist_pension}
\end{figure}

We now investigate how the buffer fund sensitivity $\delta$ to financial and 
economic processes shapes the distribution of the buffer depletion time and  the pension surplus. 
First, Figure~\ref{fig:bb_delta_0_v_delta_base} illustrates differences in the buffer utility weight $Z^u$ and optimal processes for $\delta={}^t(0,0,0,0)$ (scenario 1, in black), and our base  $\delta={}^t(0,-0.2,-0.2,-0.2)$ (scenario 2, in red), under the BB demographic scenario.\\
When $\delta={}^t(0,0,0,0)$ everywhere, we obtain:
\begin{equation*}
Z_t^u = \xi_t^{-1}\left((Z^u_0)^{\frac{1}{\theta}} - \int_0^t N_s^r(Z_0e^{-\beta s}\xi_s)^{\frac{1}{\theta}} ds \right)^{\theta}, \quad \forall t \in [0,\tau^Z[,
\end{equation*}
with
\begin{align*}
\xi_t^{-1} = \exp\left(-\int_0^t \left((1-\theta)r_s + \frac{(1-\theta)}{2\theta}\eta_s^{2}\right)ds\right).
\end{align*}
Hence, when $\delta={}^t(0,0,0,0)$ $Z^u$ is  a finite variation (zero volatility) decreasing process with stochastic drift, 
 while  non-zero $\delta$ increases variation in $Z^u_t$, as clearly depicted in Figure~\ref{subfig:bb_delta_0_v_delta_base_Zut}.

Incorporating non-zero $\delta$ parameters allows greater responsiveness of the buffer fund weight to market and economic conditions, yielding higher pension surpluses for our chosen parametrization, as shown in Figure~\ref{subfig:bb_delta_0_v_delta_base_rho} and Table \ref{tab:eair_combined} (Panel B for $\delta=0$ versus Panel A (BB) for the base case). While average depletion times are similar (24.25 versus 
23.3 years), the variance increases dramatically from 2 ($\delta = 0$) to 37 (base case), reflecting the 
additional risk exposure via the hedgeable component and risk mitigation for 
non-hedgeable risks induced by non-zero buffer fund utility volatility.

As stated above,  allowing for a non-zero wages risk component $\delta^{\mathfrak{e}}$ introduces a dependence between $p^*$ and $p^{\min}$, so that the conditional distribution of the optimal pension differs according to whether the fund has depleted. This is clearly depicted in Figures~\ref{subfig:bb_delta_0_v_delta_base_p_tot}-\ref{subfig:bb_delta_0_v_delta_base_p_tot_zero}.

When the fund depletes, Figure~\ref{subfig:bb_delta_0_v_delta_base_p_tot_zero} shows that the average  pension in the base case exceeds that of the $\delta=0$ case, since a negative $\delta^{\mathfrak{e}}$ concentrates surplus payments in adverse  economic scenarios. Consequently, when the fund depletes, the remaining scenarios are those in which the minimum pension performs better, yielding a higher average conditional pension than in the $\delta=0$ case.
This is further illustrated in Figure~\ref{fig:bb_delta_0_v_delta_base_dist_pension}, which displays, for both the zero and non-zero volatility cases, the empirical 
distributions of $p^{\min}_t$ and $p^*_t$ under three conditioning sets: 
unconditional, conditional on the buffer fund being solvent ($F^*_t > 0$), and 
conditional on it being depleted ($F^*_t = 0$), at $t = 20$. Figure~\ref{subfig:bb_delta_0_v_delta_base_dist_pension_pmin} shows that under $\delta=0$ the conditional ($F_t^*>0$) and unconditional distributions of $p^{\min}_T$ coincide since $p^{\min}_T$  is independent of the buffer fund depletion time. Under the base case $\delta$, the distribution of $p^{\min}$ when the fund remains solvent is similar to the unconditional case, removing variability at the cost of slightly reduced upside, since this conditional case selects for lower-$p^{\min}$ scenarios. When $F^*=0$, the distribution shifts clearly to the right, confirming that the buffer fund provides surpluses precisely when $p^{\min}$ was lowest. For the total pension $p^*$ (Figure~\ref{subfig:bb_delta_0_v_delta_base_dist_pension_pstar}), introducing a buffer fund is unambiguously beneficial: a clear rightward shift is observed even in the $\delta=0$ case, and the pattern noted for $p^{\min}$ is confirmed for our base case $\delta$.

One might ask whether any non-zero value or sign of $\delta$ produces the same results. The answer is no. Figure~\ref{fig:bb_delta_4_-_v_delta_4_+_dist_pension} illustrates the effect of $\delta^{\mathfrak{e}}=-0.2$ (our base case) versus $\delta^{\mathfrak{e}}=+0.2$, with all other $\delta$ components unchanged.
The overall distribution of $p^*$ are similar.
Conditional on default,  the sign matters considerably. A negative $\delta^{\mathfrak{e}}$ continues to provide surpluses in scenarios where $p^{\min}$ is low, mitigating adverse economic outcomes. A positive $\delta^{\mathfrak{e}}$ does the opposite: it concentrates surpluses in scenarios where $p^{\min}$ is already high, exacerbating extremes rather than mitigating them. This is particularly striking in Figure~\ref{subfig:bb_delta_4_-_v_delta_4_+_dist_pension_pstar}, where a positive $\delta^{\mathfrak{e}}$ shifts the conditional distribution markedly to the left, implying that default scenarios leave retirees worse off than under no buffer fund at all. Note that this conditional dependence is specific to $\delta^{\mathfrak{e}}$, since $p^{\min}$ is directly linked to wages. The signs of $\delta^{r}$ or $\delta^{\nu}$ do not induce this type of dependence on the surplus, though $\delta^{\nu}$ naturally affects distributions through its influence on portfolio allocation.

\paragraph{Impact of initial buffer fund utility weight $Z^u_0$ on buffer fund depletion time}

\begin{figure}[ht!]
\centering
\setfolder{fig_27feb_bb_delta_base_MC_tau_vs_Z0u}
\begin{subfigure}[t]{0.46\textwidth}
\centering
\includegraphics[width=\textwidth,height=0.34\textheight,keepaspectratio]{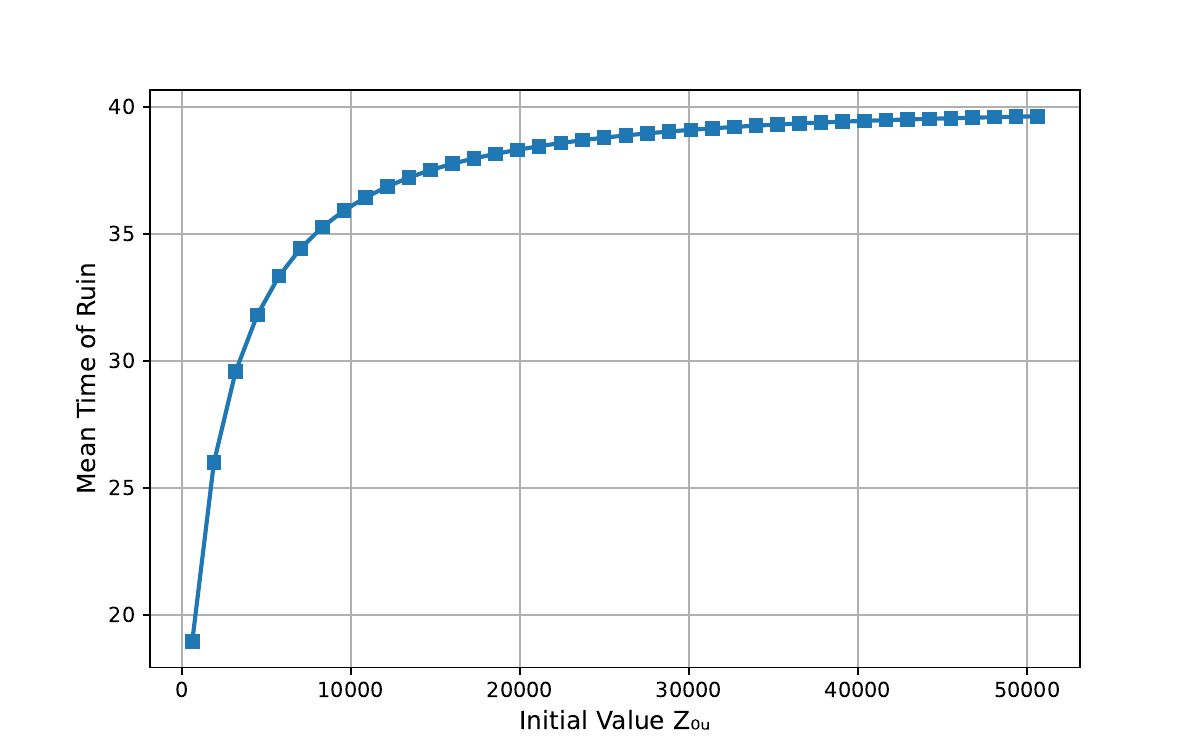}
\vspace{-0.2cm} \caption{Mean buffer depletion time $\E[\tau]$ versus initial buffer utility weight $Z^u_0$}\label{subfig:BB_tau_v_pension_sens_Z0u_mean_tau}
\end{subfigure}\hfill
\begin{subfigure}[t]{0.46\textwidth}
\centering
\includegraphics[width=\textwidth,height=0.34\textheight,keepaspectratio]{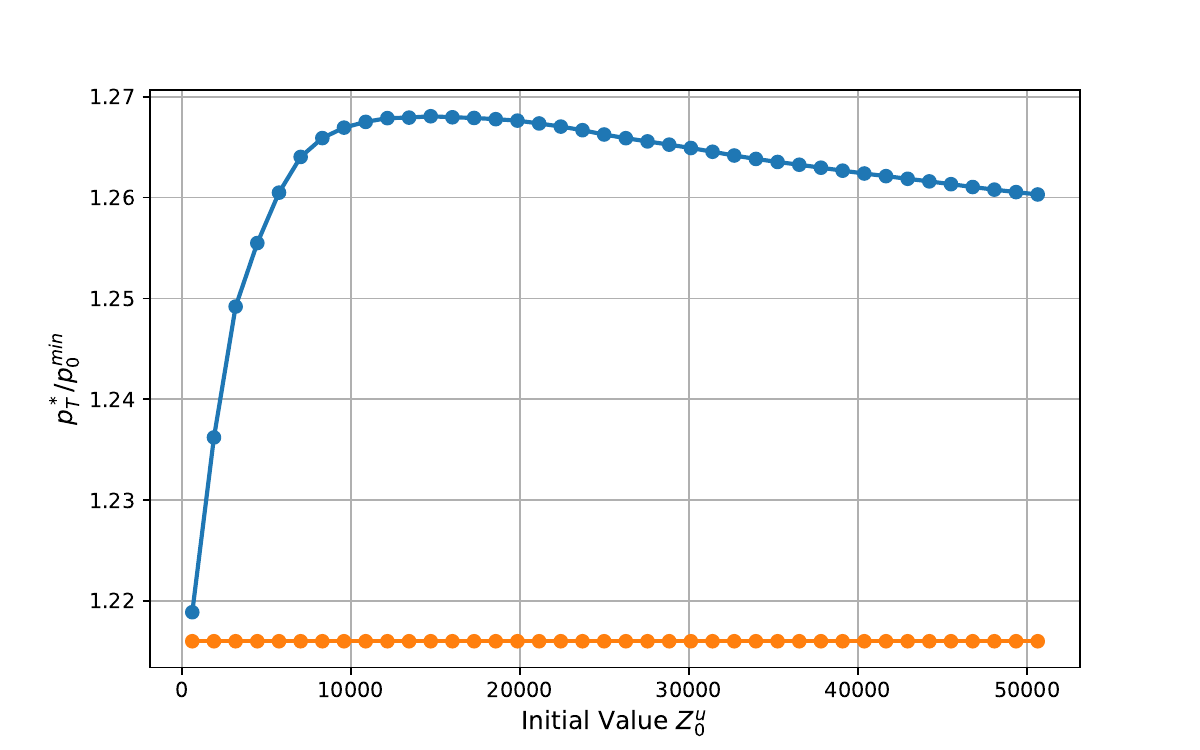}
\vspace{-0.2cm} \caption{Average  standardized individual pension $\E[p^*_T/p^{\min}_0]$ and  sustainable pension $\E[p^{\min}_T/p^{\min}_0]$  at time $T=30$}\label{subfig:BB_tau_v_pension_sens_Z0u_mean_p_tot}
\end{subfigure}
\vspace{0.3cm}
\captionsetup{font=footnotesize}
\caption{Buffer depletion time and average pension  for varying $Z^u_0$}\label{fig:BB_tau_v_pension_sens_Z0u}
\end{figure}

In the previous section, we calibrated $Z^u_0$ to yield an initial pension exceeding the sustainable pension $p^{\min}_0$ by 5\%. We now consider 
alternative calibrations where the initial surplus is set equal to 6\% and 4\%, yielding an initial buffer fund utility weight $Z^u_0=Z_0 \left(\frac{N^r_0}{0.06}\right)^{\theta}=Z_0 \left(16.67N^r_0\right)^{\theta}$ and $Z_0  \left(\frac{N^r_0}{0.04}\right)^{\theta}=Z_0  \left(25N^r_0\right)^{\theta}$, respectively, while the initial buffer fund $F_0$ remains unchanged. Pathwise results for the three initial conditions under the optimistic economic scenario are presented in Figure~\ref{fig:BB_sens_Z0u} in  Appendix \ref{app:sustainability}.
From the expression of $Z^u_t$ \eqref{eq:SDEZut_part}, we establish that a larger $Z^u_0$ implies a larger $Z^u_t$ for all $t>0$, increasing the fund's utility weight relative to retirees (Figure \ref{subfig:BB_sens_Z0u_Zut}). This translates to lower pension surpluses since the optimal pension $p^*_t$ in Equation \eqref{eq:p*} is a decreasing function of $Z_0^u$ (Figure \ref{subfig:BB_sens_Z0u_pstar}-\ref{subfig:BB_sens_Z0u_rhostar}). On the other hand, the buffer depletion time increases with $Z^u_0$, as shown by Equation~\eqref{deftau} and illustrated in 
Figure~\ref{subfig:BB_tau_v_pension_sens_Z0u_mean_tau}. This  creates a  trade-off between fund sustainability and pension adequacy: 
while $Z^u_0$ can be calibrated to satisfy a pre-specified policy objective (5\% in our base case), such a generous initial threshold may accelerate pension payments and induce early buffer depletion  in an ageing context. We recall that under the power utility framework, the buffer depletion time 
$\tau$ does not depend on the initial buffer fund value $F_0$.\\
Alternatively, policymakers could identify an initial weight $Z^u_0$ that balances the trade-off between sustainability and adequacy over a specified horizon. Figure \ref{subfig:BB_tau_v_pension_sens_Z0u_mean_p_tot} shows the average pension at $T=30$ years and for varying $Z^u_0$, that is 
\begin{equation*}
\E[p^*_T] = \E[p^{\min}_T] + \E[ p^*_T - p^{\min}_T| \tau \geq T] \,  \P(\tau \geq T). 
\end{equation*}
Lower  $Z^u_0$ allows higher pension surpluses conditional on survival 
but increases buffer depletion probability, which can substantially reduce the average terminal pension.  By examining average $p^*_T$, we implicitly weight outcomes for $Z^u_0$  with higher buffer fund survival  probabilities.

At $T = 30$, the maximum average pension is obtained for $Z^u_0 = 14{,}727$, 
corresponding to an initial pension surplus of approximately $2.75\%$ of the  initial PAYG pension $p^{\min}_0$. This is lower than the $5\%$ surplus  obtained under our baseline choice $Z^u_0 = 1{,}296$. However,  the average buffer depletion time increases from just over 23 to 37.5 years, yielding a  substantially more sustainable fund. A natural way to improve pension adequacy while maintaining the same level of 
fund sustainability is to increase the initial buffer fund value $F_0$. This is further discussed in Section \ref{subsec:adequacy}.

\subsection{Adequacy} 
\label{subsec:adequacy}

\paragraph{Sensitivity to Initial Fund Size $F_0$}

Here we analyze jointly the effect of the initial buffer fund level $F_0$ and the wages growth parameter $\lambda$. We fix $Z^u_0$ equal to the base case value $Z^u_0=1296$, which yields a 5\% surplus target when $F_0=C_0$. Fixing $Z^u_0$ ensures identical buffer depletion times $\tau$ and enables fair comparisons across varying wages growth rates and initial fund levels.

As established theoretically, the initial fund size $F_0$ does not affect the buffer depletion time $\tau$, only pension adequacy. Indeed, $F_0$ appears nowhere in Equation~\eqref{deftau}, neither directly nor indirectly, for pre-specified $Z^u_0$. This is policy-relevant: it implies that the buffer fund intervention remains feasible and delivers pension surpluses regardless of the initial level of pre-existing Public Pension Reserve Funds or, for new funds, public debt the government is willing to incur. That said, $F_0$ is not inconsequential for adequacy. Higher $F_0$ yields larger buffer funds and subsequently higher nominal pension payments, as shown directly in Equation~\eqref{eq:p*}.

Furthermore, fixing $Z^u_0$ across different values of $F_0$ naturally yields varying initial pension surplus objectives through the relationship
$$\left(\frac{Z^u_0}{Z_0}\right)^{\frac{1}{\theta}}=\frac{F_0 - \mathfrak{K}_0}{p_0 - p^{\min}_0}.$$
For fixed $Z^u_0$, $Z_0$, $\theta$, and $p^{\min}_0$, a larger (smaller) $F_0$ thus implies a higher (lower) initial pension surplus target $p_0$.

Table~\ref{tab:eair_combined} (Panel C) reports the average equivalent annual indexation rate for three wages' growth levels: $\lambda=0.03$ (top), $\lambda=0.02$ (middle), and $\lambda=0.01$ (bottom). Under $\lambda=0.02$ and $\lambda=0.03$, rising wages partially offset ageing-induced total income from contributions erosion, allowing nominal pensions to grow over time.\footnote{Figure \ref{fig:BB_sens_F0_three_wages} shows the average individual pension $p^*$ and relative pension surplus $\rho$ for varying $\lambda$ and $F_0$.} Under sufficiently low $\lambda$, wages fail to compensate for demographic decline and the minimum sustainable pension $p^{\min}_t$ falls in nominal terms. The benefit of the buffer fund is considerably larger in low-growth environments: the additional average yield it provides is greatest when $\lambda$ is low. Including a buffer fund equal to the initial contribution income $C_0$ can nearly double the yield in some cases, though unless the initial fund is sufficiently large, demographic ageing alone prevents it from reaching the level implied by wages' growth.

We are interested in further analyzing the effect of intergenerational weights $\omega_t$ and initial fund level on the adequacy of pensions as studied in Subsection \ref{subsec:sustainability_adequacy}. In particular, Table \ref{tab:eair_combined} shows the equivalent annual indexation rate at different times for alternative parametrization. 
\begin{table}[ht!]
\centering \smallerthansmall
\caption{Equivalent Annual Indexation Rate (EAIR) with buffer fund ($\overline{y^{\text{BF}}_t}$) and without ($\overline{y^{\min}_t}$) for two demographic scenarios (SS vs BB) [Panel A], difference preferences parameters in the baby boom case [Panel B] and for three wages growth levels $\lambda\in\{0.01,0.02,0.03\}$ and time horizon $t\in\{20, 30, 40\}$ [Panel C] (in \%)}\label{tab:eair_combined}
Panel A: Demographic shock: SS versus BB\\[0.5em]
\begin{tabular}{lcccc|cccc}
\hline \hline
& \multicolumn{4}{c|}{Steady state} & \multicolumn{4}{c}{Baby boom}\\ \cline{2-9}
Measure & $t=10$ & $t=20$ & $t=30$ & $t=40$ & $t=10$ & $t=20$ & $t=30$ & $t=40$\\
\hline
$\overline{y^{\text{BF}}_t}$& 3.09 & 2.51 &2.28 & 2.16 & 1.61 & 1.07& 0.86& 0.78\\
$\overline{y^{\min}_t}$& 2.00 & 2.00 & 2.00 & 2.20 & 0.41 & 0.49 & 0.55& 0.61\\
$\Delta$ & 1.09 & 0.51 & 0.28 & 0.04 & 1.20 & 0.58 & 0.31 & 0.17 \\
\hline \hline 
\end{tabular}

\bigskip

Panel B: BB: alternative preferences' sensitivities $\delta$ and weights $\omega_t$\\[0.5em]
\begin{tabular}{lcccc|cccc}
\hline \hline
& \multicolumn{4}{c|}{$\delta={}^{t} \!(0,0,0,0)$} &\multicolumn{4}{c}{$\omega=\frac{DR_t}{DR_0}$}\\ \cline{2-9}
Measure & $t=10$ & $t=20$ & $t=30$ & $t=40$ & $t=10$ & $t=20$ & $t=30$ & $t=40$\\
\hline
$\overline{y^{\text{BF}}_t}$&1.51 & 1.02 &0.83 & 0.76 & 1.64 & 1.09 & 0.86 & 0.78\\
$\overline{y^{\min}_t}$&0.41 & 0.48 & 0.55 & 0.61& 0.42 & 0.49& 0.56 & 0.62\\
$\Delta$ & 1.10 & 0.54& 0.28 & 0.15 & 1.22 & 0.60 & 0.30 & 0.16\\
\hline \hline 
\end{tabular}

\bigskip

Panel C: BB: $F_0\in\{0,\frac{1}{2},1,\frac{3}{2}\}\cdot C_0$ and $\lambda\in\{0.01,0.02,0.03\}$\\[0.5em]
\begin{tabular}{lcccccccc}
\hline \hline
& $\overline{y^{\text{BF}}_t}$ & $\Delta$ 
& $\overline{y^{\text{BF}}_t}$ & $\Delta$
& $\overline{y^{\text{BF}}_t}$ & $\Delta$
& $\overline{y^{\text{BF}}_t}$ & $\Delta$ \\ 
\hline
$\lambda=0.03$& \multicolumn{2}{c}{$t=10$} & \multicolumn{2}{c}{$t=20$} & \multicolumn{2}{c}{$t=30$} & \multicolumn{2}{c}{$t=40$}\\ \hline 
$\overline{y^{\min}_t}$ & \multicolumn{2}{c}{1.42} & \multicolumn{2}{c}{1.50} & \multicolumn{2}{c}{1.57} & \multicolumn{2}{c}{1.64} \\ \hline
$F_0=\frac{1}{2}\cdot C_0$  & 1,99 & 0,57& 1,76 & 0,26& 1,70 & 0,13& 1,70 & 0,07\\
$F_0= 1\cdot C_0$                  & 2,54 & 1,13& 2,02 & 0,52& 1,82 & 0,25& 1,77 & 0,13  \\ 
$F_0=\frac{3}{2}\cdot C_0$                  &  3,08 & 1,67& 2,27 & 0,77& 1,94 & 0,37& 1,83 & 0,19  \\ 
\hline
$\lambda=0.02$& \multicolumn{2}{c}{$t=10$} & \multicolumn{2}{c}{$t=20$} & \multicolumn{2}{c}{$t=30$} & \multicolumn{2}{c}{$t=40$}\\ \hline 
$\overline{y^{\min}_t}$ & \multicolumn{2}{c}{0.40} & \multicolumn{2}{c}{0.48} & \multicolumn{2}{c}{0.55} & \multicolumn{2}{c}{0.61} \\ \hline
$F_0= \frac{1}{2}\cdot C_0$                  & 1,01 & 0,61& 0,78 & 0,30& 0,71 & 0,16& 0,70 & 0,09  \\ 
$F_0=1\cdot C_0$                  & 1,60 & 1,20& 1,07 & 0,59& 0,86 & 0,31& 0,78 & 0,17\\ 
$F_0= \frac{3}{2}\cdot C_0$                  & 2,17 & 1,77& 1,35 & 0,87& 1,00 & 0,45& 0,86 & 0,25 \\ 
\hline
$\lambda=0.01$& \multicolumn{2}{c}{$t=10$} & \multicolumn{2}{c}{$t=20$} & \multicolumn{2}{c}{$t=30$} & \multicolumn{2}{c}{$t=40$}\\ \hline 
$\overline{y^{\min}_t}$  & \multicolumn{2}{c}{$-0.58$} & \multicolumn{2}{c}{$-0.51$} & \multicolumn{2}{c}{$-0.45$} & \multicolumn{2}{c}{$-0.39$} \\ \hline
$F_0=\frac{1}{2}\cdot C_0$                  &  0,06  & 0,64& -0,17 & 0,34& -0,26 & 0,19& -0,28 & 0,11\\
$F_0= 1\cdot C_0$                  & 0,68  & 1,26& 0,15  & 0,66& -0,08 & 0,37& -0,17 & 0,22 \\
$F_0= \frac{3}{2}\cdot C_0$                  & 1,28  & 1,86& 0,46  & 0,97& 0,09  & 0,54& -0,07 & 0,32 \\
\hline \hline 
\end{tabular}\\
\raggedright \footnotesize{\textit{Notes}: $\Delta$ represents the difference $\overline{y^{\text{BF}}_t}-\overline{y^{\min}_t}$.}
\end{table}

\paragraph{Sensitivity to Intergenerational Weights $\omega_t$}
Figure~\ref{fig:omega_1_vs_omega_DR_MC} (and Figure~\ref{fig:omega_1_vs_omega_DR_MC_appendix} in  Appendix \ref{app:adequacy}) compares two weighting schemes: the baseline where all retirees receive equal weight ($\omega_t=1$), and an alternative where the individual retiree weight is given by
$$\omega_t=\frac{N^r_t}{N^r_0},$$
which increases from 1 to 1.67 in the BB scenario over the simulation horizon. 

The alternative scheme yields greater relative surplus $\rho$ (Figure~\ref{subfig:omega_1_vs_omega_DR_MC_rho}) and benefit ratio increases (Figure~\ref{subfig:omega_1_vs_omega_DR_MC_benefit_ratio_increase}), without meaningfully affecting default: relative increases of approximately 1\% in $\rho$ and 0.5\% in the benefit ratio are achievable. Table~\ref{tab:eair_combined} (Panel B) further confirms the improvement in equivalent indexation rate, with additional compounded returns of 0.03\% by $t=10$ and 0.02\% by $t=20$. Over a longer horizon, however, these gains diminish as default becomes increasingly likely beyond $t=30$.

These results align naturally with our theoretical findings. Recall from Equation~\eqref{eq:p*} that the proportion of the buffer surplus paid out is given by $\left(\frac{Z_t\omega_t}{Z^u_t}\right)^{1/\theta}$. Under an ageing scenario, $\frac{N^r_t}{N^r_0}>1$, so the numerator is larger and a greater share of the fund is distributed under the alternative scheme. However, since $\omega$ enters through $\omega^{1/\theta}$, the increase from 1 to 1.67 in weight translates to only a modest change in $\omega^{1/\theta}$ from 1 to 1.14 by $T=40$. Combined with slightly elevated default risk and the weight effect materializing precisely when fund survival is unlikely, the overall impact of the $\omega_t$ specification remains limited. Nevertheless, the scheme effectively directs greater surpluses toward cohorts facing higher dependency ratios, providing targeted protection where demographic pressure is most acute.

\begin{figure}[ht!]
\centering
\setfolder{fig_24feb_delta_base_S1_omega_1_v_S2_omega_DR_MC}
\begin{subfigure}[t]{0.49\textwidth}
\centering
\includegraphics[width=\textwidth,height=0.65\textheight,keepaspectratio]{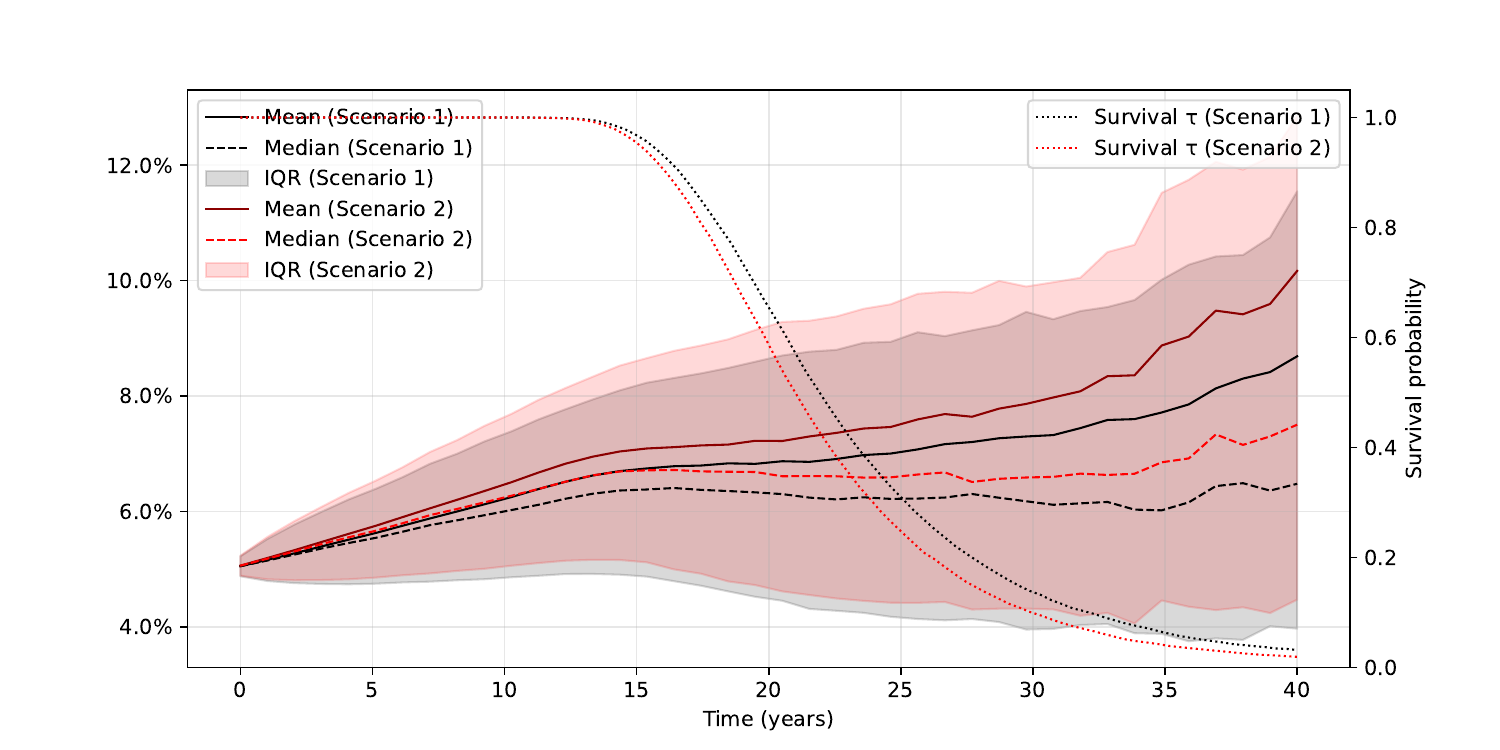}
\vspace{-0.2cm}\caption{Relative surplus $\rho_t=\frac{p^*_t-p^{\min}_t}{p^{\min}_t}$}\label{subfig:omega_1_vs_omega_DR_MC_rho}
\end{subfigure}
\begin{subfigure}[t]{0.49\textwidth}
\centering
\includegraphics[width=\textwidth,height=0.65\textheight,keepaspectratio]{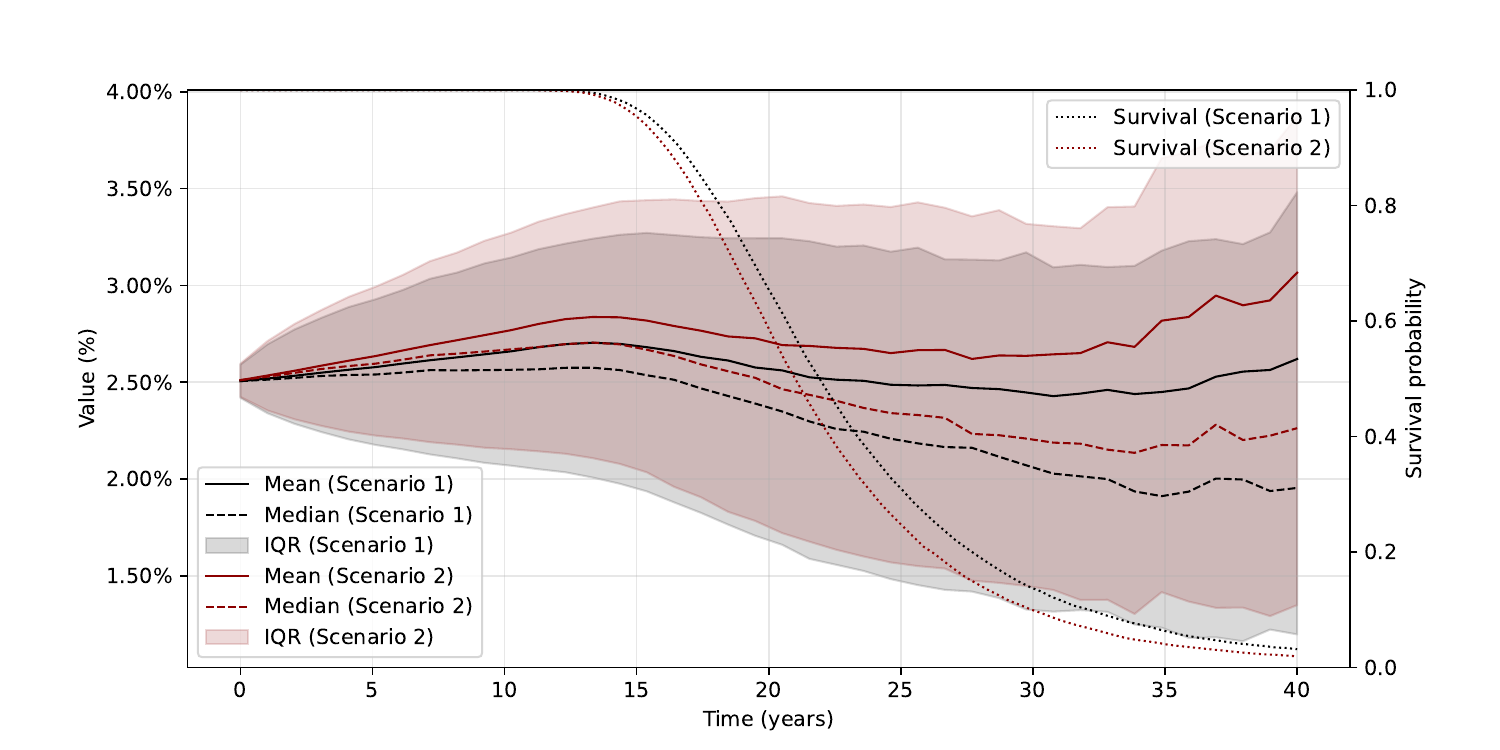}
\vspace{-0.2cm}\caption{Increase in Benefit ratio}\label{subfig:omega_1_vs_omega_DR_MC_benefit_ratio_increase}
\end{subfigure}
\captionsetup{font=footnotesize}
\caption{Comparison Between $\omega=1$ and $\omega = DR_s / DR_0$ ({\color{red} red}) – Monte Carlo}\label{fig:omega_1_vs_omega_DR_MC}
\end{figure}

\section{Conclusion}

This paper studies the design of an optimal public PAYG pension system complemented by a buffer fund in a context of persistent demographic ageing and economic uncertainty. Motivated by the projected rise in old-age dependency ratios documented by \cite{OECD25} and by the widespread adoption of Public Pension Reserve Funds, we develop a time-consistent framework in which a social planner simultaneously addresses sustainability and adequacy. Our approach relies on forward utilities, which allow preferences to adjust dynamically to evolving financial, economic, and demographic conditions, thereby avoiding the time inconsistency and horizon dependence inherent in backward formulations.

Within this framework, we characterize and study the explicit optimal pension and investment policy. The policy jointly determines the share of the buffer fund invested in risky assets and the surplus pension distributed on top of the pure PAYG benchmark. A comprehensive numerical analysis, calibrated to realistic financial dynamics and European ageing trends,  yields several
robust insights
 as well as an in-depth analysis of the impact of preference sensitivities on the pension scheme.

First, demographic pressure, captured by a baby-boom scenario, significantly accelerates buffer fund depletion relative to a steady-state population. Nevertheless, conditional on fund survival, the buffer generates a sizeable and persistent improvement in pensions. The relative surplus is particularly valuable in the baby-boom scenario and under low wage-growth conditions, where the PAYG baseline pension is reduced due to a higher dependency ratio. The buffer fund therefore plays a stabilizing role precisely when demographic and economic conditions are most adverse.

Overall, 
 the mixed PAYG system with an optimally managed buffer fund, as proposed in this paper, has the three main characteristics: (i) It can substantially enhance pension adequacy while preserving sustainability, provided that the initial calibration is carefully chosen. (ii) Demographic shocks primarily affect the distribution of the  fund depletion time rather than the structure of the optimal investment policy. (iii) Properly designed state-contingent surplus rules—implemented through forward utility sensitivities—facilitate effective risk sharing and contribute to macroeconomic stabilization.

From a policy perspective, the proposed framework provides a coherent tool for evaluating pension reforms in ageing societies. It clarifies how contribution rates, initial reserves, intergenerational weights, and risk sensitivities interact to shape long-run outcomes. More broadly, the analysis demonstrates that forward utility–based pension design offers a tractable and economically meaningful approach to reconciling sustainability with adequacy in an uncertain and ageing world.

\bibliographystyle{alpha}
\bibliography{references}

\clearpage 
\appendix

\setcounter{table}{0}
\setcounter{figure}{0}
\setcounter{equation}{0}
\renewcommand{\thetable}{\Alph{section}.\arabic{table}}
\renewcommand{\thefigure}{\Alph{section}.\arabic{figure}}

\section{Financial market scenarios}\label{app:finmarketscenarios}
\begin{figure}[H]
	\caption{Financial market, seed \# 3}\label{fig:fin_scen_seed_3}
	\setfolder{fig_24feb_delta_base_S1_SS_v_S2_BB_seed_3}
    \begin{subfigure}[b]{0.49\textwidth}
		\centering
		\includegraphics[scale=0.5]{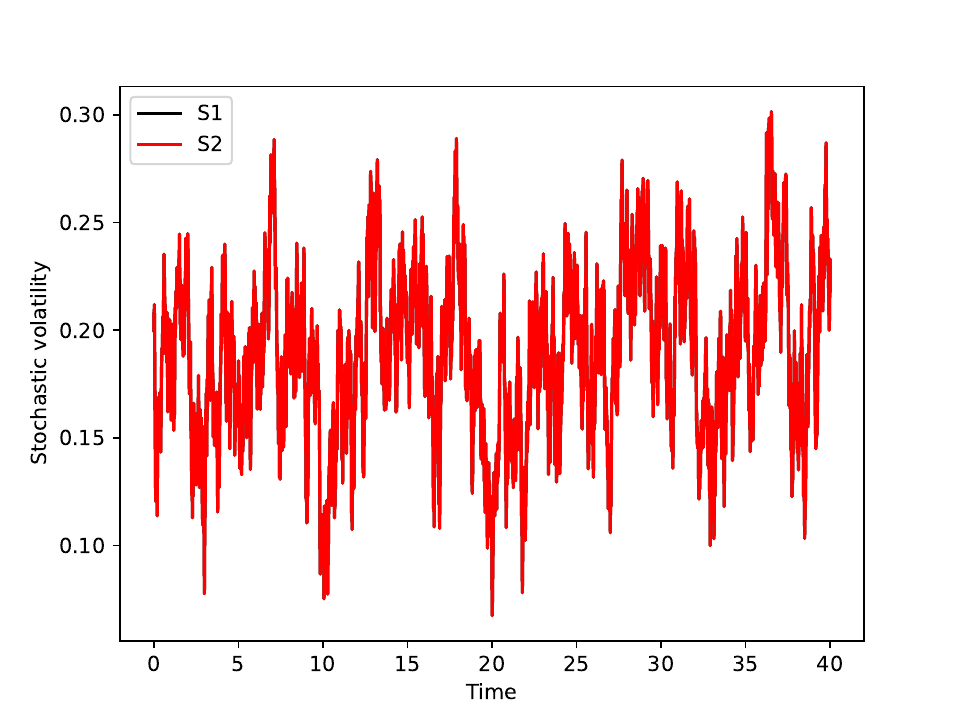}
		\caption{Stochastic volatility $\nu$} 
	\end{subfigure}
    \begin{subfigure}[b]{0.49\textwidth}
		\centering
		\includegraphics[scale=0.5]{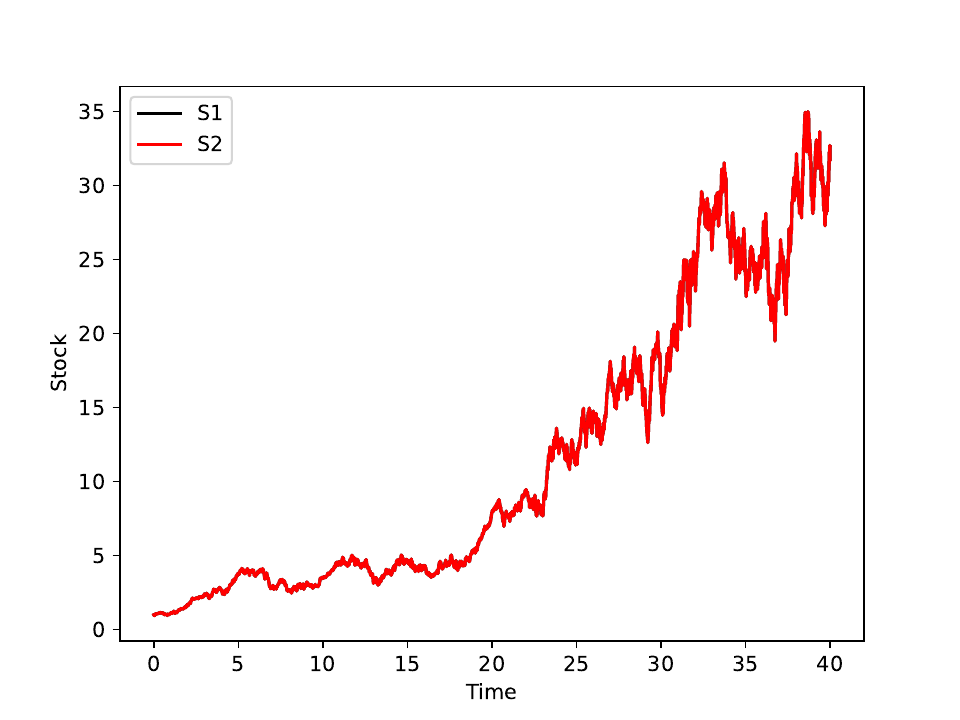}
		\caption{Stock $S$} 
	\end{subfigure}\\
	\begin{subfigure}[b]{0.49\textwidth}
		\centering
		\includegraphics[scale=0.5]{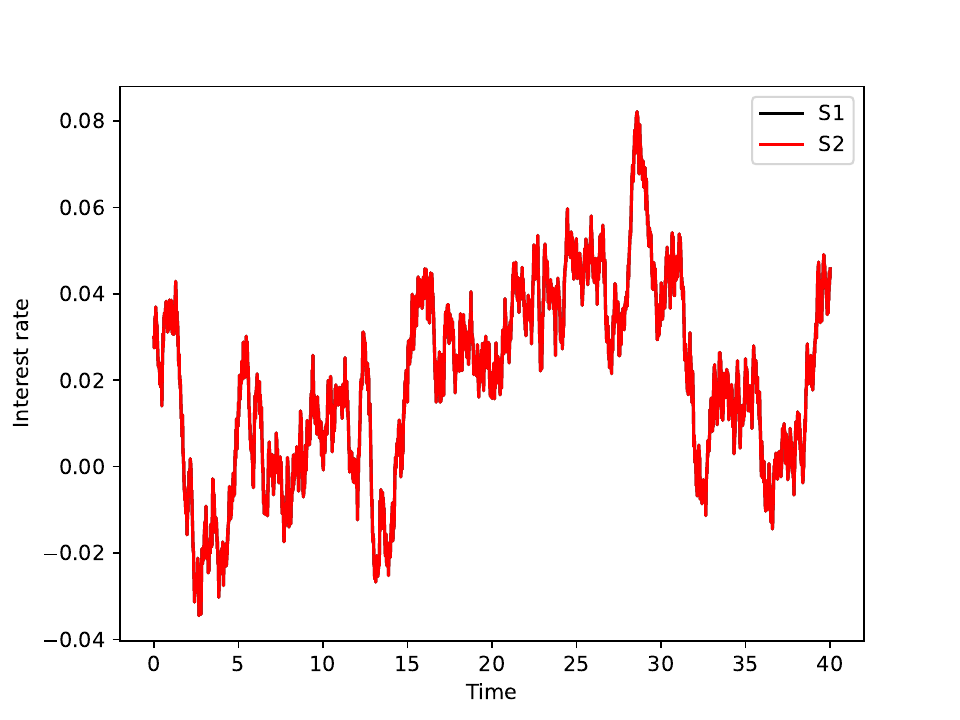}
		\caption{Interest rate $r$ } 
	\end{subfigure}
    \begin{subfigure}[b]{0.49\textwidth}
		\centering
		\includegraphics[scale=0.5]{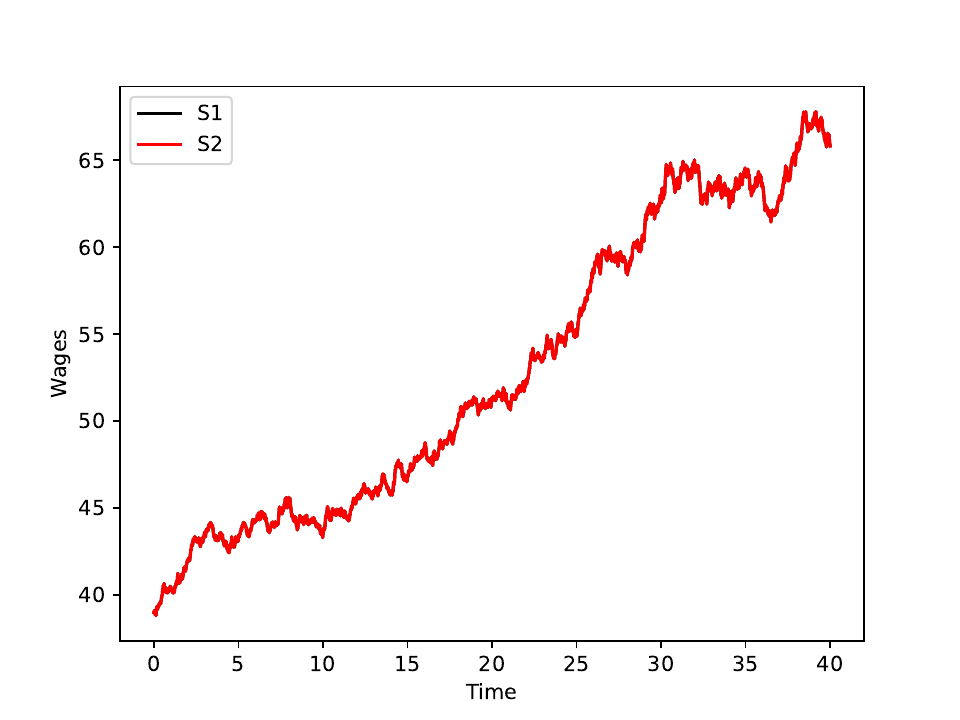}
		\caption{Wages $\mathfrak e$} 
	\end{subfigure}\\
    \begin{subfigure}[b]{0.49\textwidth}
		\centering
		\includegraphics[scale=0.5]{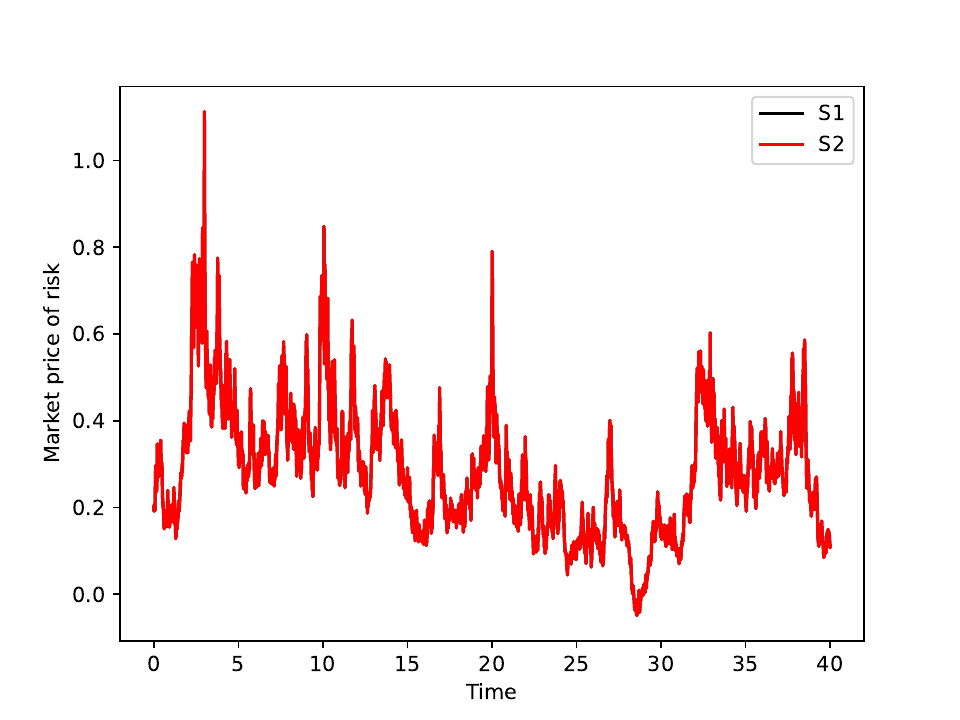}
		\caption{Market risk premium $\eta$} 
	\end{subfigure}\\
\end{figure}

\begin{figure}[H]
	\caption{Financial market, seed \# 5}\label{fin_scen_seed_5}
	\setfolder{fig_24feb_delta_base_S1_SS_v_S2_BB_seed_5}
    \begin{subfigure}[b]{0.49\textwidth}
		\centering
		\includegraphics[scale=0.5]{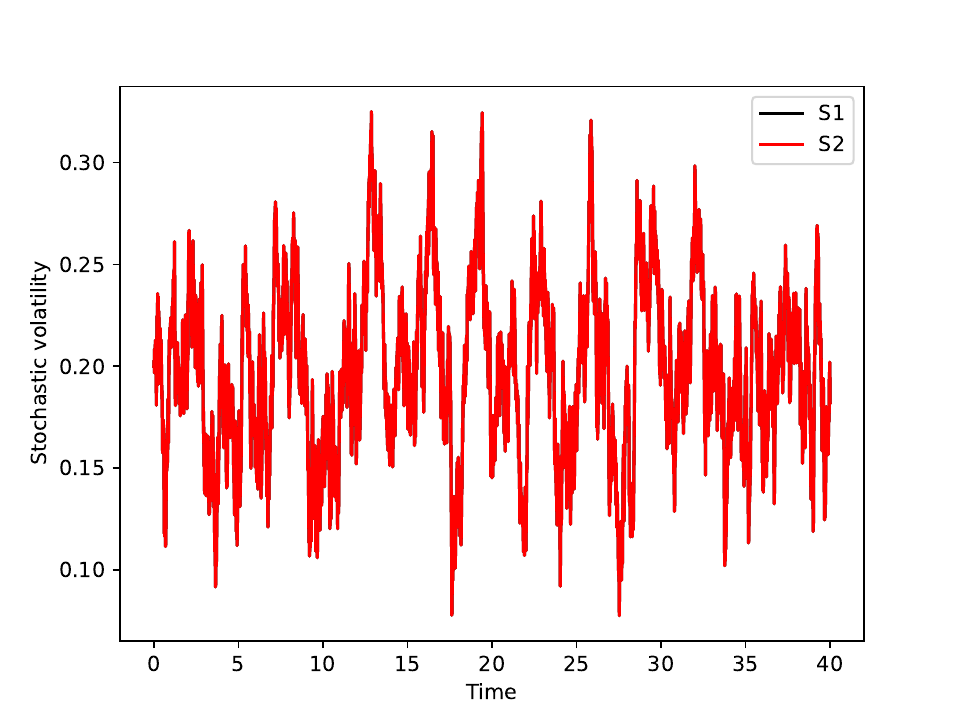}
		\caption{Stochastic volatility $\nu$} 
	\end{subfigure}
    \begin{subfigure}[b]{0.49\textwidth}
		\centering
		\includegraphics[scale=0.5]{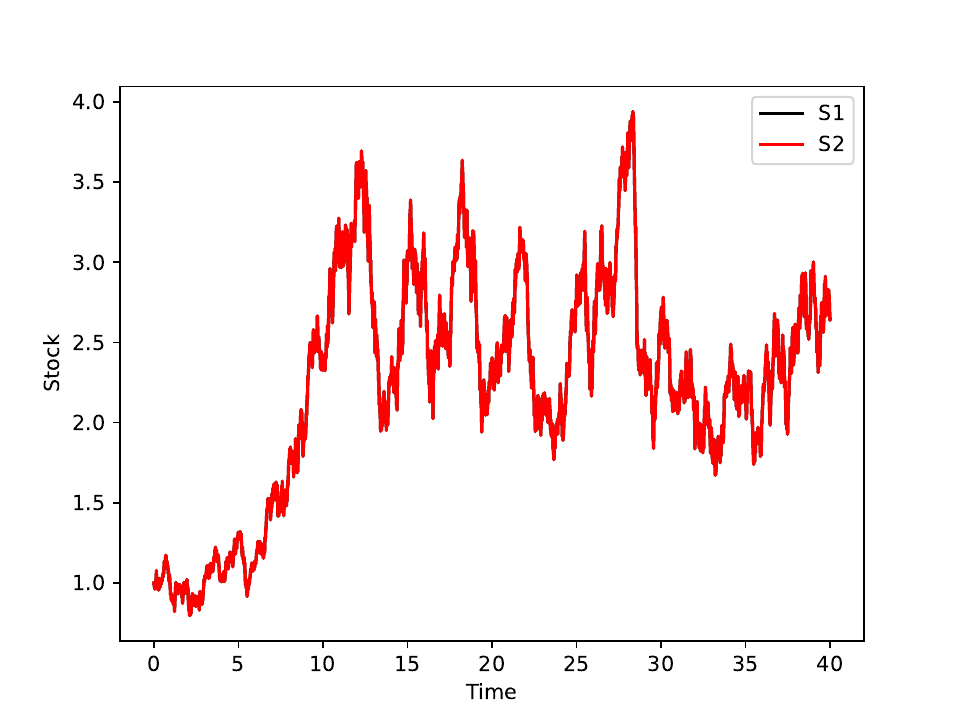}
		\caption{Stock $S$} 
	\end{subfigure}\\
	\begin{subfigure}[b]{0.49\textwidth}
		\centering
		\includegraphics[scale=0.5]{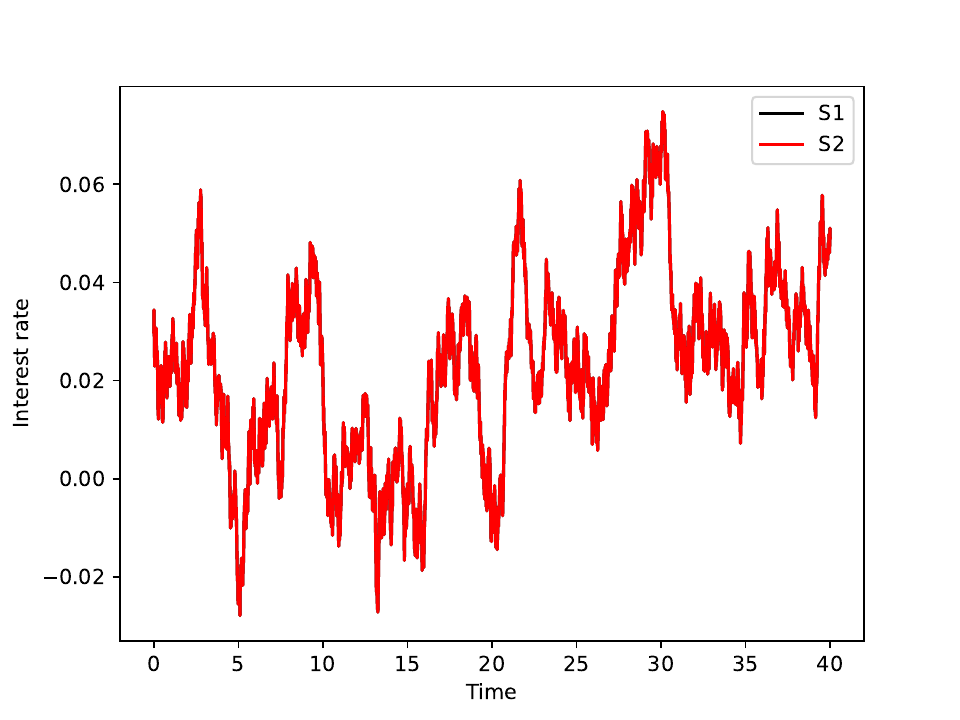}
		\caption{Interest rate $r$} 
	\end{subfigure}
    \begin{subfigure}[b]{0.49\textwidth}
		\centering
		\includegraphics[scale=0.5]{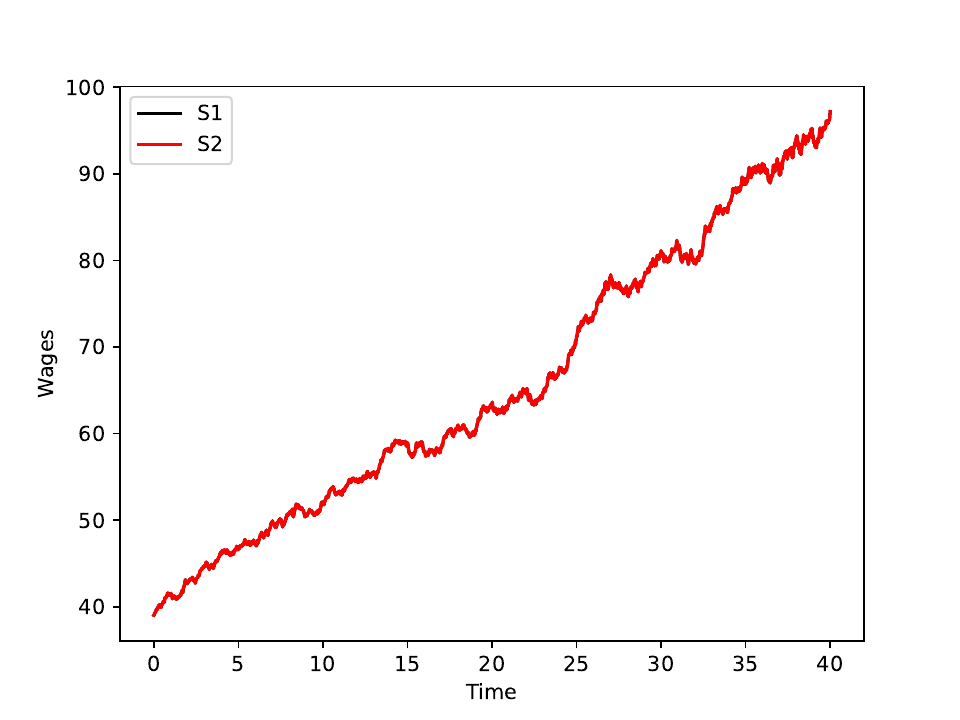}
		\caption{Wages $\mathfrak e $} 
	\end{subfigure}\\
    \begin{subfigure}[b]{0.49\textwidth}
		\centering
		\includegraphics[scale=0.5]{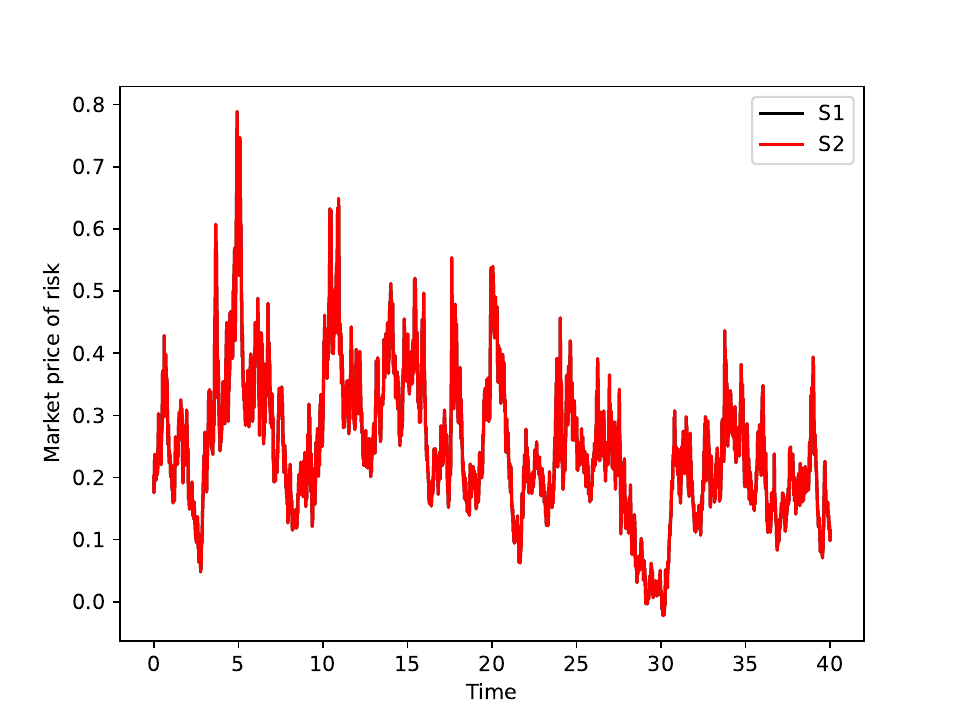}
		\caption{Market risk premium $\eta$} 
	\end{subfigure}\\
\end{figure}

\clearpage 
\newpage
\setcounter{table}{0}
\setcounter{figure}{0}
\setcounter{equation}{0}

\section{Pathwise Analysis: pessimistic scenario (seed 5)}\label{app:single_path_seed_5}
\vspace{-0.5cm}
\begin{figure}[h!]
\centering
	\setfolder{fig_24feb_delta_base_S1_SS_v_S2_BB_seed_5}
\begin{subfigure}[t]{0.49\textwidth}
\centering
\includegraphics[width=\textwidth,height=0.6\textheight,keepaspectratio]{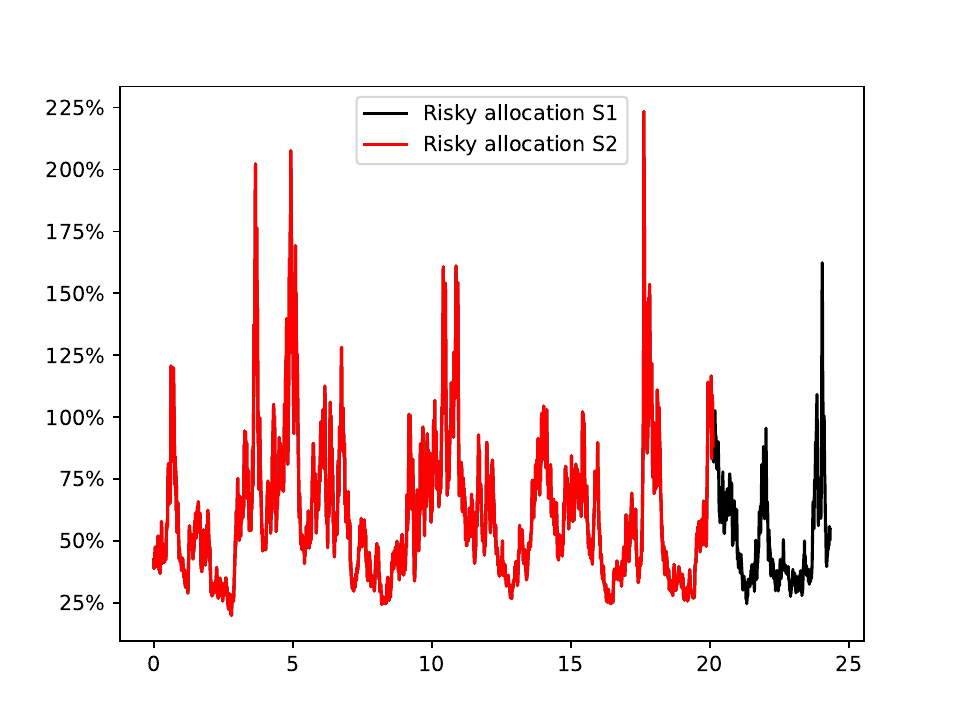}
\caption{Proportion risky investment $\left(\frac{\pi^*_t}{\sqrt{\nu_t} F^*_t}\right)$}\label{subfig:SS_v_BB_seed_5_proportion}
\end{subfigure}
\begin{subfigure}[t]{0.49\textwidth}
\centering
\includegraphics[width=\textwidth,height=0.6\textheight,keepaspectratio]{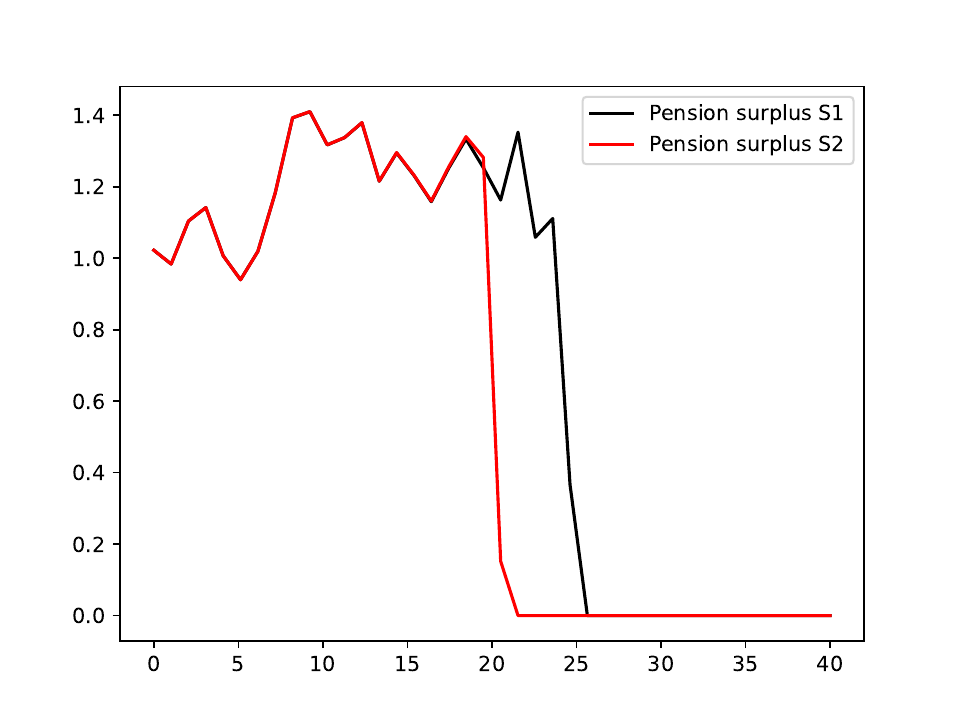}
\caption{Pension surplus $p^*_t-p^{\min}_t$}\label{subfig:SS_v_BB_seed_5_surplus}
\end{subfigure}
\begin{subfigure}[t]{0.49\textwidth}
\centering
\includegraphics[width=\textwidth,height=0.6\textheight,keepaspectratio]{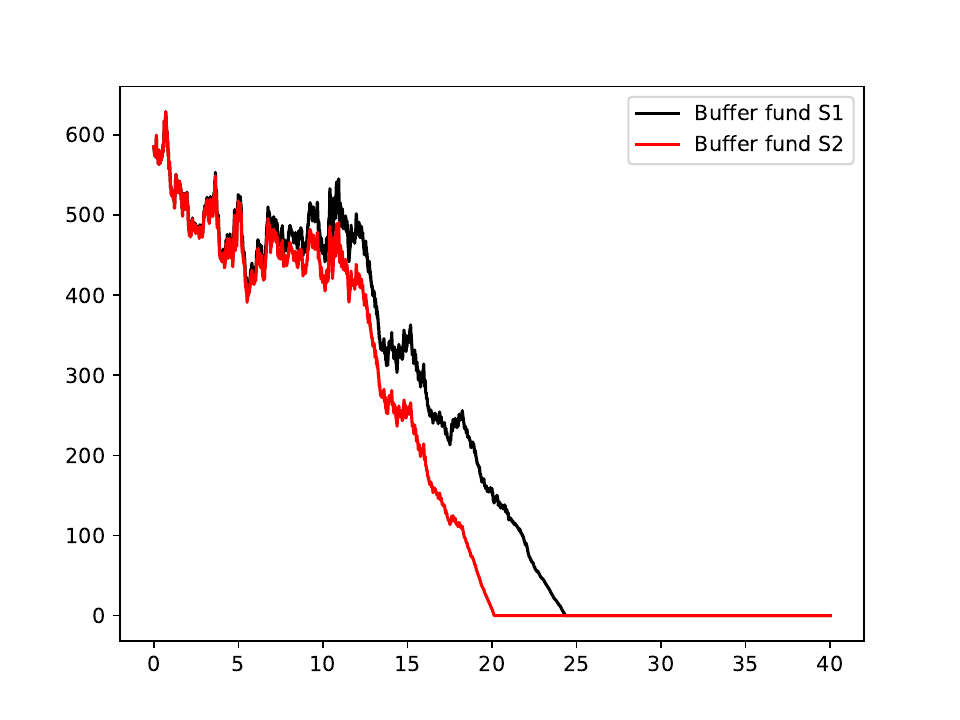}
\caption{Buffer fund $F^*_t$}\label{subfig:SS_v_BB_seed_5_buffer}
\end{subfigure}\hfill
\begin{subfigure}[t]{0.49\textwidth}
\centering
\includegraphics[width=\textwidth,height=0.6\textheight,keepaspectratio]{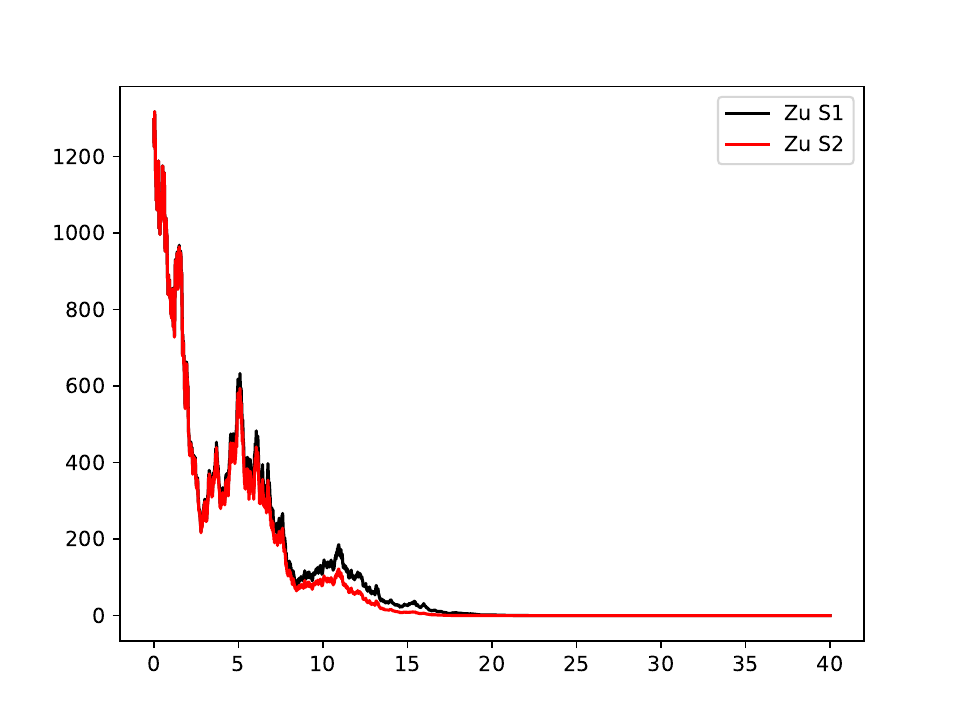}
\caption{Buffer fund utility weight $Z^u_t$}\label{subfig:SS_v_BB_seed_5_Zu}
\end{subfigure}\hfill
\begin{subfigure}[t]{0.49\textwidth}
\centering
\includegraphics[width=\textwidth,height=0.6\textheight,keepaspectratio]{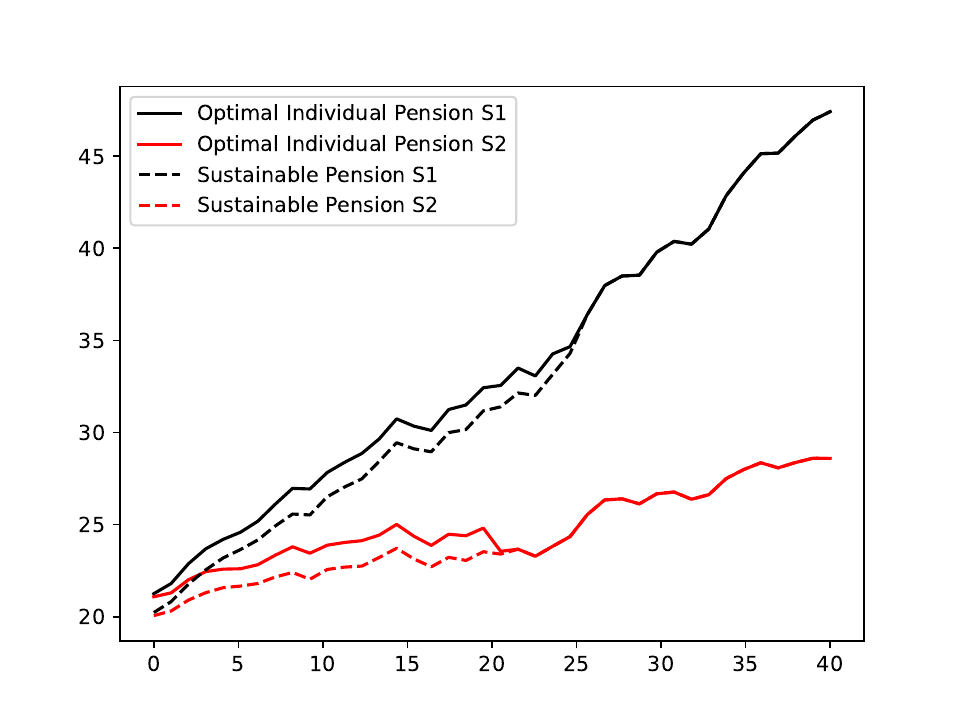}
\caption{Individual pensions $p^*_t$}\label{subfig:SS_v_BB_seed_5_pstar}
\end{subfigure}\hfill
\begin{subfigure}[t]{0.49\textwidth}
\centering
\includegraphics[width=\textwidth,height=0.6\textheight,keepaspectratio]{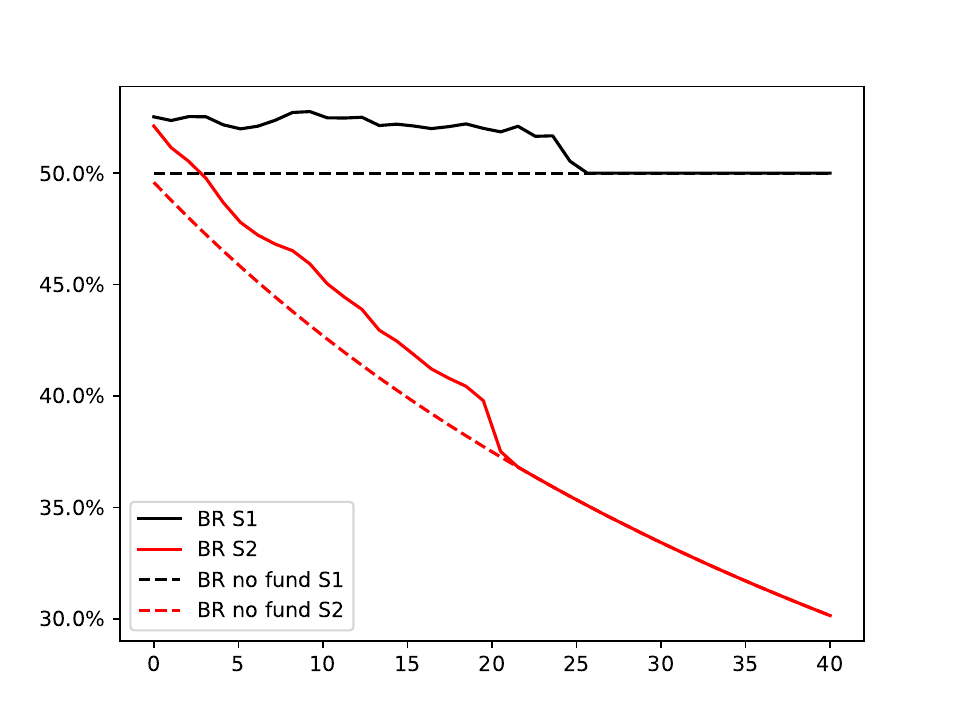}
\caption{Benefit ratio $\text{BR}_t=\frac{p^*_t}{\mathfrak{e}_t}$}\label{subfig:SS_v_BB_seed_5_benefit_ratio}
\end{subfigure}
\captionsetup{font=footnotesize}
\caption{Comparison Between Steady State (black) and Baby Boom ({\color{red} red}) – Seed 5}\label{fig:SS_v_BB_seed_5}
\end{figure}

\clearpage 
\newpage

\setcounter{table}{0}
\setcounter{figure}{0}
\setcounter{equation}{0}

\section{Distributional Analysis: Additional results}\label{app:MC_extra}

\begin{figure}[h!]
\centering
    \setfolder{fig_23feb_delta_base_S1_SS_v_S2_BB_MC}
\begin{subfigure}[t]{0.49\textwidth}
\centering
\includegraphics[width=\textwidth,keepaspectratio]{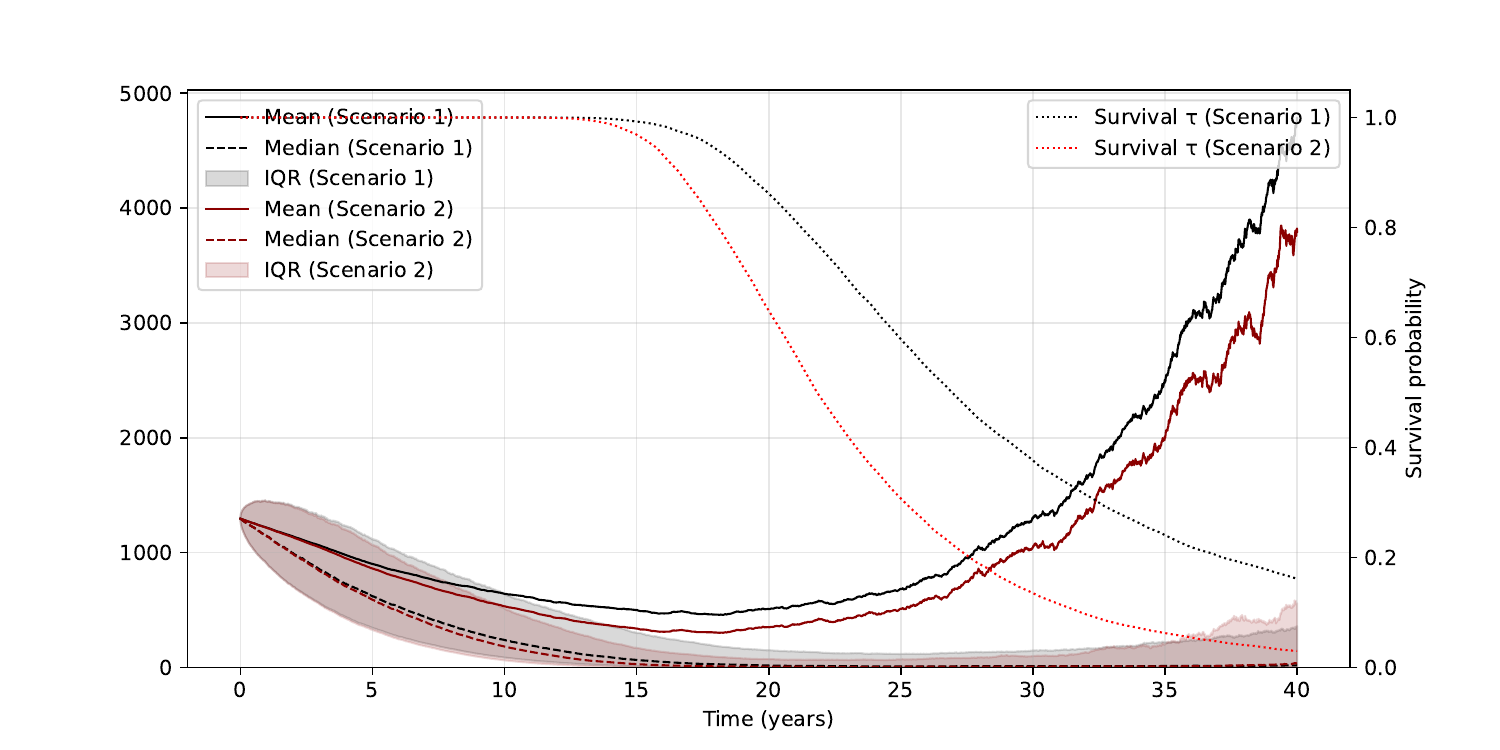}
\caption{Buffer fund utility weight $Z^u_t$}\label{subfig:SS_v_BB_MC_Zu}
\end{subfigure}\hfill
\begin{subfigure}[t]{0.49\textwidth}
\centering
\includegraphics[width=\textwidth,keepaspectratio]{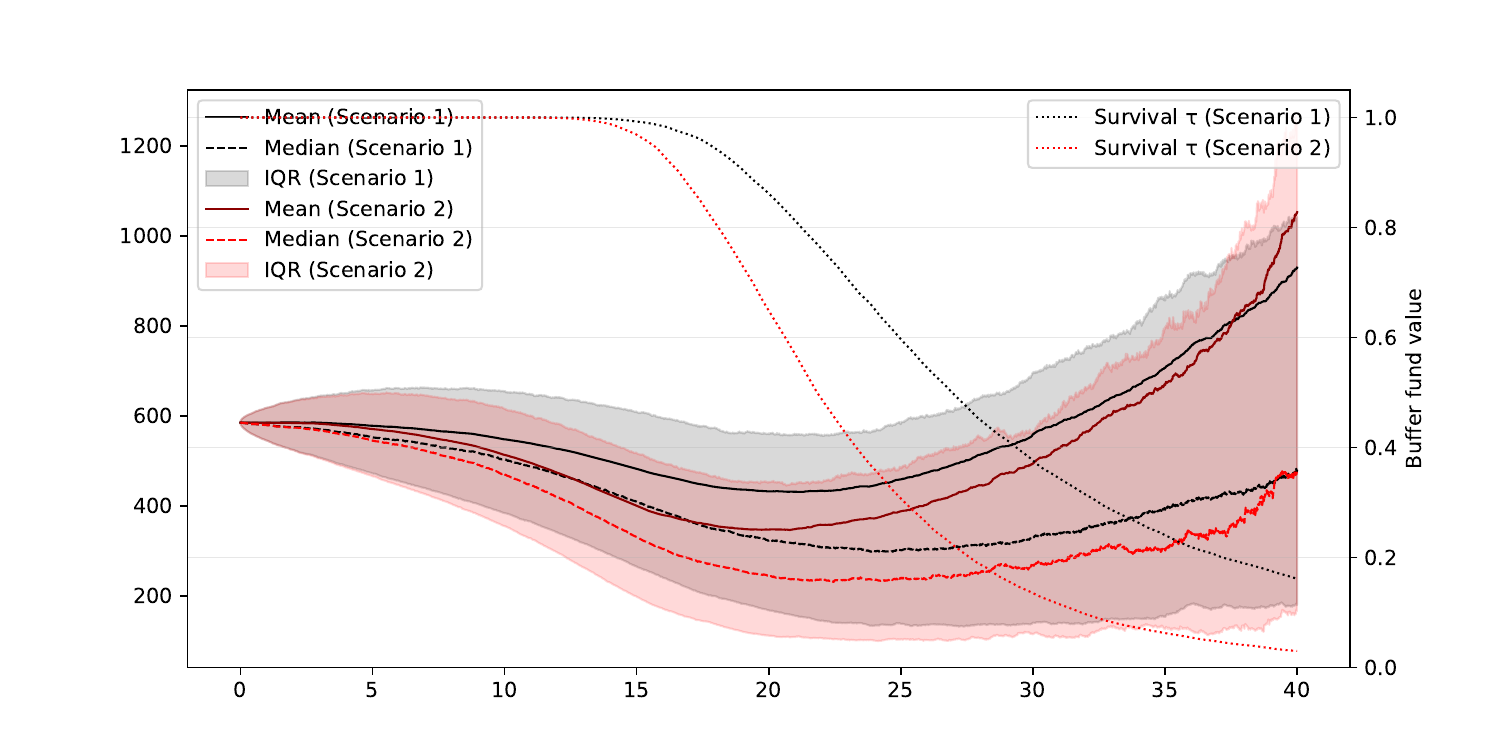}
\vspace{-0.2cm}\caption{Buffer fund $F^*_t$}\label{subfig:SS_v_BB_MC_buffer}
\end{subfigure}\hfill \\
\begin{subfigure}[t]{0.49\textwidth}
\centering
\includegraphics[width=\textwidth,keepaspectratio]{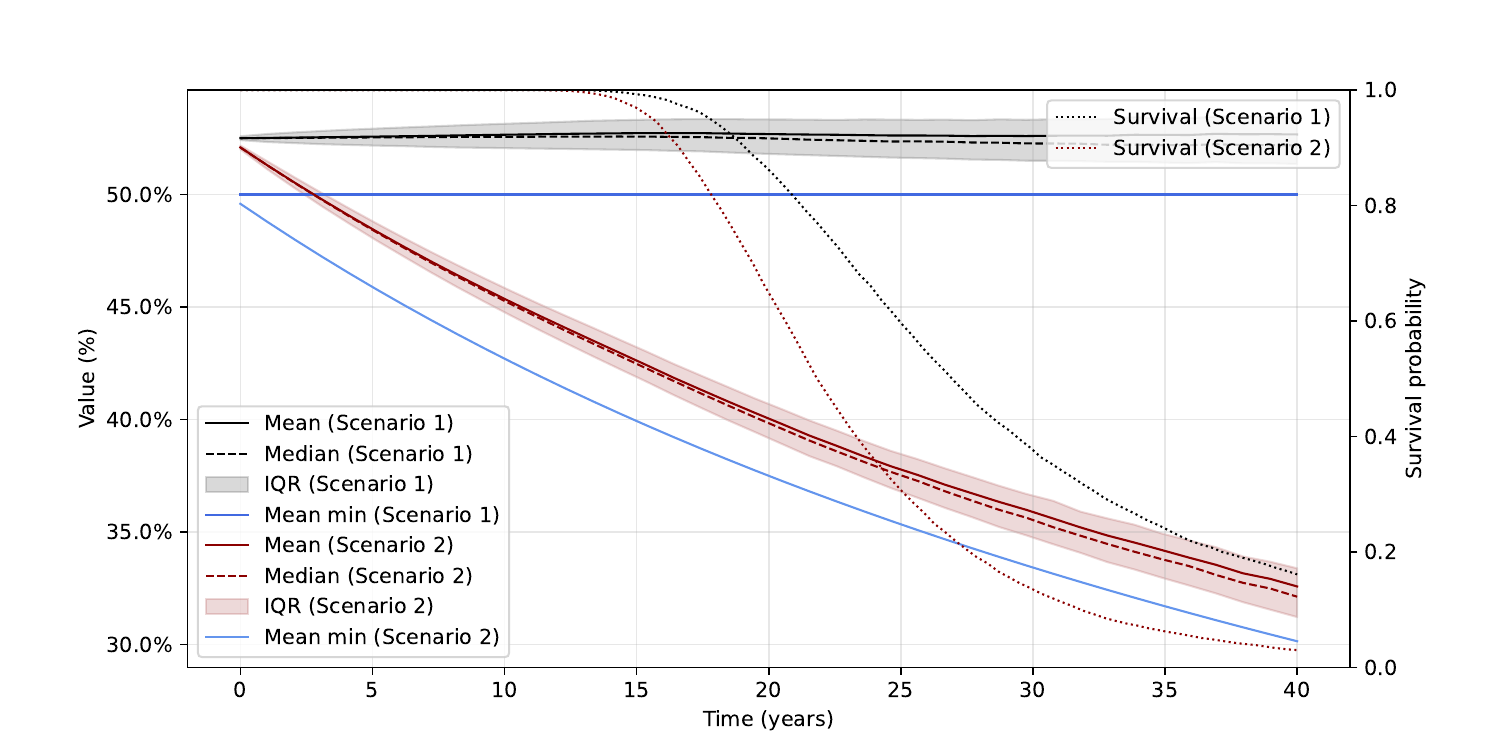}
\vspace{-0.2cm}\caption{Benefit ratio $\text{BR}_t=\frac{p^*_t}{\mathfrak{e}_t}$}\label{subfig:SS_v_BB_MC_benefit_ratio}
\end{subfigure}
\captionsetup{font=footnotesize}
\caption{Comparison between Steady State (black) and Baby Boom ({\color{red} red}) – Monte Carlo - Additional results}\label{fig:SS_v_BB_MC_appendix}
\end{figure}

\begin{figure}[h!]
\centering
    \setfolder{fig_24feb_bb_S1_delta_0_v_S2_delta_base_MC}
\begin{subfigure}[t]{0.49\textwidth}
\centering
\includegraphics[width=\textwidth,keepaspectratio]{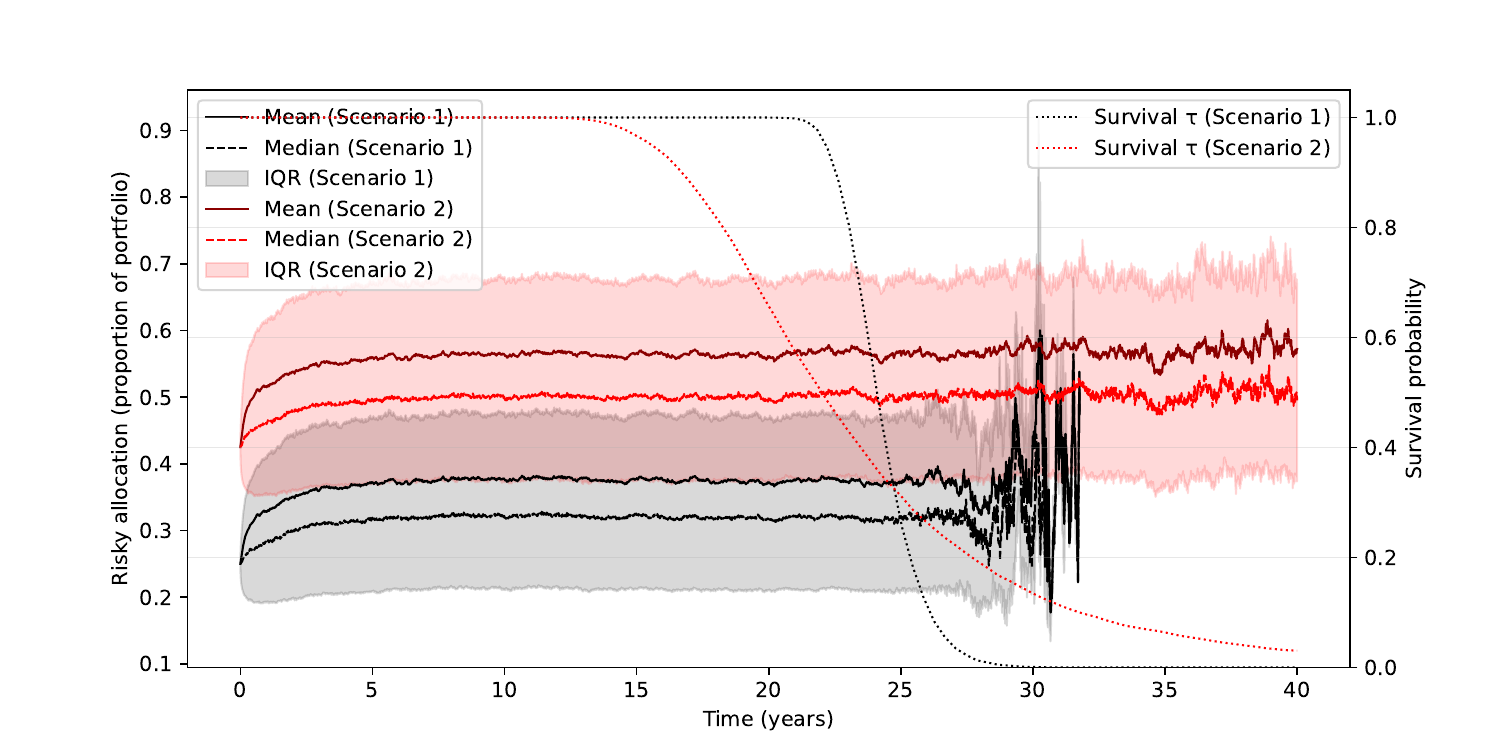}
\vspace{-0.2cm}\caption{Proportion risky investment $\frac{\left(\frac{\pi^*_t}{\sqrt{\nu_t}}\right)}{F^*_t}$}\label{subfig:bb_delta_0_v_delta_base_appendix_proportion}
\end{subfigure}
\begin{subfigure}[t]{0.49\textwidth}
\centering
\includegraphics[width=\textwidth,keepaspectratio]{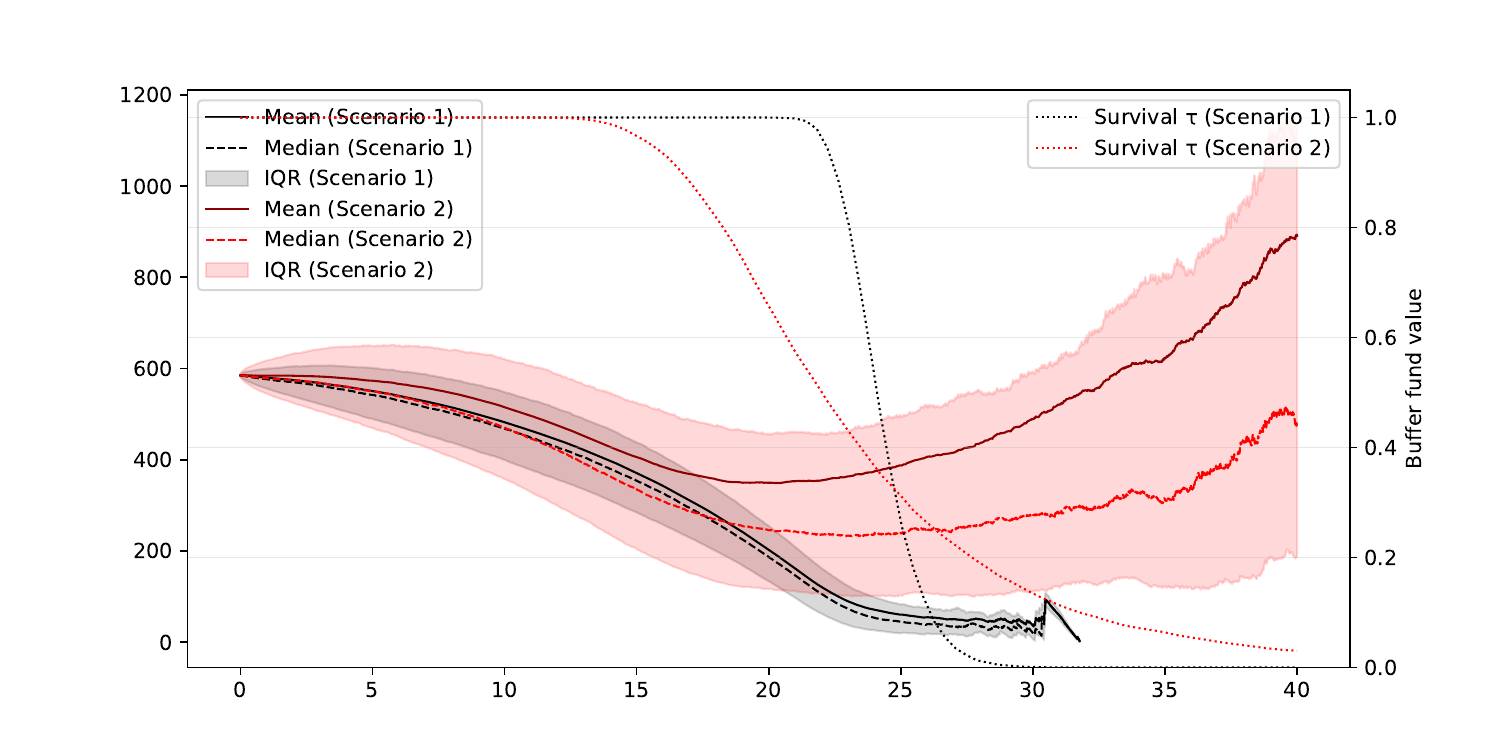}
\vspace{-0.2cm}\caption{Buffer fund $F^*_t$}\label{subfig:bb_delta_0_v_delta_base_appendix_relative_surplus}
\end{subfigure}\hfill 
\vspace{0.3cm}\\
\begin{subfigure}[t]{0.49\textwidth}
\centering
\includegraphics[width=\textwidth,keepaspectratio]{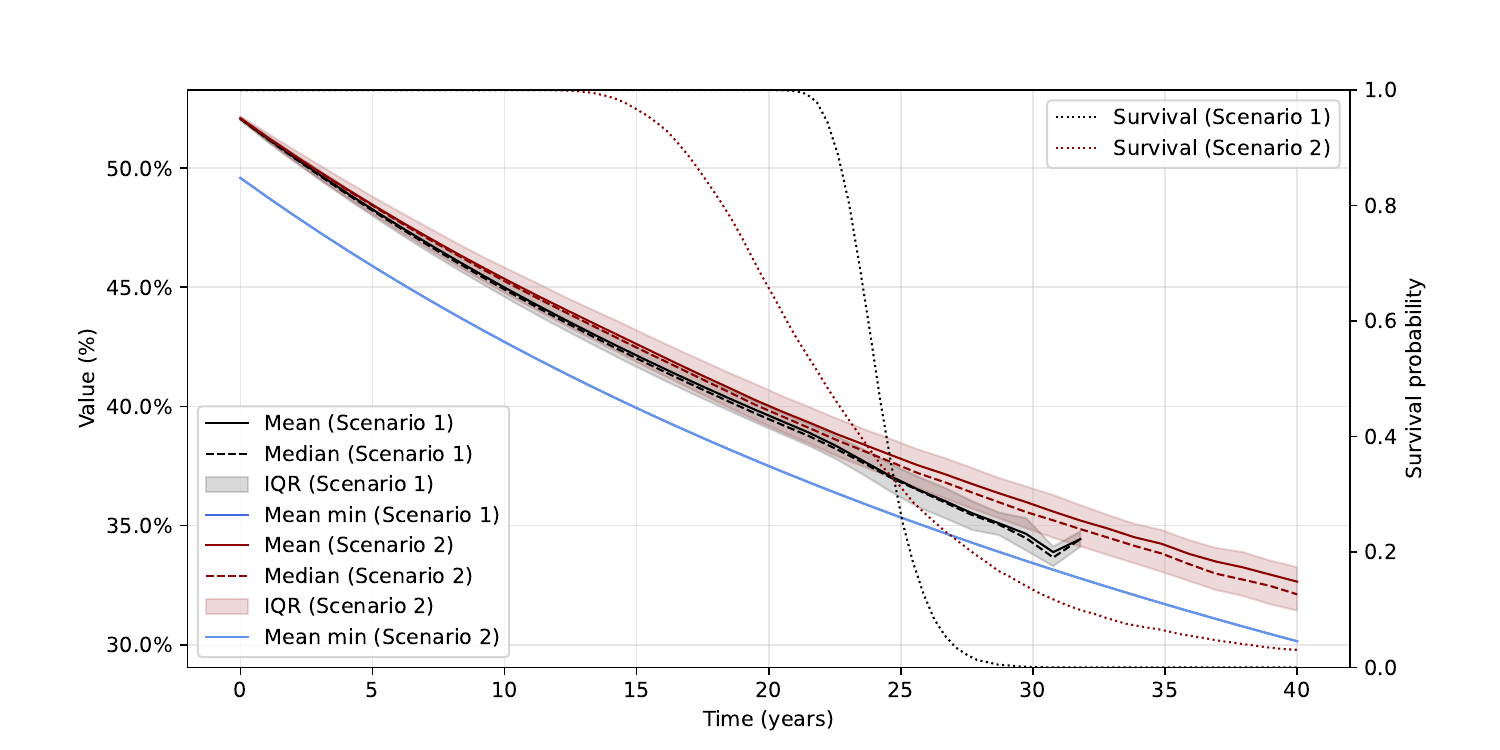}
\vspace{-0.2cm}\caption{Benefit ratio $\text{BR}_t=\frac{p^*_t}{\mathfrak{e}_t}$}\label{subfig:bb_delta_0_v_delta_base_appendix_benefit_ratio}
\end{subfigure}
\captionsetup{font=footnotesize}
\caption{Comparison between $\delta= {}^{t} \! (0,0,0,0) $ (black) and $\delta={}^{t} \!  (0, -0.2, -0.2,-0.2) $ ({\color{red} red}) – Monte Carlo}\label{fig:bb_delta_0_v_delta_base_appendix}
\end{figure}

\newpage
\mbox{   }
\newpage

\section{Sustainability}\label{app:sustainability}
\begin{figure}[ht!]
\centering
\setfolder{fig_26feb_bb_delta_base_S1_v_S2_v_S3_sens_Z0u}
\begin{subfigure}[t]{0.46\textwidth}
\centering
\includegraphics[width=\textwidth,height=0.34\textheight,keepaspectratio]{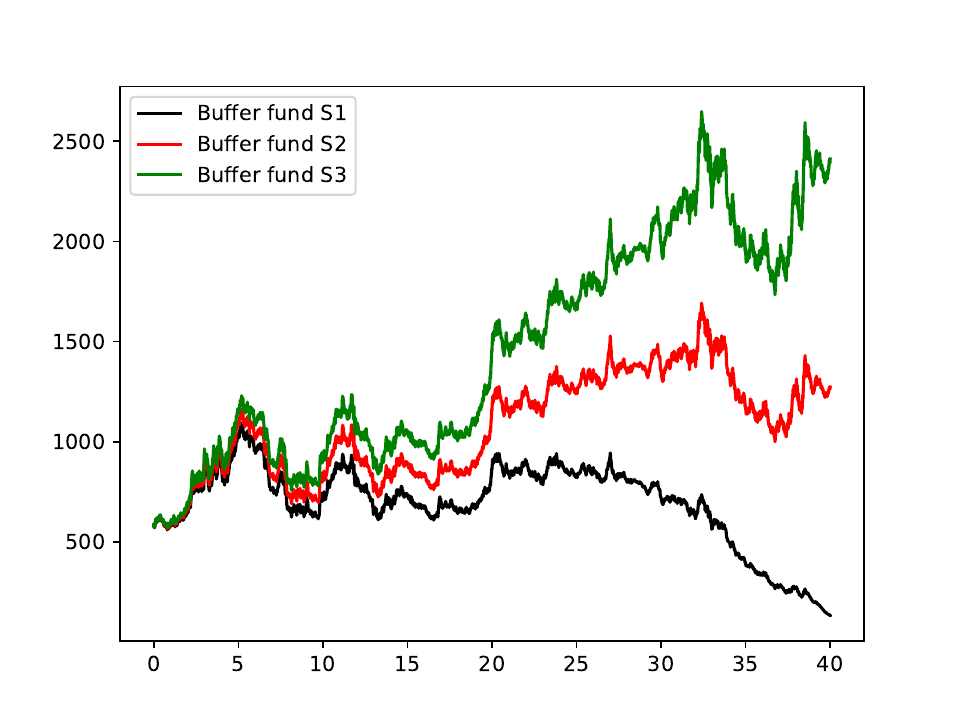}
\vspace{-0.2cm} \caption{Buffer fund $F^*$}\label{subfig:BB_sens_Z0u_Ft}
\end{subfigure}\hfill
\begin{subfigure}[t]{0.46\textwidth}
\centering
\includegraphics[width=\textwidth,height=0.34\textheight,keepaspectratio]{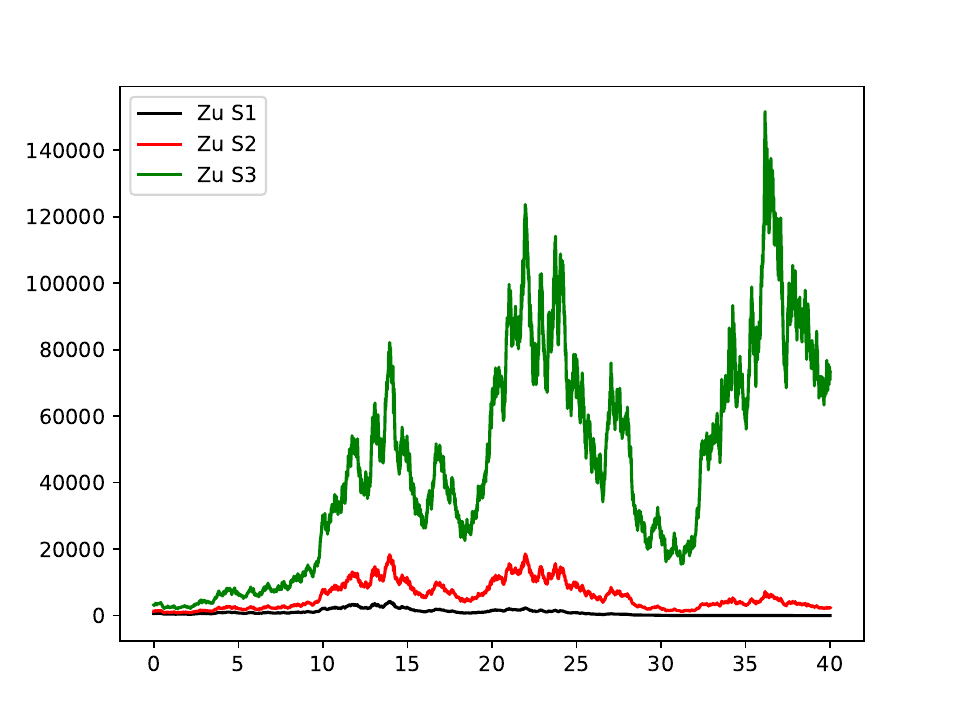}
\vspace{-0.2cm} \caption{Buffer fund weight $Z^u_t$}\label{subfig:BB_sens_Z0u_Zut}
\end{subfigure}
\begin{subfigure}[t]{0.46\textwidth}
\centering
\includegraphics[width=\textwidth,height=0.34\textheight,keepaspectratio]{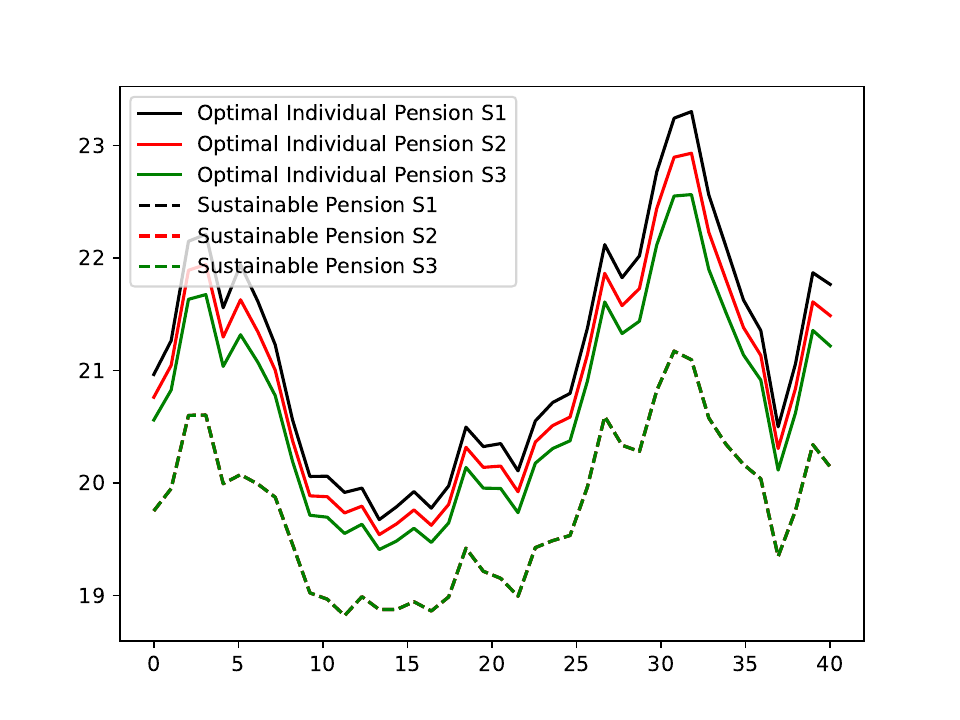}
\vspace{-0.2cm}\caption{Individual pensions $p^*_t$}\label{subfig:BB_sens_Z0u_pstar}
\end{subfigure}\hfill
\begin{subfigure}[t]{0.46\textwidth}
\centering
\includegraphics[width=\textwidth,height=0.34\textheight,keepaspectratio]{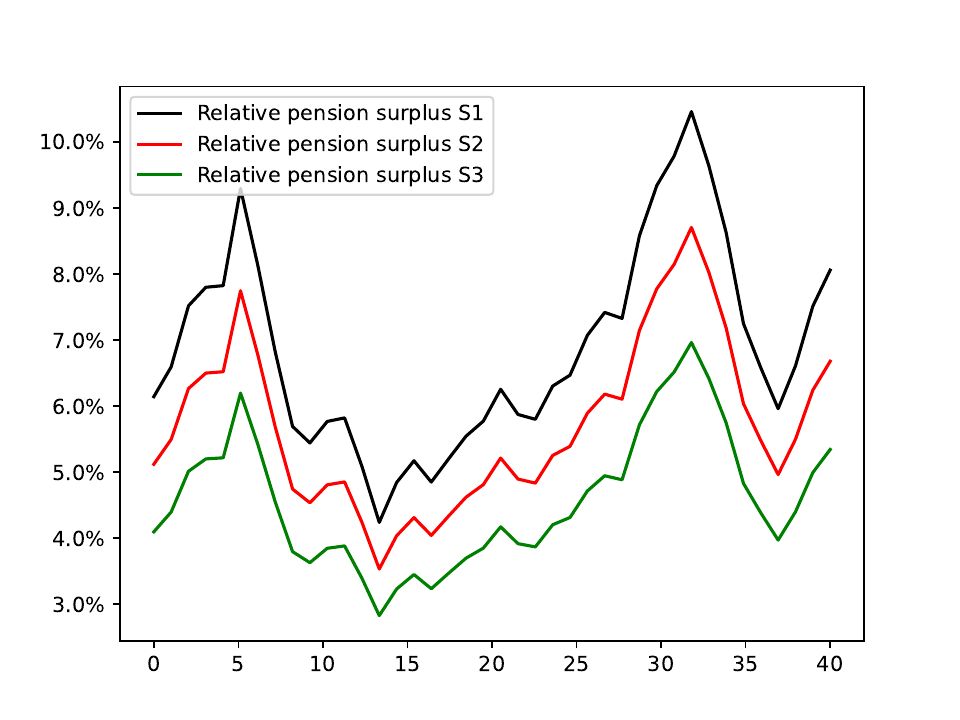}
\vspace{-0.2cm}\caption{Relative surplus $\rho_t=\frac{p^*_t-p^{\min}_t}{p^{\min}_t}$}\label{subfig:BB_sens_Z0u_rhostar}
\end{subfigure}
\captionsetup{font=footnotesize}
\caption{S1 (black): $Z^u_0 =Z_0 \left(\frac{N^r_0}{0.06}\right)^{\theta}$, S2 (red):  $Z^u_0=Z_0\left(\frac{N^r_0}{0.05}\right)^{\theta}$, S3 (green): $Z^u_0 =Z_0  \left(\frac{N^r_0}{0.04}\right)^{\theta}$ - Seed 3 (optimist)}\label{fig:BB_sens_Z0u}
\end{figure}

\clearpage 
\mbox{ }
\section{Adequacy}\label{app:adequacy}
\begin{figure}[h!]
\centering
\setfolder{fig_24feb_delta_base_S1_omega_1_v_S2_omega_DR_MC}
\begin{subfigure}[t]{0.49\textwidth}
\centering
\includegraphics[width=\textwidth,height=0.65\textheight,keepaspectratio]{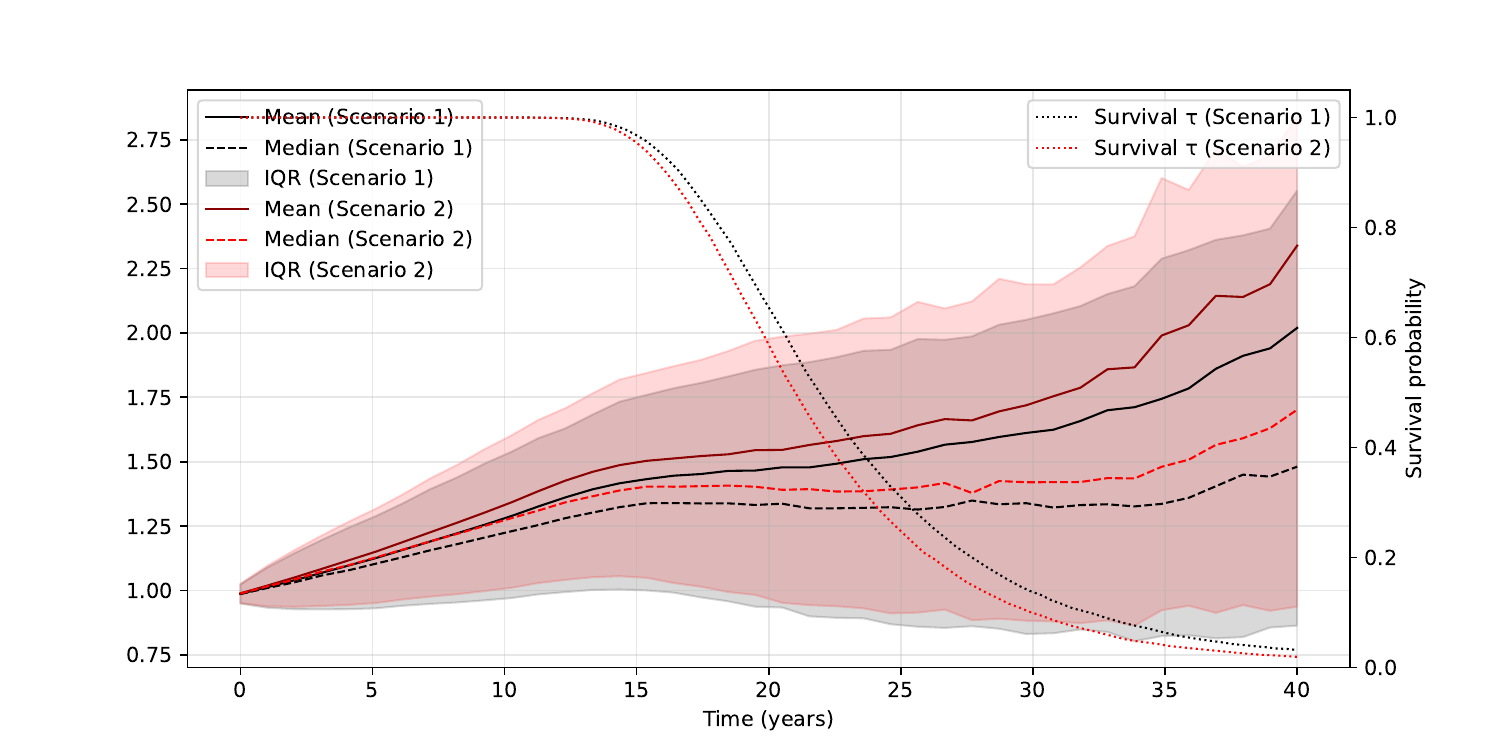}
\vspace{-0.2cm}\caption{Pension surplus $p^*_t-p^{\min}_t$}
\end{subfigure}
\begin{subfigure}[t]{0.49\textwidth}
\centering
\includegraphics[width=\textwidth,height=0.65\textheight,keepaspectratio]{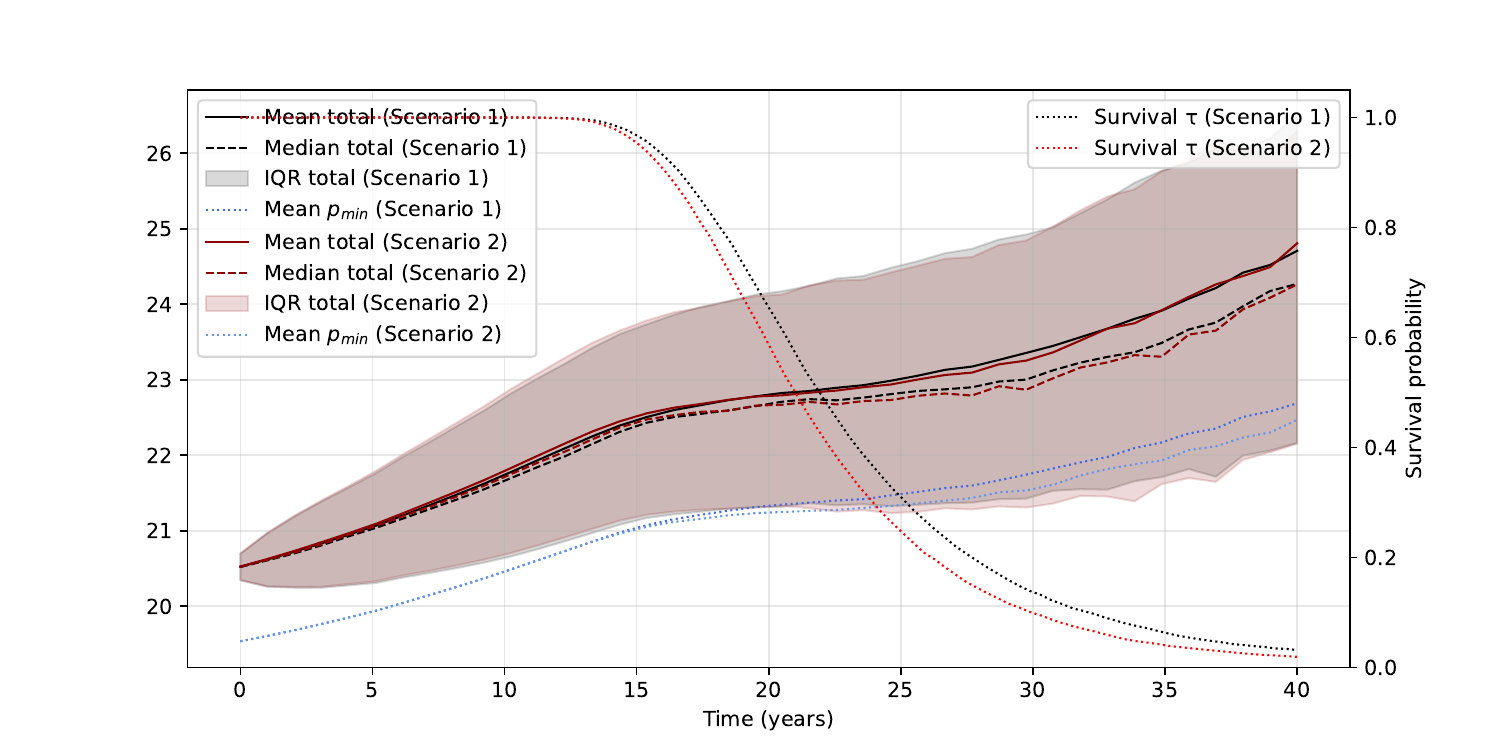}
\vspace{-0.2cm}\caption{Individual pensions $p^*_t$}
\end{subfigure}\hfill
\vspace{0.3cm}

\begin{subfigure}[t]{0.49\textwidth}
\centering
\includegraphics[width=\textwidth,height=0.65\textheight,keepaspectratio]{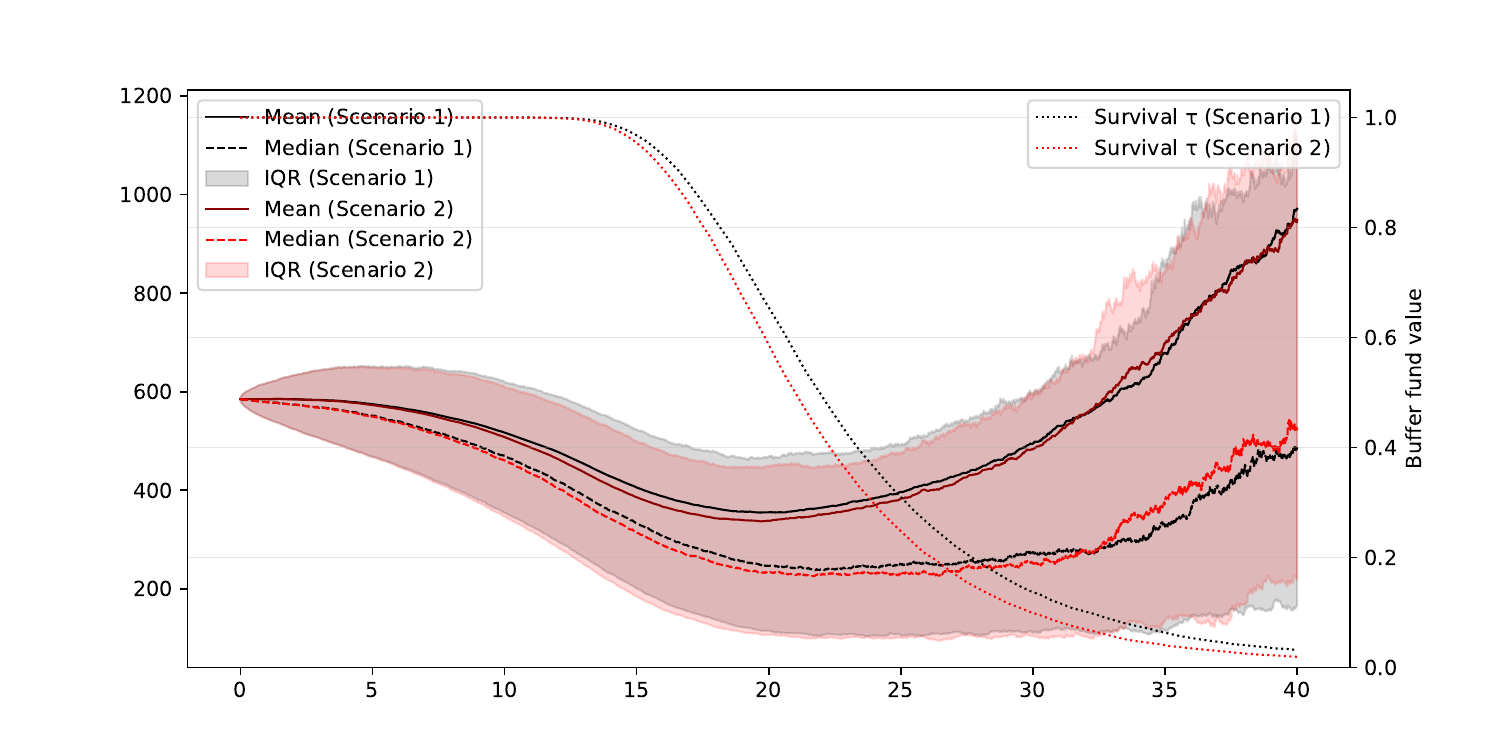}
\vspace{-0.2cm}\caption{Buffer fund $F^*_t$}
\end{subfigure}\hfill
\begin{subfigure}[t]{0.49\textwidth}
\centering
\includegraphics[width=\textwidth,height=0.65\textheight,keepaspectratio]{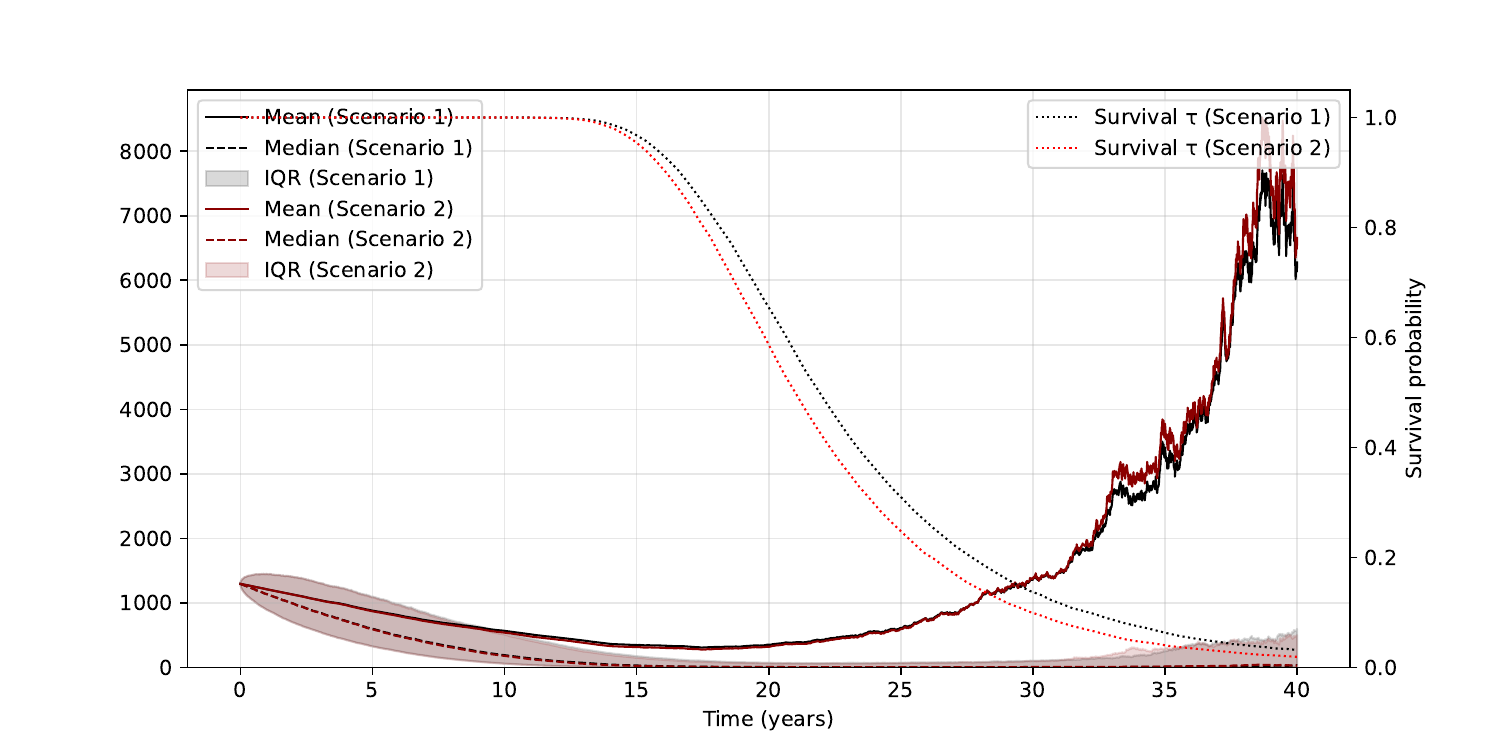}
\caption{Buffer fund utility weight $Z^u_t$}
\end{subfigure}\hfill
\vspace{0.3cm}
\begin{subfigure}[t]{0.96\textwidth}
\centering
\includegraphics[width=\textwidth,height=0.65\textheight,keepaspectratio]{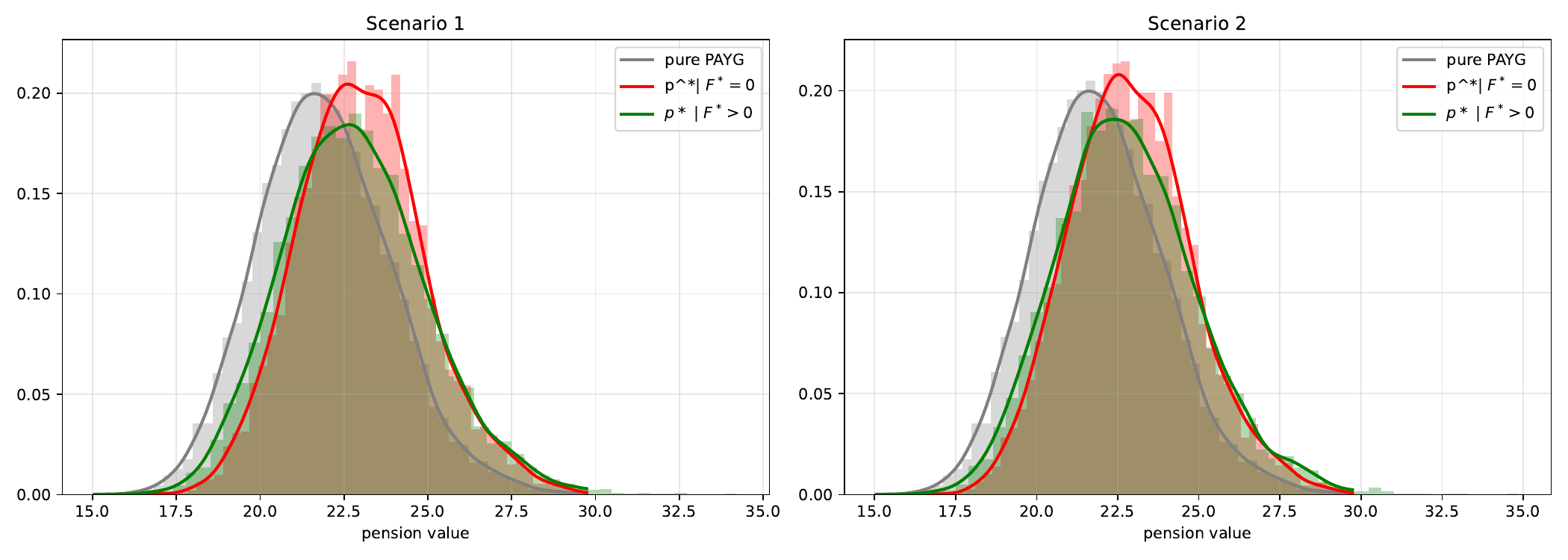}
\caption{Distribution of $p^{\min}$ (gray), $p^*|F^*>0$ (green) and $p^*|F^*=0$ ({\color{red} red}) at $t=20$}
\end{subfigure}
\captionsetup{font=footnotesize}
\caption{Comparison between $\omega=1$ and $\omega = DR_s / DR_0$ ({\color{red} red}) – Monte Carlo}\label{fig:omega_1_vs_omega_DR_MC_appendix}
\end{figure}

\begin{figure}[h!]
\centering
\begin{subfigure}[t]{0.49\textwidth}
\centering
\includegraphics[width=\textwidth,height=0.7\textheight,keepaspectratio]{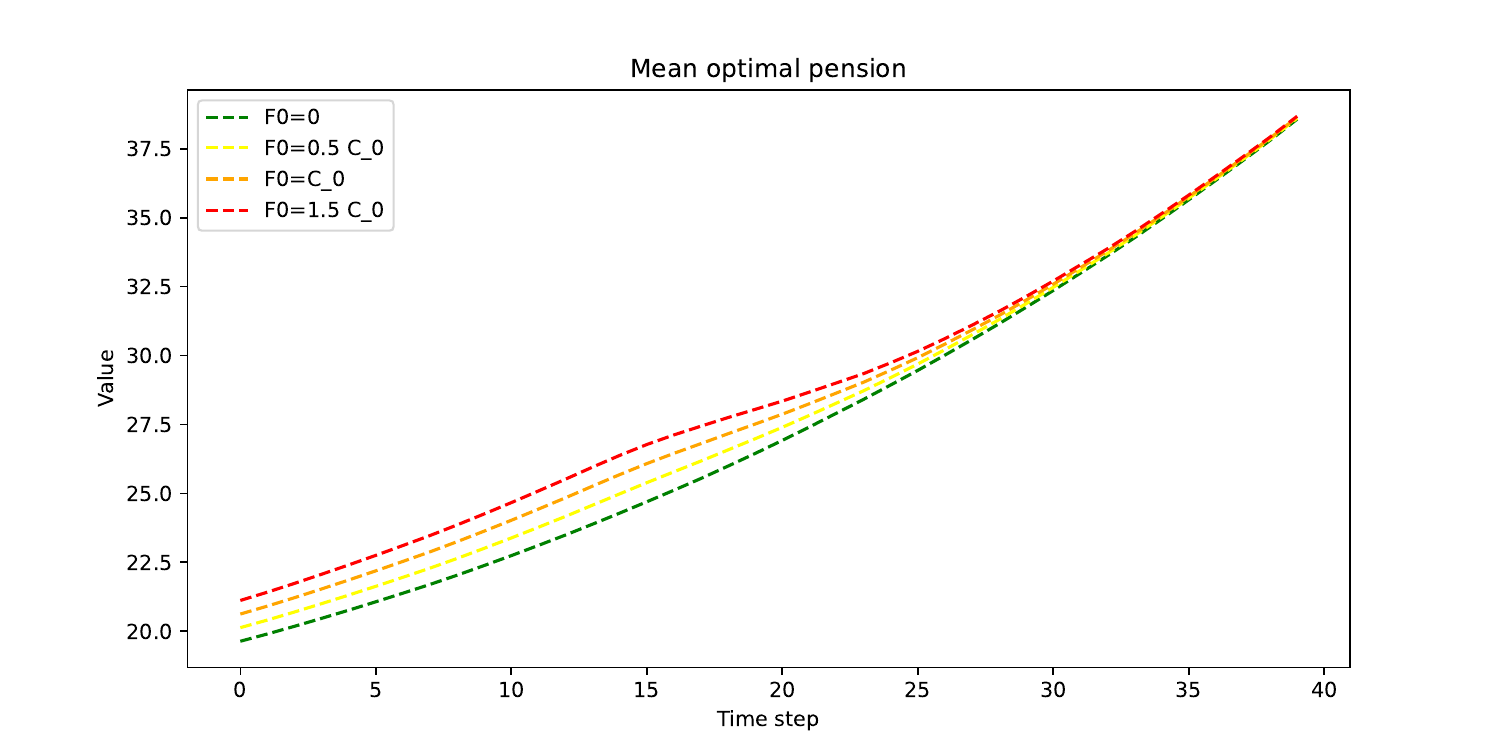}
\vspace{-0.2cm}\caption{Mean total pension $p^*$ versus $F_0$}\label{subfig:BB_sens_F0_three_wages_p*_0.03}
\end{subfigure}\hfill
\begin{subfigure}[t]{0.49\textwidth}
\centering
\includegraphics[width=\textwidth,height=0.7\textheight,keepaspectratio]{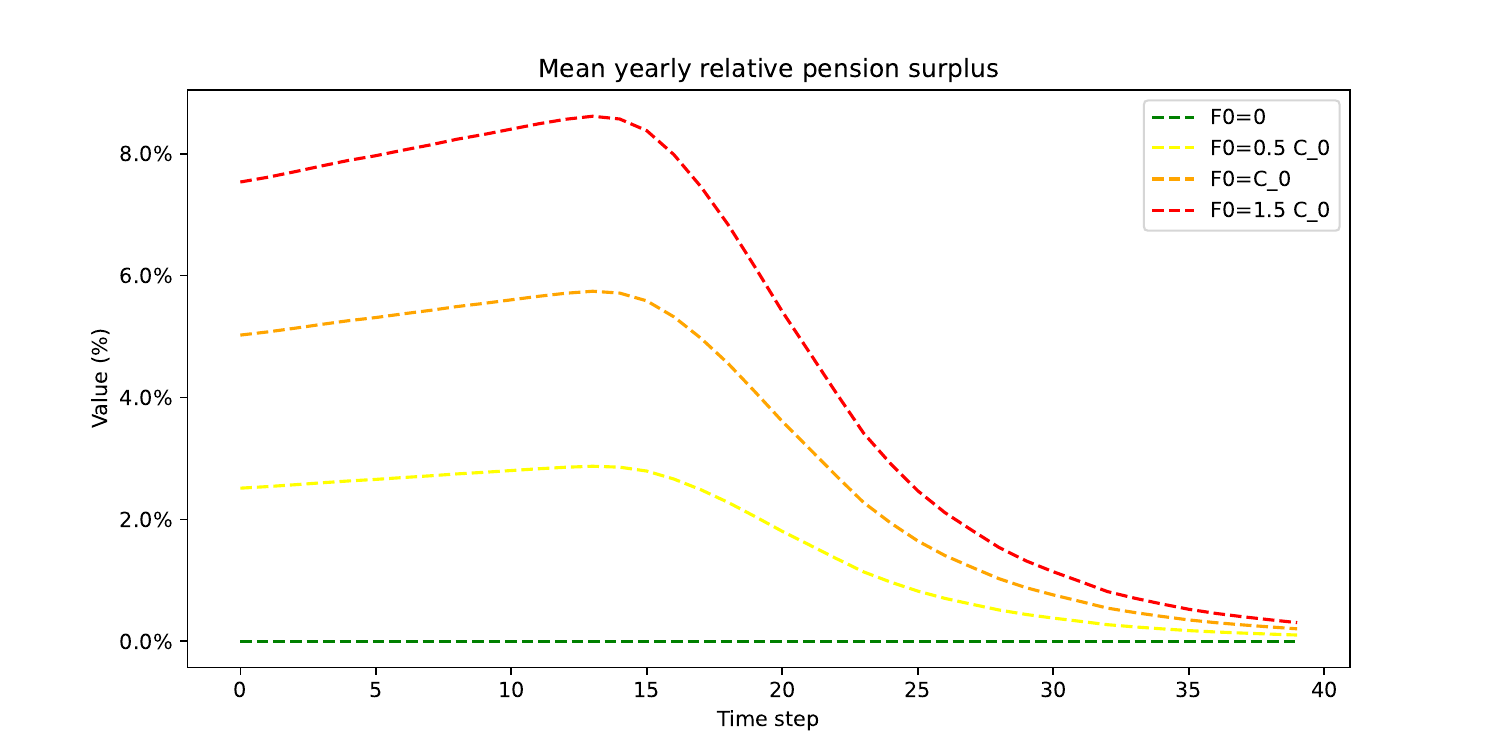}
\vspace{-0.2cm}\caption{Relative surplus $\rho_t=\frac{p^*_t-p^{\min}_t}{p^{\min}_t}$}\label{subfig:BB_sens_F0_three_wages_rho_0.03}
\end{subfigure}
\vspace{0.3cm}
\begin{subfigure}[t]{0.49\textwidth}
\centering
\includegraphics[width=\textwidth,height=0.7\textheight,keepaspectratio]{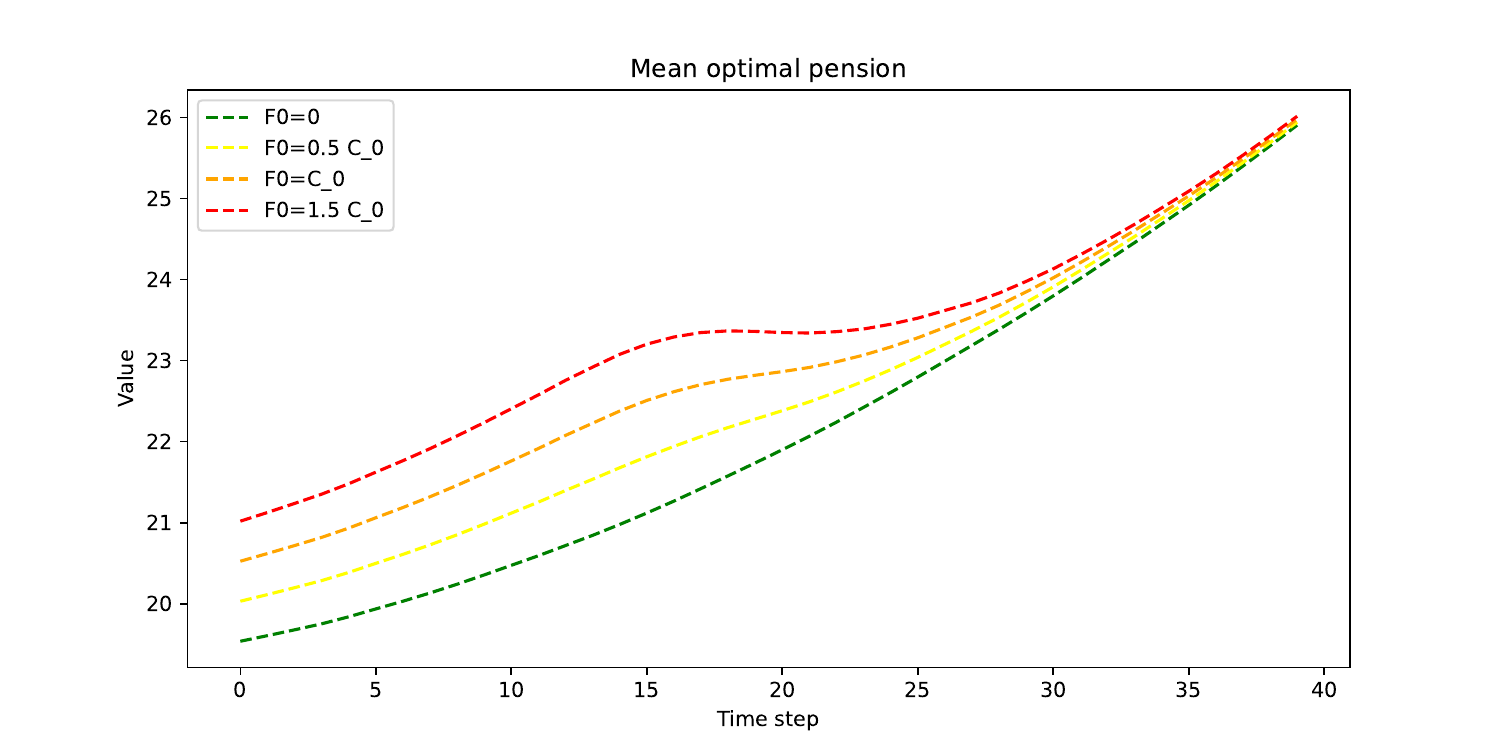}
\vspace{-0.2cm}\caption{Mean total pension $p^*$ versus $F_0$}\label{subfig:BB_sens_F0_three_wages_p*_0.02}
\end{subfigure}\hfill
\begin{subfigure}[t]{0.49\textwidth}
\centering
\includegraphics[width=\textwidth,height=0.7\textheight,keepaspectratio]{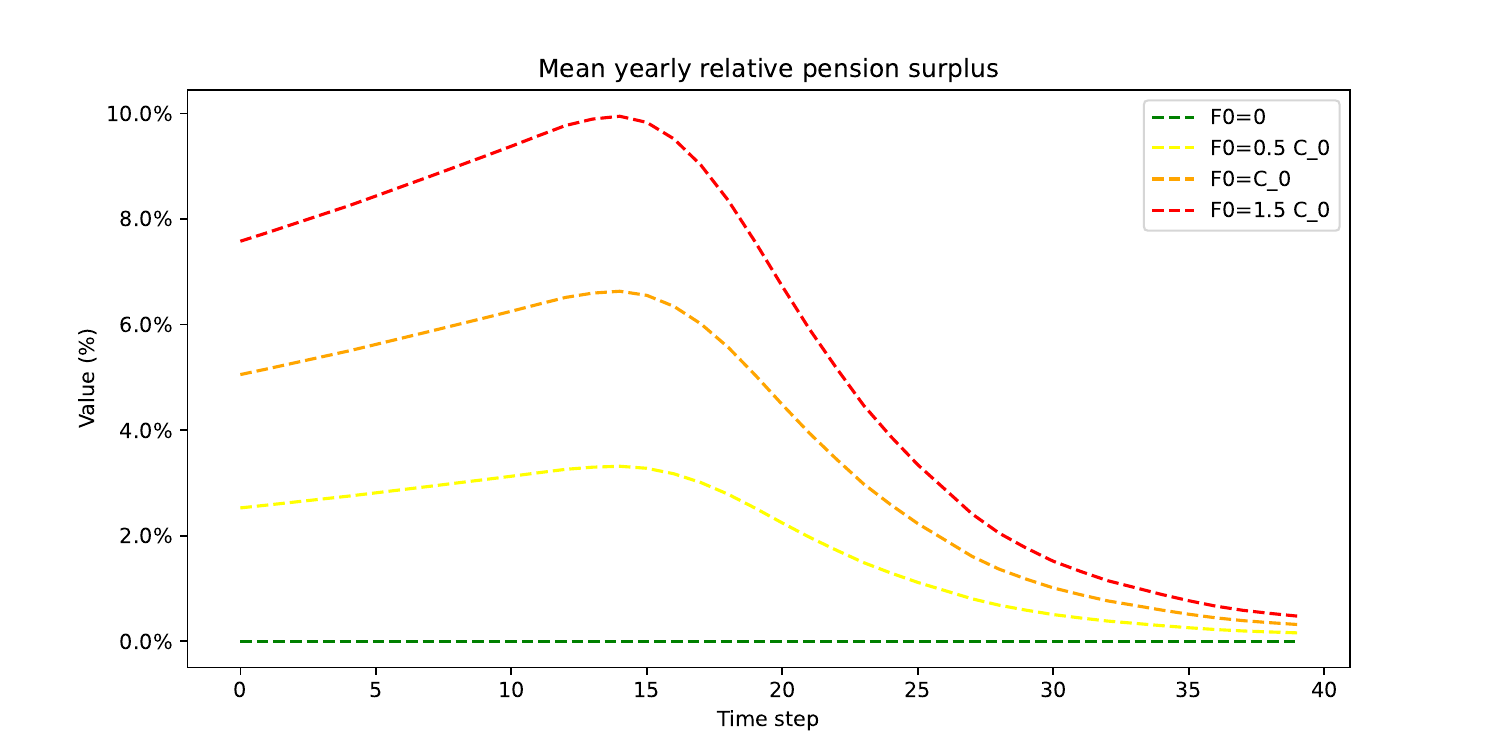}
\vspace{-0.2cm}\caption{Relative surplus $\rho_t=\frac{p^*_t-p^{\min}_t}{p^{\min}_t}$}\label{subfig:BB_sens_F0_three_wages_rho_0.02}
\end{subfigure}
\vspace{0.3cm}
\begin{subfigure}[t]{0.49\textwidth}
\centering
\includegraphics[width=\textwidth,height=0.7\textheight,keepaspectratio]{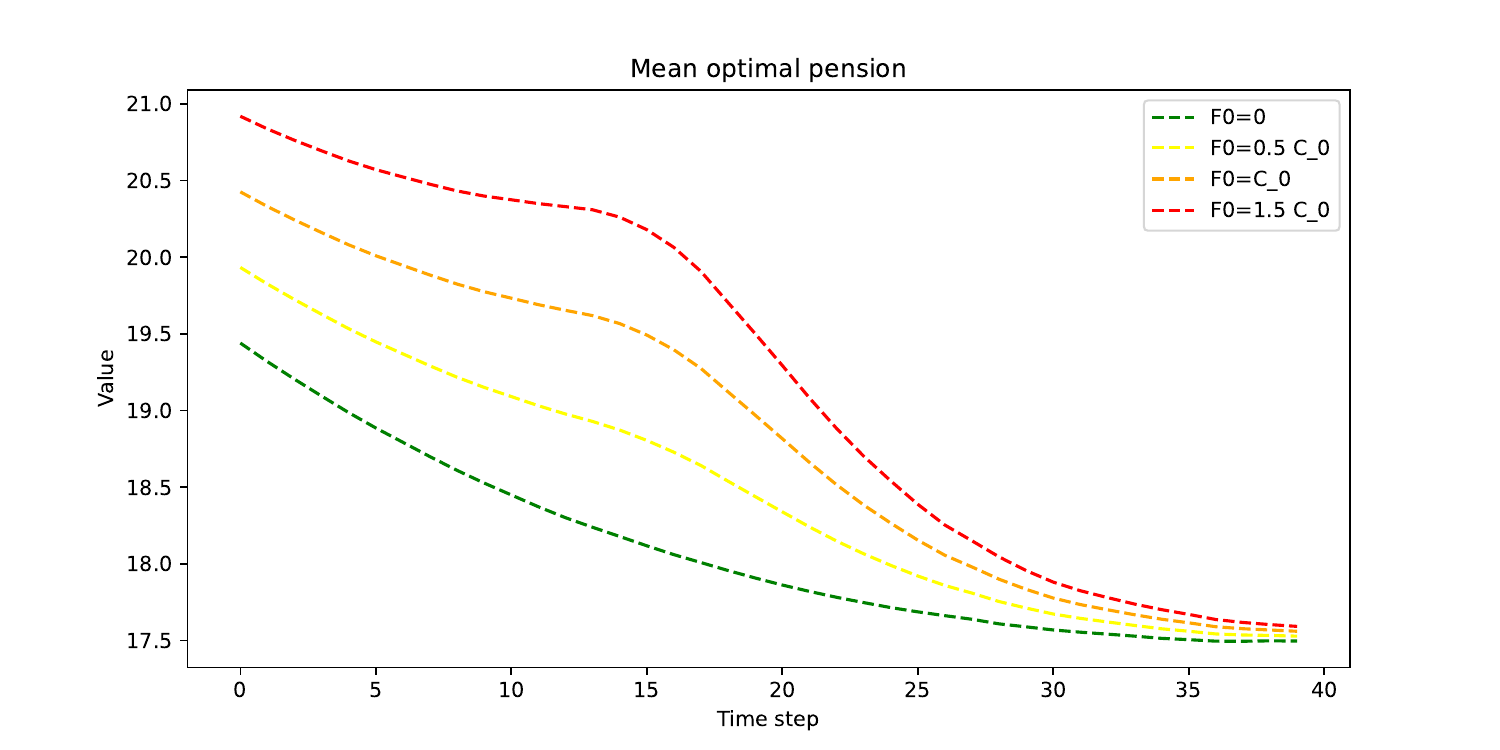}
\vspace{-0.2cm}\caption{Mean total pension $p^*$ versus $F_0$}\label{subfig:BB_sens_F0_three_wages_p*_0.01}
\end{subfigure}\hfill
\begin{subfigure}[t]{0.49\textwidth}
\centering
\includegraphics[width=\textwidth,height=0.7\textheight,keepaspectratio]{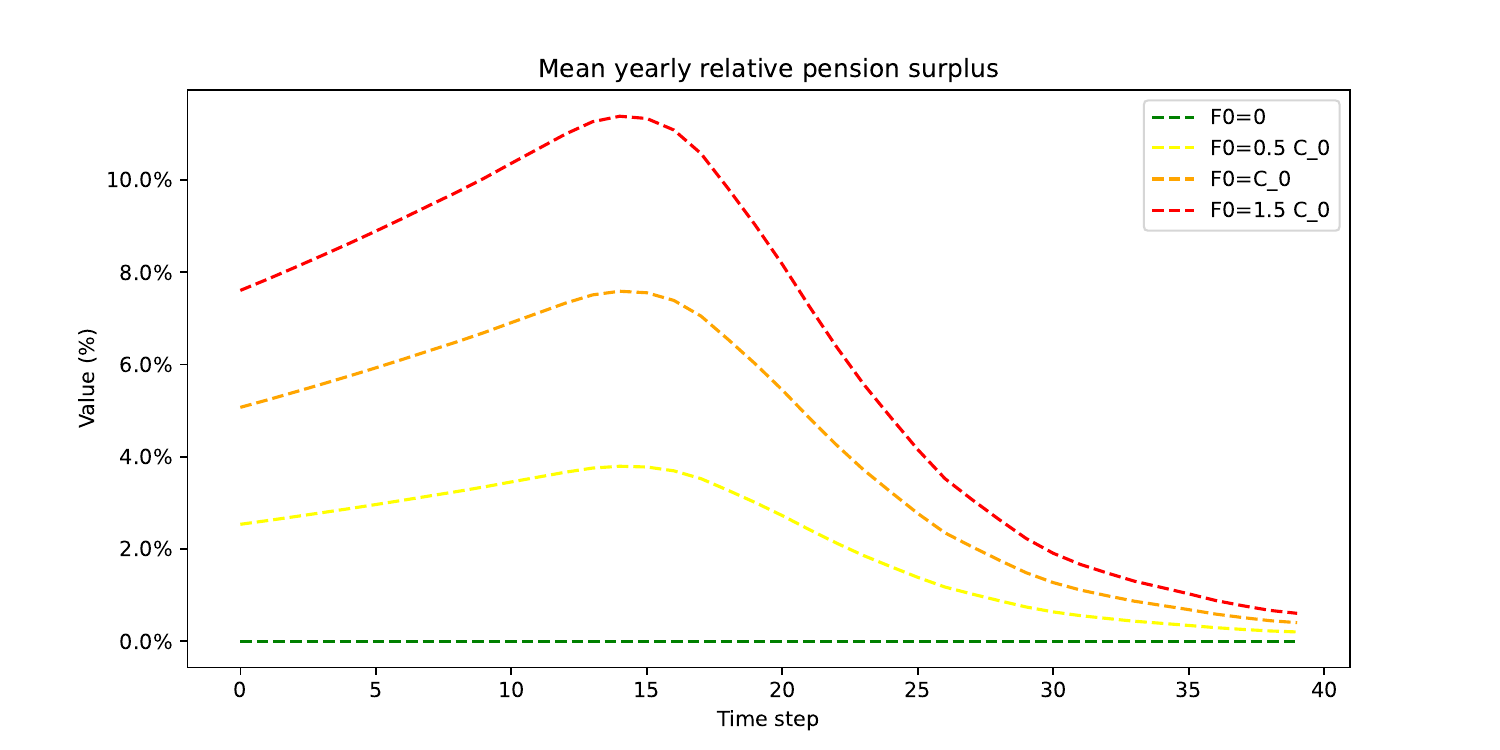}
\vspace{-0.2cm}\caption{Relative surplus $\rho_t=\frac{p^*_t-p^{\min}_t}{p^{\min}_t}$}\label{subfig:BB_sens_F0_three_wages_rho_0.01}
\end{subfigure}
\vspace{0.3cm}
\captionsetup{font=footnotesize}
\caption{Adequacy for $F_0\in \{0,1/2,1,1.5\}\cdot C_0$. $\lambda=0.03$ (top), $\lambda=0.02$ (mid, base case) and $\lambda=0.01$ (bottom)}\label{fig:BB_sens_F0_three_wages}
\end{figure}

\clearpage 
\newpage

\setcounter{table}{0}
\setcounter{figure}{0}
\setcounter{equation}{0}

\section{Sensitivity to Risk Aversion $\theta$}\label{app:theta}

\begin{figure}[h!]
\centering
	\setfolder{fig_26feb_delta_base_sens_theta}
\begin{subfigure}[t]{0.49\textwidth}
\centering
\includegraphics[width=\textwidth,height=0.65\textheight,keepaspectratio]{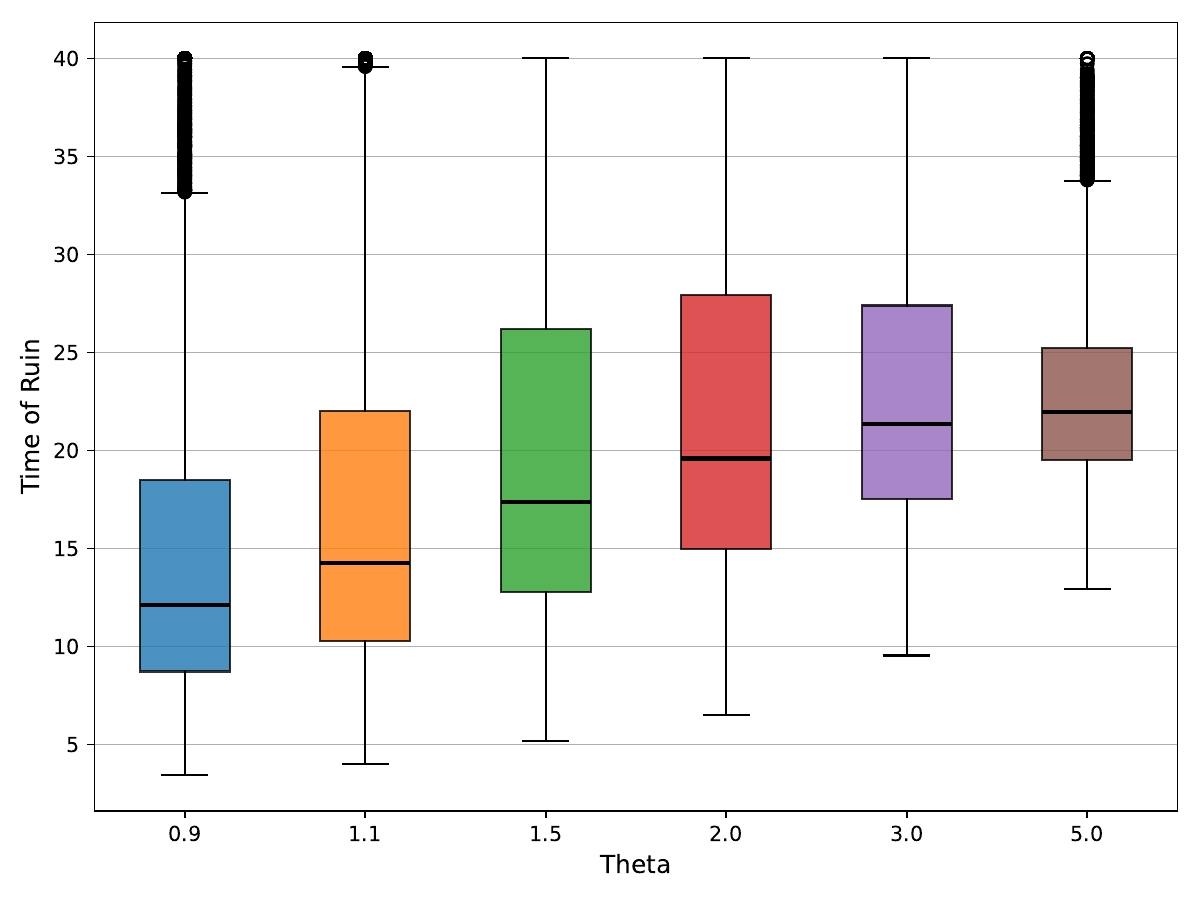}
\vspace{-0.3cm}\caption{Boxplot of depletion time of the fund $\tau$}\label{subfig:bb_delta_base_sens_theta_tau_boxplot}
\end{subfigure}
\begin{subfigure}[t]{0.49\textwidth}
\centering
\includegraphics[width=\textwidth,height=0.65\textheight,keepaspectratio]{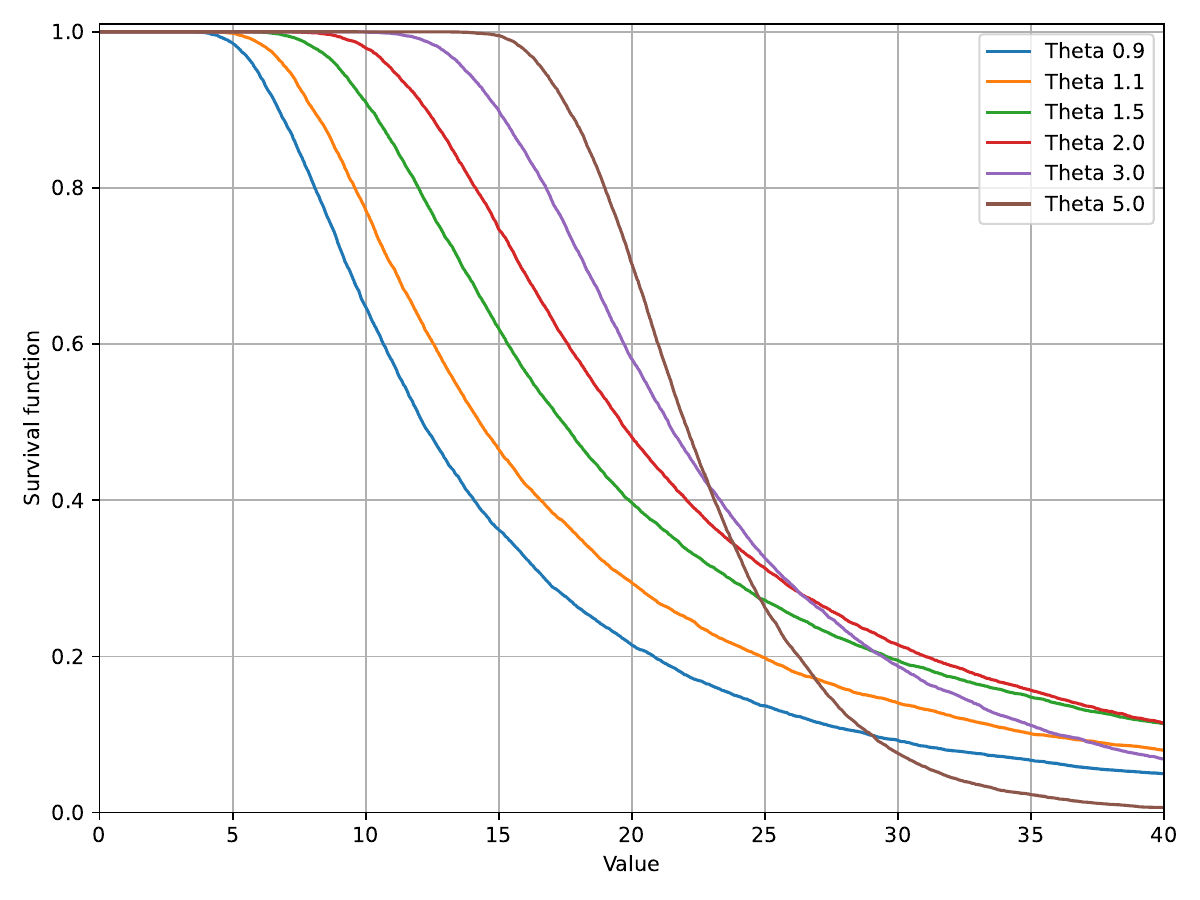}
\vspace{-0.3cm}\caption{Empirical cdf depletion time $\tau$}\label{subfig:bb_delta_base_sens_theta_tau_empirical_cdf}
\end{subfigure}
\captionsetup{font=footnotesize}
\caption{Distribution of depletion time $\tau$ for various $\theta$}\label{fig:bb_delta_base_sens_theta_tau}
\end{figure}
\begin{figure}[h!]
\centering
\setfolder{fig_26feb_delta_base_sens_theta}
\begin{subfigure}[t]{0.49\textwidth}
\centering
\includegraphics[width=\textwidth,height=0.65\textheight,keepaspectratio]{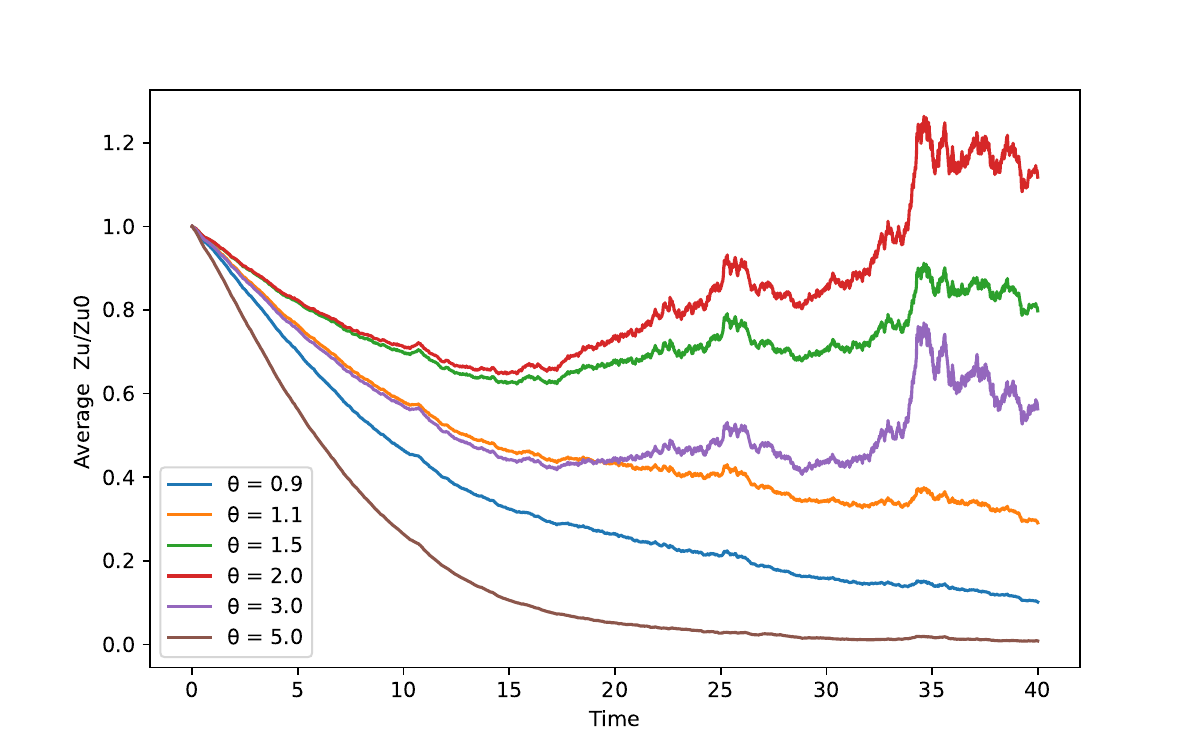}
\vspace{-0.2cm}\caption{Average buffer fund utility weight $Z^u_t$}\label{subfig:bb_delta_base_sens_theta_Zu}
\end{subfigure}
\begin{subfigure}[t]{0.49\textwidth}
\centering
\includegraphics[width=\textwidth,height=0.65\textheight,keepaspectratio]{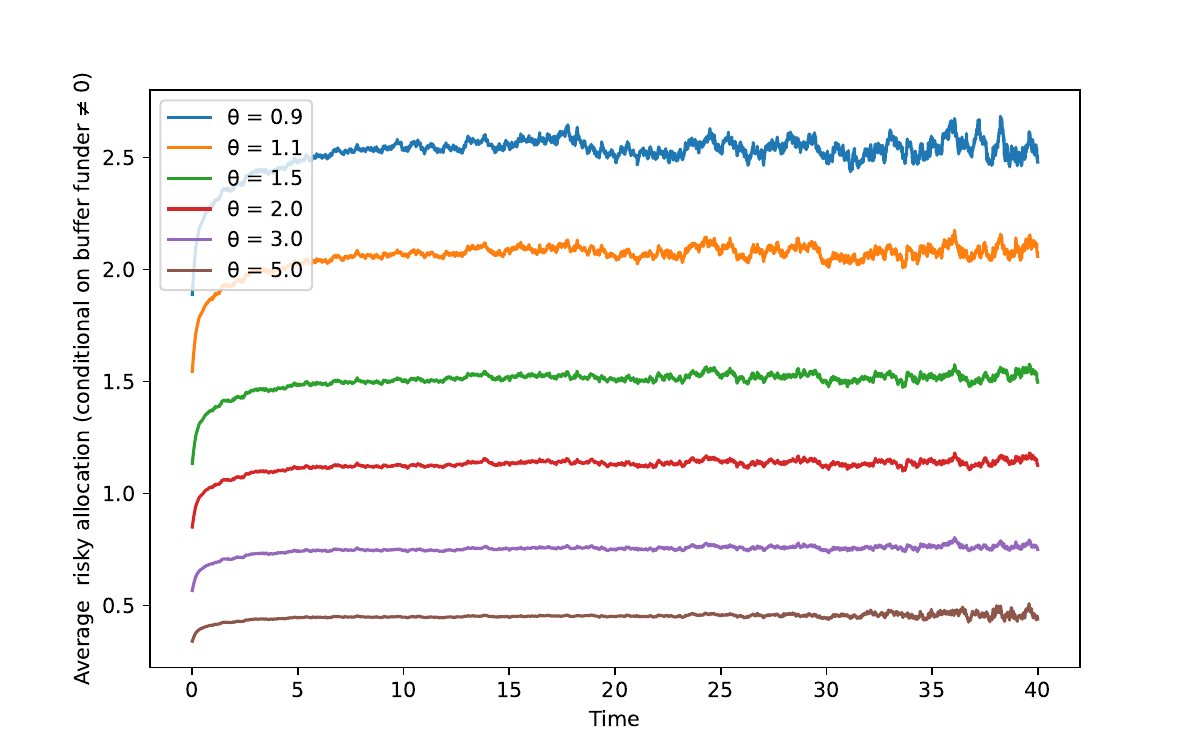}
\vspace{-0.2cm}\caption{Proportion risky investment $\frac{\left(\frac{\pi^*_t}{\sqrt{\nu_t}}\right)}{F^*_t}|F^*>0$}\label{subfig:bb_delta_base_sens_theta_strategy}
\end{subfigure}
\vspace{0.3cm}
\begin{subfigure}[t]{0.49\textwidth}
\centering
\includegraphics[width=\textwidth,height=0.65\textheight,keepaspectratio]{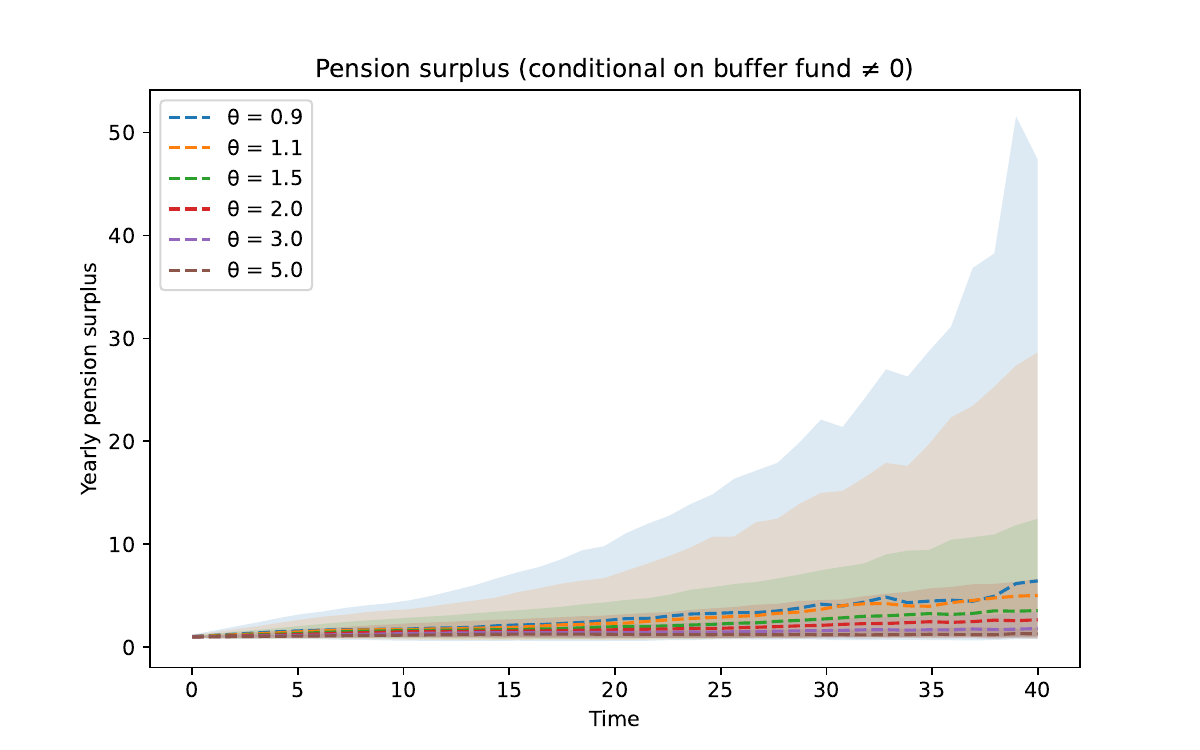}
\vspace{-0.2cm}\caption{Pension surplus $p^* - p^{\min}|F^*_t>0$}\label{subfig:bb_delta_base_sens_theta_surplus}
\end{subfigure}
\begin{subfigure}[t]{0.49\textwidth}
\centering
\includegraphics[width=\textwidth,height=0.65\textheight,keepaspectratio]{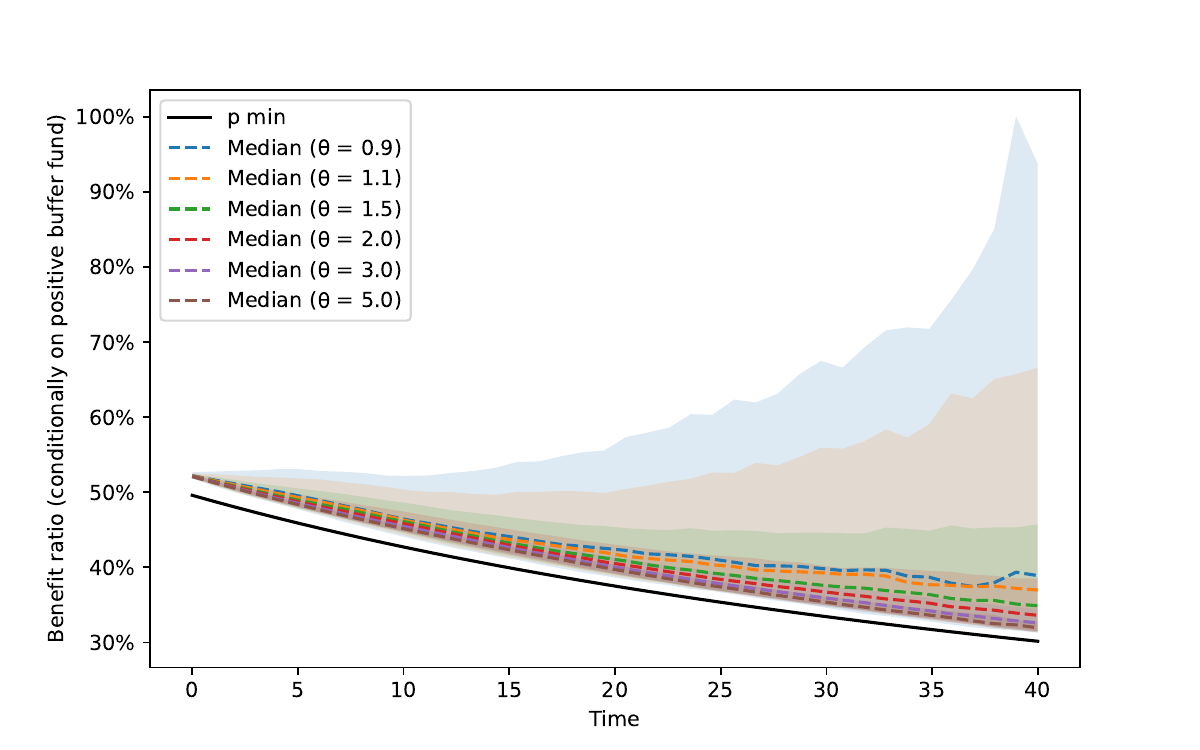}
\vspace{-0.2cm}\caption{Benefit ratio $\text{BR}_t=\frac{p^*_t}{\mathfrak{e}_t}|F^*_t>0$}\label{subfig:bb_delta_base_sens_theta_benefit_ratio}
\end{subfigure}
\captionsetup{font=footnotesize}
\caption{Comparison for various $\theta$ values – Monte Carlo}\label{fig:bb_delta_base_sens_theta}
\end{figure}

Figures~\ref{fig:bb_delta_base_sens_theta_tau} and~\ref{fig:bb_delta_base_sens_theta} present sensitivity to the risk aversion parameter $\theta$ for our base case $Z^u_0=Z_0\left(\frac{F_0 - \mathfrak{K}_0}{p_0 - p^{\min}_0}\right)^{\theta}=Z_0(20\cdot N^r_0)^\theta$. Different values of $\theta$ naturally yield different levels of $Z^u_0$ and $Z^u_t$. Rather than comparing outcomes for a fixed $Z^u_0$, which would be neither scalable nor representative of the same policy objectives, we fix an initial pension target of 5\% above $p^{\min}_0$ for all values of $\theta$. Figure~\ref{subfig:bb_delta_base_sens_theta_Zu} confirms that all cases share the same initial objective in terms of $Z^u_t/Z^u_0$; however, through adjusted portfolio management and pension payments, the subsequent evolution of $Z^u_t$, and consequently $\tau$, is heavily dependent on $\theta$.

Figure~\ref{fig:bb_delta_base_sens_theta_tau} reveals a close relationship between $\theta$ and $\tau$. The survival function shows that higher $\theta$ narrows the distribution of ruin times. Lower $\theta$ yields higher pensions conditional on fund survival, but ruin is more likely in that case, potentially reducing the higher pension income. Higher $\theta$ reduces pensions through multiple channels, including reduced allocation to risky assets and consequently lower financial returns on the buffer fund.

Figure~\ref{subfig:bb_delta_base_sens_theta_strategy} shows that for $\theta<1.5$ the risky allocations are on average systematically above 1, implying short-selling, whereas $\theta>1.5$ yields portfolio allocations that neither short-sell nor borrow on average, with $\theta=4$ providing roughly a 50\% stock and risk-free rate allocation, as shown in Figure~\ref{subfig:SS_v_BB_MC_proportion}. Despite identical initial pension surplus objectives, the actual surplus paid (Figure~\ref{subfig:bb_delta_base_sens_theta_surplus}) decreases with $\theta$ as discussed after Theorem \ref{ThOpt}, driven by lower fund variability and reduced leverage through lower risky allocations. Conditional on $F^*>0$, however, meaningful surpluses are provided that mitigate the baby boom relative decrease in pensions (Figure~\ref{subfig:bb_delta_base_sens_theta_benefit_ratio}). As expected, very low $\theta$ generates extreme variability in the benefit ratio due to excessive leverage. For our baseline range $\theta>1.5$, the upper end of the distribution suggests a benefit ratio close to 50\% is achievable in favorable scenarios, realistically raising the terminal benefit ratio from approximately 30\% to 40\%.

\end{document}